\newcommand{\IN}{\mathbb{N}}
\newcommand{\IR}{\mathbb{R}}
\newcommand{\automaton}{\mathcal{A}}
\newcommand{\game}{\mathcal{G}}
\newcommand{\player}{\mathcal{P}}
\newcommand{\paths}{\mathsf{Paths}}
\newcommand{\plays}{\mathsf{Plays}}
\newcommand{\last}{\mathsf{last}}
\newcommand{\ex}{\mathsf{Ex}}
\newcommand{\pr}{\mathsf{Pr}}
\newcommand{\cobuchi}{\mathsf{coB\ddot{u}chi}}
\newcommand{\buchi}{\mathsf{B\ddot{u}chi}}
\newcommand{\safe}{\mathsf{Safe}}
\newcommand{\reach}{\mathsf{Reach}}
\newcommand{\wc}{\mathsf{WC}}
\newcommand{\outcome}{\mathsf{Outcome}}
\newcommand{\true}{\mathsf{true}}
\newcommand{\delay}{\mathsf{delay}}
\newcommand{\bad}{\mathsf{bad}}
\newcommand{\badset}{\mathsf{Bad}}
\newcommand{\init}{\mathsf{init}}
\title{Time Flies When Looking out of the Window: Timed Games with Window Parity Objectives}
\titlerunning{Timed Games with Window Parity Objectives}
\author{James C.~A.~Main}{UMONS -- Université de Mons, Belgium}{}{}{}
\author{Mickael Randour}{F.R.S.-FNRS \& UMONS -- Université de Mons, Belgium}{}{}{F.R.S.-FNRS Research Associate.}
\author{Jeremy Sproston}{Universit\`{a} degli Studi di Torino, Italy}{}{}{}
\authorrunning{J.~C.~A.~Main, M.~Randour and J.~Sproston}
\keywords{window objectives, timed automata, timed games, parity games}
\begin{document}

\maketitle

\begin{abstract}
  The \textit{window mechanism} was introduced by Chatterjee et al.~to reinforce
  mean-payoff and total-payoff objectives with time bounds in two-player
  turn-based games on graphs~\cite{Chatterjee0RR15}. It has since proved useful
  in a variety of settings, including parity objectives
  in games~\cite{BruyereHR16} and both mean-payoff and parity objectives
  in Markov decision processes~\cite{DBLP:journals/lmcs/BrihayeDOR20}. 
  
  We study \textit{window parity} objectives in timed automata and
  timed games: given a bound on the window size, a path satisfies such an
  objective if, in all states along the path, we see a sufficiently small
  window in which the smallest priority is even.
  We show that checking that all time-divergent paths of
  a timed automaton satisfy such a window parity objective can be done in
  polynomial space, and that the corresponding timed games can be solved in exponential time. This matches the complexity class of timed parity games, while adding the ability to reason about time bounds. We also consider multi-dimensional objectives and show that
  the complexity class does not increase. To the best of our knowledge, this is the first study of the window mechanism in a real-time setting.
\end{abstract}

\section{Introduction}
\subparagraph{Timed automata and games.} \textit{Timed automata}~\cite{AlurD94}
are extensions of finite automata with real-valued variables called \textit{clocks}.
Clocks increase at the same rate and measure the elapse of time between actions.
Transitions are constrained by the values of clocks, and clocks can be reset on transitions.
Timed automata are used to model real-time systems~\cite{DBLP:books/daglib/0020348}. Not all paths of timed
automata are meaningful; infinite paths that take a finite amount of time,
called \textit{time-convergent} paths, are often disregarded when checking properties of
timed automata. Timed automata induce uncountable transition
systems. However, many properties can be checked using the region abstraction,
which is a finite quotient of the transition system.

Timed automaton games~\cite{MalerPS95}, or simply \textit{timed games}, are games played
on timed automata: one player represents the system and the other its
environment. Players play an infinite amount of rounds: for each round,
both players simultaneously present a delay and an action, and the play proceeds according to the fastest move (note that we use paths for automata and plays for games to refer to sequences of consecutive states and transitions). When defining winning conditions for
players, convergent plays must be taken in account; we must not allow a player
to achieve its objective by forcing convergence but cannot either require a
player to force divergence (as it also depends on its opponent).
Given an objective as a set of plays, following~\cite{AlfaroFHMS03}, we
declare a play winning for a player if either it is \textit{time-divergent and
belongs to the objective}, or it is \textit{time-convergent and the player is not
responsible for convergence}.

\subparagraph{Parity conditions.} The
class of $\omega$-regular specifications is widely used (e.g., it can express liveness and safety), and parity conditions are a canonical
way of representing them.
In (timed) parity games, locations are labeled with a non-negative integer
priority and the parity objective is to ensure the smallest priority occurring
infinitely often along the path/play is even.
Timed games with $\omega$-regular objectives given as parity automata are
shown to be solvable in~\cite{AlfaroFHMS03}. Furthermore, a reduction from timed parity games to
classical turn-based parity games on a graph is established in~\cite{ChatterjeeHP11}.

\subparagraph{Real-timed windows.} The parity objective
can be reformulated: for all odd priorities seen infinitely often,
a smaller even priority must be seen infinitely often. One can see
the odd priority as a request and the even one as a response.
The parity objective does not specify any \textit{timing constraints}
between requests and responses. In applications however, this may not be
sufficient: for example, a server should respond to requests in a timely
manner.

We revisit the window mechanism introduced by Chatterjee
et al.~for mean-payoff and total-payoff games~\cite{Chatterjee0RR15}
and later applied to parity games~\cite{BruyereHR16} and to parity and
mean-payoff objectives in Markov decision processes~\cite{DBLP:journals/lmcs/BrihayeDOR20}: we provide \textit{the first} (to the best of our knowledge) \textit{study of window objectives in the real-time setting}. More precisely, we lift the (resp.~direct) fixed window parity objective of~\cite{BruyereHR16} to its real-time counterpart, the \emph{(resp.~direct) timed window parity objective}, and study it in timed automata and games.

Intuitively, given a non-negative integer bound $\lambda$ on the window
size, the \textit{direct} timed window parity objective requires that
\textit{at all}
times along a path/play, we see a window of size at most $\lambda$ such
that the smallest priority in this window is even. While time was counted as
steps in prior works (all in a discrete setting), \textit{we naturally measure window size using delays} between configurations
in real-time models. The \textit{(non-direct)} timed window parity objective is simply a prefix-independent version of the direct one, thus more closely matching the spirit of classical parity: it asks that some suffix satisfies the direct objective.

\subparagraph{Contributions.} We extend window parity objectives to a dense-time
setting, and study both \textit{verification} of timed automata and
\textit{realizability} in timed games. We consider
adaptations of the \emph{fixed window parity objectives} of~\cite{BruyereHR16},
where the window size is given as a parameter. We establish that (a) verifying that all time-divergent paths of a timed
automaton satisfy a timed window parity specification is \textsf{PSPACE}-complete; and that (b) checking the existence of a winning strategy for a window
parity objective in timed games is \textsf{EXPTIME}-complete. These results (Theorem~\ref{theorem:complexity}) hold for both the direct and prefix-independent variants, and they extend to multi-dimensional objectives, i.e., conjunctions of window parity.

All algorithms are based on a reduction to an \textit{expanded timed automaton} (Definition~\ref{definition:singledim:extended_automaton}).
We establish that, similarly to the discrete case, it suffices to keep track of
one window at a time (or one per objective in the multi-dimensional case) instead of all currently open windows, thanks to the so-called \textit{inductive property of windows} (Lemma~\ref{lemma:inductive_prop}).
A window can be summarized using its smallest priority and
its current size: we encode the priorities in a window by extending
locations with priorities and using an additional clock to measure the window's size.
The (resp.~direct) timed window parity objective translates to a co-Büchi (resp.~safety)
objective on the expanded automaton. Locations to avoid for the co-Büchi
(resp.~safety) objective indicate
a window exceeding the supplied bound without the smallest priority of the window being
even --- a \textit{bad window}.
To check that all time-divergent paths of the expanded automaton satisfy the
safety (resp.~co-Büchi) objective, we check for the existence of a
time-divergent path visiting (resp.~infinitely often)
an unsafe location using the \textsf{PSPACE} algorithm of \cite{AlurD94}.
To solve the similarly-constructed
expanded game, we use the \textsf{EXPTIME} algorithm of~\cite{AlfaroFHMS03}.

Lower bounds (Lemma~\ref{lemma:complexity:lowerbound}) are established by encoding safety objectives on timed automata as (resp.~direct) timed window parity
objectives.
Checking safety properties over time-divergent paths in timed automata is
\textsf{PSPACE}-complete~\cite{AlurD94} and solving safety timed games
is \textsf{EXPTIME}-complete~\cite{HenzingerK99}.
\subparagraph{Comparison.} Window variants constitute \textit{conservative approximations} of classical objectives  (e.g.,~\cite{Chatterjee0RR15,BruyereHR16,DBLP:journals/lmcs/BrihayeDOR20}), strengthening them by enforcing timing constraints. Complexity-wise, the situation is varied. In one-dimension turn-based games on graphs, window variants~\cite{Chatterjee0RR15,BruyereHR16} provide \textit{polynomial-time alternatives} to the classical objectives, bypassing long-standing complexity barriers. However, in multi-dimension games, their complexity becomes worse than for the original objectives: in particular, fixed window parity games are \textsf{EXPTIME}-complete for multiple dimensions~\cite{BruyereHR16}.
We show that timed games with multi-dimensional window parity objectives are in
the same complexity class as untimed ones, i.e., dense time comes for free.

For classical parity objectives, timed games can be solved in exponential
time~\cite{ChatterjeeHP11, AlfaroFHMS03}. The approach of~\cite{ChatterjeeHP11} is as follows: from a timed parity game, one builds a corresponding turn-based parity game on a graph, the construction being polynomial
in the number of priorities and the size of the region abstraction.
We recall that despite recent progress on quasi-polynomial-time algorithms (starting with~\cite{CaludeJKL017}), no polynomial-time algorithm is known for parity games; the blow-up comes from the number of priorities. Overall, \textit{the two sources of blow-up} --- region
abstraction and number of priorities --- \textit{combine in a single-exponential solution} for
timed parity games. 
We establish that (multi-dimensional) window parity games can be solved in time polynomial in the size of the region
abstraction, the number of priorities and the window
size, and exponential in the number of dimensions. Thus \textit{even for conjunctions of objectives, we match the complexity class of single parity objectives of timed games, while avoiding the blow-up related to the number of priorities and enforcing time bounds between odd priorities and smaller even priorities via the window mechanism}.

\subparagraph{Outline.} This work is organized as follows. Section~\ref{section:prelim} summarizes all prerequisite notions and vocabulary.
In Section~\ref{section:windows}, we introduce the timed window parity objective,
compare it to the classical parity objective, and establish some useful
properties.
Our reduction from window parity objectives to safety or co-Büchi objectives is presented in Section~\ref{section:reduction}: the construction of the expanded timed automaton/game used in the reduction is provided in Section~\ref{section:reduction:construction}, Section~\ref{section:reduction:mappings} develops mappings between plays of a game and plays of its expansion such that a time-divergent play in one game satisfies the objective if and only if its image satisfies the objective of the other game and Section~\ref{section:reduction:translations} shows how these mappings can be used to transfer winning strategies between a
timed game and its expansion.
The reduction is extended to the case of multiple
window parity objectives in Section~\ref{section:multi}.
Finally, Section~\ref{section:complexity} presents the complexity.

This paper is a full version of a preceding conference version~\cite{MRS21}. This version presents in full details the contributions of the conference version, with detailed proofs.

\subparagraph{Related work.} In addition to the aforementioned foundational works, the window mechanism has seen diverse extensions and applications: e.g.,~\cite{DBLP:conf/csl/BaierKKW14,DBLP:conf/rp/Baier15,DBLP:conf/concur/BrazdilFKN16,DBLP:conf/concur/BruyereHR16,DBLP:journals/acta/HunterPR18,DBLP:conf/fsttcs/0001PR18,DBLP:conf/fsttcs/BordaisGR19}. Window parity games are strongly linked to the concept of \textit{finitary $\omega$-regular games}: see, e.g.,~\cite{DBLP:journals/tocl/ChatterjeeHH09}, or~\cite{BruyereHR16} for a complete list of references. The window mechanism can be used to ensure a certain form of (local) guarantee over paths: different techniques have been considered in stochastic models, notably variance-based~\cite{DBLP:journals/jcss/BrazdilCFK17} or worst-case-based~\cite{DBLP:journals/iandc/BruyereFRR17,DBLP:conf/icalp/BerthonRR17} methods. Finally, let us recall that game models provide a useful framework for controller synthesis~\cite{rECCS}, and that timed automata have been extended in a number of directions (see, e.g.,~\cite{DBLP:journals/corr/abs-2001-04347} and references therein): applications of the window mechanism in such richer models could be of interest.

\section{Preliminaries}\label{section:prelim}
\subparagraph*{Timed automata.}
A clock variable, or \textit{clock}, is a real-valued variable.
Let $C$ be a set of clocks. A \emph{clock constraint} over
$C$ is a conjunction of formulae of the form $x \sim c$ with
$x\in C$, $c\in \IN$, and
$\sim\in \{\leq,\geq, >, <\}$.  We write $x=c$ as shorthand for the clock
constraint $x\geq c\land x\leq c$. Let $\Phi(C)$ denote the set
of clock constraints over $C$.

Let $\IR_{\geq 0}$ denote the set of non-negative real numbers.
We refer to functions $\nu\in \IR_{\geq 0}^C$ as
\emph{clock valuations} over $C$. A clock valuation $\nu$ over a set
$C$ of clocks satisfies a clock constraint of the form  $x\sim c$ if
$\nu(x) \sim c$ and a conjunction $g\land h$ if it satisfies both
$g$ and $h$. Given a clock constraint $g$ and clock valuation $\nu$, we
write $\nu\models g$ if $\nu$ satisfies $g$.

For a clock valuation $\nu$ and $d\geq 0$, we let
$\nu +d$ be the valuation defined by $(\nu + d)(x) = \nu(x) +d $ for
all $x\in C$. For any valuation $\nu$ and $D\subseteq C$,
we define $\mathsf{reset}_D(\nu)$ to be the valuation agreeing with
$\nu$ for clocks in $C\setminus D$ and that assigns $0$ to clocks in $D$.
We denote by $\mathbf{0}^C$ the zero valuation, assigning 0 to all clocks in
$C$.

A \emph{timed automaton} (TA) is a tuple $(L, \ell_\init, C, \Sigma, I, E)$ where
$L$ is a finite set of \emph{locations}, $\ell_\init\in L$ is an initial
location, $C$ a finite set of \emph{clocks} containing a special
clock $\gamma$ which keeps track of the total time elapsed,
$\Sigma$ a finite set of actions,
$I\colon L\to \Phi(C)$ an \emph{invariant} assignment function and
$E\subseteq L\times \Phi(C)\times \Sigma\times 2^{C\setminus\{\gamma\}}\times L$
an edge relation.
We only consider deterministic timed automata, i.e., we assume that
in any location $\ell$, there are no two outgoing edges
$(\ell, g_1, a, D_1, \ell_1)$ and $(\ell, g_2, a, D_2, \ell_2)$
sharing the same action
such that the conjunction $g_1\land g_2$ is satisfiable. For an edge
$(\ell, g, a, D, \ell')$, the clock constraint $g$ is called the
\textit{guard} of the edge.

A TA $\automaton = (L, \ell_\init, C, \Sigma, I, E)$
gives rise to an uncountable transition system
$\mathcal{T}(\automaton) = (S, s_\init, M, \to)$
with the state space $S = L\times \IR_{\geq 0}^C$,
the initial state
$s_\init= (\ell_\init, \mathbf{0}^C)$, set of actions $M = \IR_{\geq 0}\times
(\Sigma\cup\{\bot\})$ and the
transition relation $\to\subseteq S\times M\times S$ defined as follows:
for any action $a\in\Sigma$ and
delay $d\geq 0$, we have that $((\ell, \nu), (d, a), (\ell', \nu'))\in\to$ if
and only if there is some edge $(\ell, g, a, D, \ell')\in E$
such that $\nu + d\models g$, $\nu' = \mathsf{reset}_{D}(\nu +d)$,
$\nu +d\models I(\ell)$ and $\nu'\models I(\ell')$; for any delay $d\geq 0$,
$((\ell, \nu)(d, \bot),(\ell, \nu +d))\in\to$ if and only if $\nu + d\models I(\ell)$.
Let us note that the satisfaction set of clock constraints is convex: it
is described by a conjunction of inequalities. Whenever
$\nu\models I(\ell)$, the above conditions $\nu + d\models I(\ell)$
(the invariant holds after the delay)
are equivalent to requiring $\nu + d'\models I(\ell)$ for all $0\leq d'\leq d$
(the invariant holds at each intermediate time step).

A \textit{move} is any pair in $\IR_{\geq 0}\times (\Sigma\cup\{\bot\})$
(i.e., an action in the transition system).
For any move $m=(d, a)$ and states $s$, $s'\in S$, we write
$s\xrightarrow{m}s'$ or $s\xrightarrow{d, a}s'$ as shorthand for
$(s,m,s')\in\to$.
Moves of the form $(d, \bot)$ are called \textit{delay moves}.
We say a move $m$ is
enabled in a state $s$ if there is some $s'$ such that $s\xrightarrow{m}s'$.
There is at most one successor per move in a state, as we do not allow two
guards on edges labeled by the same action to be simultaneously satisfied.

A \textit{path} in a TA $\automaton$ is a finite or infinite
sequence
$s_0(d_0, a_0)s_1\ldots\in S(MS)^*\cup (SM)^\omega$ such that for all $j$, $s_j$ is a state of $\mathcal{T}(\automaton)$ and
for all $j > 0$, $s_{j-1}\xrightarrow{d_{j-1}, a_{j-1}}s_{j}$ is a
transition in $\mathcal{T}(\automaton)$. A path is \textit{initial}
if $s_0=s_{\init}$. For clarity, we write
$s_0\xrightarrow{d_0, a_0}s_1\xrightarrow{d_1, a_1}\cdots$ instead
of $s_0(d_0, a_0)s_1(d_1, a_1)\ldots$.

An infinite path
$\pi = (\ell_0, \nu_0)\xrightarrow{d_0, a_0}(\ell_1, \nu_1)\ldots$
is \textit{time-divergent} if the sequence $(\nu_j(\gamma))_{j\in\IN}$
is not bounded from above. A path which is not time-divergent is called
\textit{time-convergent}; time-convergent paths are traditionally
ignored in analysis
of timed automata \cite{DBLP:journals/iandc/AlurCD93, AlurD94} as they model unrealistic behavior.
This includes ignoring
\emph{Zeno paths}, which are time-convergent paths along which infinitely
many actions appear.
We write $\paths(\automaton)$ for the set of paths of $\automaton$.

\subparagraph*{Priorities.} A \textit{priority function} is a function
$p\colon L\to \{0, \ldots, d-1\}$ with $d\leq |L|$.  We use priority functions
to express parity objectives. A \textit{$k$-dimensional priority function}
is a function $p\colon L\to \{0, \ldots, d-1\}^k$ which assigns vectors
of priorities to locations.

\subparagraph*{Timed games.}
We consider two player games played on TAs. We refer to the players
as player 1 ($\player_1$) for the system and player 2 ($\player_2$) for
the environment.
We use the notion of timed automaton games of \cite{AlfaroFHMS03}.

A \textit{timed} (automaton) \textit{game} (TG) is a tuple
$\game = (\automaton,\Sigma_1, \Sigma_2)$ where
$\automaton= (L, \ell_\init, C, \Sigma, I, E)$ is a TA and
$(\Sigma_1, \Sigma_2)$ is a partition of $\Sigma$.
We refer to actions in $\Sigma_i$ as $\player_i$ actions for $i\in\{1, 2\}$.

Recall a move is a pair $(d, a)\in\IR_{\geq 0}\times (\Sigma\cup\{\bot\})$.
Let $S$ denote the set of states of $\mathcal{T}(\automaton)$.
In each state $s= (\ell, \nu)\in S$, the moves available
to $\player_1$ are the elements of the set $M_1(s)$ where
\[M_1(s) = \big\{ (d,a) \in \mathbb{R}_{\geq 0} \times (\Sigma_1 \cup \{ \bot \}) \mid \exists s', s \xrightarrow{d,a} s'  \big\}\]
contains moves with $\player_1$ actions and delay moves that are
enabled in $s$. The set $M_2(s)$ is defined
analogously with $\player_2$ actions. We write $M_1$ and $M_2$ for
the set of all moves of $\player_1$ and $\player_2$ respectively.

At each state $s$ along a play, both players simultaneously
select a move $(d^{(1)}, a^{(1)})\in M_1(s)$ and $(d^{(2)}, a^{(2)})\in M_2(s)$.
Intuitively, the fastest player gets to act and in case of
a tie, the move is chosen non-deterministically. This is
formalized by the
\emph{joint destination function} $\delta: S\times M_1\times M_2\to 2^S$,
defined by
\[\delta(s, (d^{(1)}, a^{(1)}), (d^{(2)}, a^{(2)})) = \begin{cases}
    \{s' \in S\mid s \xrightarrow{d^{(1)}, a^{(1)}} s'\} & \text{if } d^{(1)} < d^{(2)} \\
    \{s' \in S\mid s \xrightarrow{d^{(2)}, a^{(2)}} s'\} & \text{if } d^{(1)} > d^{(2)} \\
    \{s' \in S\mid s \xrightarrow{d^{(i)}, a^{(i)}} s', i=1, 2\} & \text{if } d^{(1)} = d^{(2)}.
  \end{cases}\]
For $m^{(1)}=(d^{(1)}, a^{(1)})\in M_1$ and
$m^{(2)}=(d^{(2)}, a^{(2)})\in M_2$, we write
$\delay(m^{(1)}, m^{(2)})=\min\{d^{(1)}, d^{(2)}\}$ to denote the delay occurring when
$\player_1$ and $\player_2$ play $m^{(1)}$ and $m^{(2)}$ respectively.

A play is defined similarly to a path: it is a finite or infinite sequence
of the form
$s_0(m_0^{(1)}, m_0^{(2)})s_1(m_1^{(1)}, m_1^{(2)})\ldots\in
S((M_1\times M_2)S)^*\cup (S(M_1\times M_2))^\omega $ where
for all indices $j$, $m_j^{(i)}\in M_i(s_j)$ for $i\in\{1, 2\}$, and for $j>0$,
$s_{j}\in \delta(s_{j-1}, m_{j-1}^{(1)}, m_{j-1}^{(2)})$. A play is \textit{initial} if
$s_0=s_{\init}$. For a finite play
$\pi = s_0\ldots s_n$, we set $\last(\pi) = s_n$. For an infinite play
$\pi = s_0\ldots$, we write $\pi_{|n} = s_0(m_0^{(0)}, m_0^{(1)})\ldots s_n$.
A play follows a path in the TA, but there need not be a unique
path compatible with a play:  if along a play, at the $n$th step, the moves
of both players share the same delay and target state, either move can label
the $n$th transition in a matching path.

Similarly to paths, an infinite play
$\pi=(\ell_0, \nu_0)(m_0^{(1)}, m_0^{(2)})\cdots$
is \textit{time-divergent} if and only if $(\nu_j(\gamma))_{j\in\IN}$
is not bounded from above. Otherwise, we say a play is \textit{time-convergent}.
We define the following sets:
$\plays(\game)$ for the set of plays of $\game$;
$\plays_\mathit{fin}(\game)$ for the set of finite plays of $\game$;
$\plays_\infty(\game)$ for the set of time-divergent plays of $\game$.
We also write $\plays(\game, s)$ to denote plays starting in state $s$ of
$\mathcal{T}(\automaton)$.

Note that our games are built on \textit{deterministic} TAs. From a modeling standpoint, this is not restrictive, as we can simulate a non-deterministic TA through the actions of $\player_2$.
\subparagraph*{Strategies.} A strategy for $\player_i$ is a function describing
which move a player should use based on a play history.
Formally, a strategy for $\player_i$ is a function
$\sigma_i\colon \plays_\mathit{fin}(\game)\to M_i$ such that for all
$\pi\in\plays_\mathit{fin}(\game)$, $\sigma_i(\pi)\in M_i(\last(\pi))$. This last
condition requires that each move given by a strategy
be enabled in the last state of a play.

A play $s_0(m_0^{(1)}, m_0^{(2)})s_1\ldots$ is said to be consistent with
a $\player_i$-strategy $\sigma_i$ if for all indices $j$,
$m_j^{(i)} = \sigma_i(\pi_{|j})$. Given a $\player_i$-strategy $\sigma_i$,
we define $\outcome_i(\sigma_i)$ (resp.~$\outcome_i(\sigma_i, s)$)
to be the set of plays (resp.~set of plays starting in state $s$)
consistent with $\sigma_i$.

A $\player_i$-strategy $\sigma_i$ is \textit{move-independent} if the move it
suggests depends only on the sequence of states seen in the play.
Formally, $\sigma_i$ is move-independent if for all finite plays
$\pi = s_0(m_0^{(1)}, m_0^{(2)})s_1\ldots s_k$ and
$\tilde{\pi} =
\tilde{s}_0(\tilde{m}_0^{(1)}, \tilde{m}_0^{(2)})\tilde{s}_1\ldots \tilde{s}_k$
if $s_n=\tilde{s}_n$ for all $n\in\{0, \ldots, k\}$, then
$\sigma_i(\pi) = \sigma_i(\tilde{\pi})$. We use move-independent strategies
in the proof of our reduction to relabel some moves of a play without
affecting the suggestions of the strategy.

\subparagraph*{Objectives.} An objective represents the
property we desire on paths of a TA or a goal of a player in a TG.
Formally, we define an \textit{objective} as a set
$\Psi\subseteq \paths(\automaton)$ of infinite paths (when studying TAs)
or a set $\Psi\subseteq \plays(\game)$ of infinite plays (when studying TGs).
An objective is
\textit{state-based} (resp. \textit{location-based}) if it depends solely
on the sequence of states (resp. of locations) in a path or play. Any
location-based objective is state-based.

\begin{remark}\label{remark:paths_vs_plays}
  In the sequel, we present objectives exclusively as sets of plays.
  Definitions for paths are analogous as all the objectives defined hereafter
  are state-based.
\end{remark}

We use the following classical location-based objectives. A
\textit{reachability} objective for a set $F$ of locations is
the set of plays that pass through a
location in $F$. The complement of a reachability objective is a \textit{safety}
objective; given a set $F$, it is the set of plays that never
visit a location in $F$. A \textit{Büchi} objective for a set $F$ contains all
plays that pass through locations in $F$ infinitely often and the complement
\textit{co-Büchi} objective consists of plays traversing locations in $F$ finitely often.
The \textit{parity} objective for a priority function $p$ over the set of locations
requires that the smallest priority seen infinitely often is even.

Fix $F$ a set of locations and $p$ a priority function.
The aforementioned objectives are formally defined as follows.
\begin{itemize}
\item $\mathsf{Reach}(F) = \{
  (\ell_0, \nu_0) (m_0^{(1)}, m_0^{(2)})\ldots\in
  \plays(\game)\mid \exists\, n,\, \ell_n\in F\}$;
\item $\mathsf{Safe}(F) = \{
  (\ell_0, \nu_0) (m_0^{(1)}, m_0^{(2)})\ldots\in
  \plays(\game)\mid \forall\, n,\, \ell_n\notin F\}$.
\item $\mathsf{B\ddot{u}chi}(F) = \{
  (\ell_0, \nu_0) (m_0^{(1)}, m_0^{(2)})\ldots\in
  \plays(\game)\mid
  \forall\, j,\,\exists\, n\geq j,\, \ell_n\in F\}$;
\item $\mathsf{coB\ddot{u}chi}(F) = \{
  (\ell_0, \nu_0) (m_0^{(1)}, m_0^{(2)})\ldots\in
  \plays(\game)\mid \exists \, j, \,\forall\, n\geq j,\, \ell_n\notin F\}$;
\item $\mathsf{Parity}(p) = \{
  (\ell_0, \nu_0) (m_0^{(1)}, m_0^{(2)})\ldots\in
  \plays(\game)\mid (\liminf_{n\to\infty}p(\ell_n))\bmod 2 =0\}$.
\end{itemize}

\subparagraph*{Winning conditions.}
In games, we distinguish objectives and \emph{winning conditions}.
We adopt the definition of \cite{AlfaroFHMS03}. Let $\Psi$ be an objective.
It is desirable to have victory be achieved in a physically meaningful way:
for example, it is unrealistic to have a safety objective be achieved by
stopping time. This motivates a restriction to time-divergent plays.
However, this requires $\player_1$ to force the divergence of plays, which
is not reasonable, as $\player_2$ can stall using delays with zero time
units. Thus we also declare winning time-convergent plays where
$\player_1$ is \emph{blameless}.
Let $\textsf{Blameless}_1$ denote the set of $\player_1$-blameless
plays, which we define in the following way.

Let $\pi = s_0(m_0^{(1)}, m_0^{(2)})s_1\ldots$ be a (possibly finite) play. We say
$\player_1$ is \textit{not responsible} (or not to be blamed) for
the transition at step $n$ in $\pi$ if either $d_n^{(2)} < d_n^{(1)}$
($\player_2$ is faster) or $d_n^{(1)} = d_n^{(2)}$ and
$s_n\xrightarrow{d_n^{(1)}, a_n^{(1)}}s_{n+1}$
does not hold in $\mathcal{T}(\automaton)$
($\player_2$'s move was selected and did not have the same
target state as $\player_1$'s) where $m_n^{(i)} = (d_n^{(i)}, a_n^{(i)})$ for
$i\in\{1, 2\}$. 
The set $\textsf{Blameless}_1$ is formally defined as the set of infinite
plays $\pi$ such that there is some $j$ such that
for all $n\geq j$, $\player_1$ is not responsible for the transition at step
$n$ in $\pi$.

Given an objective
$\Psi$, we set the winning condition $\mathsf{WC}_1(\Psi)$ for $\player_1$ to be
the set of plays
\[\mathsf{WC}_1(\Psi) = (\Psi\cap\plays_\infty(\game))
  \cup (\textsf{Blameless}_1\setminus \plays_\infty(\game)).\]
Winning conditions for $\player_2$ are defined by exchanging the
roles of the players in the former definition.

We consider that the two players are adversaries and have opposite objectives,
$\Psi$ and $\neg \Psi$ (shorthand for $\plays(\game)\setminus \Psi$).
Let us note that there may be plays $\pi$ such that
$\pi\notin\mathsf{WC}_1(\Psi)$ and $\pi\notin\mathsf{WC}_2(\neg\Psi)$, e.g.,
any time-convergent play in which neither player is blameless.

A \emph{winning strategy} for $\player_i$ for an objective $\Psi$ from
a state $s_0$ is a strategy $\sigma_i$ such that
$\outcome_i(\sigma_i, s_0)\subseteq \mathsf{WC}_i(\Psi)$.
Move-independent strategies are known to suffice for timed automaton games
with state-based objectives \cite{AlfaroFHMS03}.

\subparagraph*{Decision problems.}
We consider two different problems for an objective $\Psi$.
The first is the \emph{verification problem} for $\Psi$, which asks
given a timed automaton whether all \emph{time-divergent initial} paths
satisfy the objective.
Second is the \emph{realizability problem}, which
asks whether in a timed automaton game with objective $\Psi$, $\player_1$
has a winning strategy from the initial state.

\section{Window objectives}\label{section:windows}
We consider the \emph{fixed window parity} and \emph{direct fixed window parity}
problems from~\cite{BruyereHR16} and adapt the discrete-time requirements from
their initial formulation to dense-time requirements for TAs and TGs.
Intuitively, a direct fixed  window parity objective for some bound $\lambda$
requires that at all points along a play or a path, we see a window of size
less than $\lambda$ in which the smallest priority is even. The (non-direct)
fixed window parity objective requires that the direct objective holds for some
suffix. In the sequel, we drop ``fixed'' from the name of these objectives.

In this section, we formalize the timed window parity objective in
TGs as sets of plays. The definition for paths of TAs is analogous
(see Remark~\ref{remark:paths_vs_plays}).
First, we define the \textit{timed good window objective},
which formalizes
the notion of good windows. Then we introduce the timed window parity
objective and its direct variant. We compare these objectives to the parity
objective and argue that satisfying a window objective implies satisfying a
parity objective, and that window objectives do not coincide with
parity objectives in general, via an example. We conclude this section by
proving some useful properties of this objective.

For this entire section, we fix a TG $\game = (\automaton, \Sigma_1, \Sigma_2)$
where $\automaton = (L, \ell_\init, C, \Sigma_1\cup\Sigma_2, I, E)$,
a priority function $p\colon L\to \{0, \ldots, d-1\}$ and
a bound $\lambda\in\IN\setminus\{0\}$ on the size of windows.

\subsection{Definitions}

\subparagraph{Good windows.} A window objective is based on a notion of
good windows. Intuitively, a good
window for the parity objective is a fragment of a play in which less than
$\lambda$ time units pass and the smallest priority of the locations appearing
in this fragment is even.

The timed good window objective encompasses plays in which there is a good
window at the start of the play.
We formally define the \textit{timed good window (parity) objective} as the set
\begin{align*}
    \mathsf{TGW}(p, \lambda)= \big\{(\ell_0, \nu_0)(m^{(1)}_0,m^{(2)}_0) \ldots
    \in\plays(\game)\mid 
    \exists \, j\in\IN,\, 
    &\min_{0\leq k\leq j}p(\ell_k)\bmod 2 = 0 \\ &
    \land  \nu_j(\gamma)-\nu_0(\gamma)<\lambda\big\}.
\end{align*}
The timed good window objective is a state-based objective.

We introduce some terminology related to windows.
Let
$\pi=(\ell_0, \nu_0)(m_0^{(1)}, m_0^{(2)})(\ell_1, \nu_1)\ldots$
be an infinite play. We say that
the window opened at step $n$ \textit{closes} at step $j$ if
$\min_{n\leq k\leq j}p(\ell_k)$ is even and for all $n\leq j'<j$,
$\min_{n\leq k\leq j'}p(\ell_k)$ is odd. 
Note that, in this case, we must have $\min_{n\leq k\leq j}p(\ell_k)=p(\ell_j)$.
In other words, a window closes
when an even priority smaller than all other priorities in the window
is encountered.
The window opened at step $n$ is said to \textit{close immediately} if
$p(\ell_n)$ is even.

If a window does not close within $\lambda$ time units, we refer to it as
a \emph{bad} window: the window opened at step $n$ is a bad window if
there is some $j^\star\geq n$ such that $\nu_{j^\star}(\gamma)-\nu_n(\gamma)\geq\lambda$
and for all $j\geq n$, if $\nu_j(\gamma)-\nu_n(\gamma)<\lambda$, then
$\min_{n\leq k\leq j}p(\ell_k)$ is odd.

\subparagraph{Direct timed window objective.} 
The direct window parity objective in graph games requires that every suffix
of the play belongs to the good window objective.
To adapt this objective to a dense-time setting, we must express that
\textit{at all times, we have a good window}. We require that this property
holds not only at states which appear explicitly along plays, but also
in the continuum between them (during the delay within a location). To this
end, let us introduce a notation for suffixes of play.

Let $\pi = (\ell_0, \nu_0)(m_0^{(1)}, m_0^{(2)})(\ell_1, \nu_1)\ldots
\in \plays(\game)$ be a play. For all $i\in\{1, 2\}$ and all $n\in\IN$,
write $m_n^{(i)} = (d_n^{(i)}, a_n^{(i)})$ and
$d_n =  \delay(m_n^{(1)}, m_n^{(2)}) = \nu_{n+1}(\gamma) -\nu_n(\gamma)$.
For any $n\in\IN$ and $d\in [0, d_n]$,
let $\pi_{n\to}^{+d}$ be the \textit{delayed suffix} of $\pi$
starting in position $n$ delayed by $d$ time units, defined as\[\pi_{n\to}^{+d} = (\ell_n, \nu_n + d)((d_n^{(1)}-d, a_n^{(1)}), (d_n^{(2)}-d, a_n^{(2)}))
  (\ell_{n+1}, \nu_{n+1}) (m_{n+1}^{(1)}, m_{n+1}^{(2)})\ldots\]
If $d=0$, we write $\pi_{n\to}$ rather than $\pi^{+0}_{n\to}$.

Using the notations above, we define the \emph{direct timed window (parity)
  objective} as the set
\[\mathsf{DTW}(p, \lambda) = \{\pi\in \plays(\game)\mid\forall\, n\in\IN,\,
  \forall\, d\in [0, d_n],\, \pi_{n\to}^{+d}\in\mathsf{TGW}(p, \lambda)\}.\]

The direct timed window  objective is state-based: the timed good window
objective is state-based and the delays $d_n$ are encoded in states (measured
by clock $\gamma$), thus all conditions in the definition of the direct timed
window  objective depend only the sequence of states of a play.

A good window for a delayed suffix $\pi_{n\to}^{+d}$ can be expressed using
exclusively indices
from the play $\pi$. In fact, $\pi_{n\to}^{+d}\in \mathsf{TGW}(p, \lambda)$
if and only if there is some $j\geq n$ such that $\min_{n\leq k\leq j}p(\ell_k)$
is even and $\nu_j(\gamma) - \nu_n(\gamma) - d < \lambda$.
We use this characterization
to avoid mixing indices from plays $\pi$ and $\pi_{n\to}^{+d}$ in proofs.

\subparagraph{Timed window objective.}
We define the \emph{timed window (parity) objective} as a prefix-independent
variant of the direct timed window  objective. Formally, we let
\[\mathsf{TW}(p, \lambda)= \{\pi\in \plays(\game)\mid \exists\, n\in\IN,\,
  \pi_{n\to}\in \mathsf{DTW}(p, \lambda)\}.\]
The timed window objective requires the direct timed window objective to hold
from some point on. This implies that the timed window objective is state-based.

\subsection{Comparison with parity objectives}
Both the direct and non-direct timed window  objectives reinforce the
parity objective with time bounds. It can easily be shown that
satisfying the direct timed window  objective implies satisfying
a parity objective. Any odd priority seen along a play in
$\mathsf{DTW}(p, \lambda)$ is answered within $\lambda$ time units by a
smaller even priority. Therefore, should any odd priority appear infinitely
often, it is followed by a smaller even priority. As the set of priorities
is finite, there must be some smaller even priority appearing infinitely
often. This in turn implies that the parity objective is fulfilled.
Using prefix-independence of the parity objective, we can also conclude that
satisfying the non-direct timed window  objective implies
satisfaction of the parity objective.

However, in some cases, the timed window  objectives may not hold
even though the parity objective holds. For simplicity, we provide
an example on a TA, rather than a TG. Consider
the timed automaton $\mathcal{A}$  depicted in Figure~\ref{fig:paritynotwindow}.

\begin{figure}
\centering
  \captionof{figure}{Timed automaton $\automaton$. Edges are labeled
    with triples guard-action-resets. Priorities are beneath
    locations. The initial state is denoted by an incoming arrow with no origin.} \label{fig:paritynotwindow}
  \begin{tikzpicture}[shorten <= 1pt, node distance=4cm, initial text=,
    scale=0.8, every node/.style={transform shape},
    every state/.style={minimum size=1.5cm}]
    \node[state, initial, align=center] (l0) {$\ell_0$ \\ $x\leq 2$};
    \node[align=center, below of=l0, node distance=1.1cm] {$1$};
    \node[state, align=center, right of=l0] (l1) {$\ell_1$ \\ $\true$};
    \node[align=center, below of=l1, node distance=1.1cm] {$2$};
    \node[state, align=center, right of=l1] (l2) {$\ell_2$ \\ $x\leq 2$};
    \node[align=center, below of=l2, node distance=1.1cm] {$0$};
    \path[->] (l0) edge node[align=center, above] {$(\true, a, \varnothing)$} (l1);
    \path[->] (l1) edge node[align=center, above] {$(\true, a, \{x\})$} (l2);
    \path[->] (l2) edge[bend right] node[align=center, above] {$(\true, a, \{x\})$} (l0);
  \end{tikzpicture}
\end{figure}
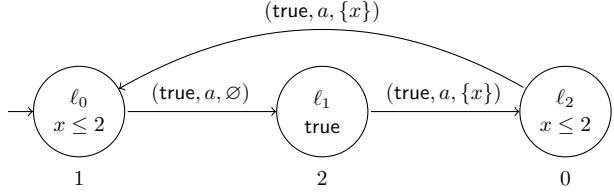

All time-divergent paths of $\automaton$ satisfy the parity objective. We can
classify time-divergent paths in two families: either $\ell_2$ is visited
infinitely often, or from some point on only delay transitions are taken in
$\ell_1$. In the former case, the smallest priority seen infinitely often is
$0$ and in the latter case, it is $2$.
  
However, there is a path $\pi$ such that for all window sizes
$\lambda\in\IN\setminus\{0\}$, $\pi$ violates the direct and non-direct timed window
 objectives.
Initialize $n$ to $1$. This path can be described by the following loop:
play action $a$ in $\ell_0$ with delay $0$, followed by action $a$ with delay
$n$ in $\ell_1$ and action $a$ in $\ell_2$ with delay $0$, increase $n$ by $1$ and repeat.
The window opened in $\ell_0$ only closes when location $\ell_2$ is entered.
At the $n$-th step of the loop, this window closes after $n$ time units. As
we let $n$ increase to infinity, there is no window size $\lambda$ such that
this path satisfies the direct and non-direct timed window  objectives for $\lambda$.

This example demonstrates the interest of reinforcing parity objectives with
time bounds; we can enforce that there is a bounded delay between an
odd priority and a smaller even priority in a path.
\subsection{Properties of window objectives}
We present several properties of the timed window  objective.
First, we show that we need only check good windows for non-delayed
suffixes $\pi_{n\to}$.
Once this property is proven, we move on to the inductive property
of windows, which is the crux of the reduction in the next section.
This inductive property states that when we close a window in less than $\lambda$ time units
all other windows opened in the meantime also close in less than
$\lambda$ time units.

The definition of the direct timed window  objective requires
checking uncountably many windows. This can be reduced to a countable
number of windows: those opened when entering states appearing along a play.
Let us explain why no information is lost through such a restriction. We rely on
a timeline-like visual representation given in Figure~\ref{fig:countable}.
Consider a window that does not close immediately and is opened in some state
of the play delayed by $d$ time units, of the form $(\ell_n, \nu_n + d)$
(depicted by the circle at the start of the bottom line of Figure~\ref{fig:countable}). This implies that the priority of $\ell_n$ is odd, otherwise this window
would close immediately.
Assume the window opened at step $n$ closes at step $j$
(illustrated by the middle line of the figure) in less than $\lambda$ time
units. As the priority of $\ell_n$ is odd, we must have $j\geq n+1$ (i.e., the
window opened at step $n$ is still open as long as $\ell_n$ is not left).
These lines cover the same locations, i.e., the set of locations appearing along
the time-frame given by both the dotted and dashed lines coincide.
Thus, the window opened $d$ time units after step $n$ closes in at most $\lambda-d$
time units, at the same time as the window opened at step $n$.

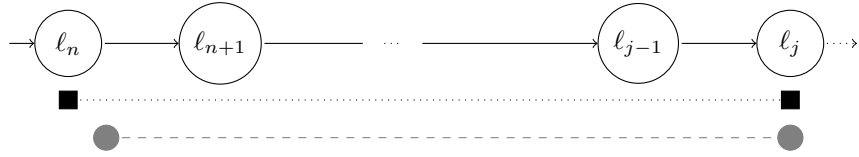
\begin{figure}[h]
\centering
\captionof{figure}{A timeline representation of a play. Circles with labels indicate entry in a location. The dotted line underneath represents a window opened at step $n$ and
  closed at step $j$ and the dashed line underneath the window opened $d$ time units after step $n$.} \label{fig:countable}
  \begin{tikzpicture}[shorten <= 1pt, node distance=0.5cm, scale=0.8, initial text=]
    \node [state, initial] (first) {$\ell_n$};
    \node [state, right of=first, node distance=2cm] (node2) {$\ell_{n+1}$};
    \node [right of=node2, node distance=2cm] (phantom1) {};
    \node [right of=phantom1, node distance=0.5cm] (phantom2) {};
    \node [state, right of=phantom2, node distance=3cm] (node3) {$\ell_{j-1}$};
    \node [state, right of=node3, node distance=2cm] (node4) {$\ell_{j}$};
    \node [right of=node4, node distance=1cm] (phantom3) {};
    \node [draw, fill=black, below of=first, node distance=0.75cm] (window1) {};
    \node [draw, fill=black, below of=node4, node distance=0.75cm] (window2) {};
    \node [below of=window1] (phantomwindow) {};
    \node [draw=gray, shape=circle, fill=gray, right of=phantomwindow] (windowr21) {};
    \node [draw=gray, shape=circle, fill=gray, below of=window2] (windowr22) {};
    \draw [->] (first) -- (node2);
    \draw (node2) -- (phantom1);
    \draw [dotted] (phantom1)-- (phantom2);
    \draw [->] (phantom2) -- (node3);
    \draw [->] (node3) -- (node4);
    \draw [->, dotted] (node4) -- (phantom3);
    \draw [dotted] (window1) --(window2);
    \draw [dashed, gray] (windowr21) -- (windowr22);
  \end{tikzpicture}
\end{figure}

\begin{lemma}\label{lemma:discrete_twpo}
  Let $\pi=(\ell_0, \nu_0)(m_0^{(1)}, m_0^{(2)})\ldots\in\plays(\game)$ and
  $n\in\IN$. Let $d_n$ denote $\delay(m_n^{(1)}, m_n^{(2)})$.
  Then $\pi_{n\to}\in\mathsf{TGW}(p, \lambda)$ if and only if for all
  $d\in[0, d_n]$, $\pi_{n\to}^{+d}\in \mathsf{TGW}(p, \lambda)$.
  Furthermore,
  $\pi\in\mathsf{DTW}(p, \lambda)$ if
  and only if for all $n\in\IN$, $\pi_{n\to}\in \mathsf{TGW}(p, \lambda)$.
\end{lemma}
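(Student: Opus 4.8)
The plan is to lean entirely on the index-based characterization of membership in $\mathsf{TGW}(p,\lambda)$ stated just above the lemma, namely that $\pi_{n\to}^{+d}\in\mathsf{TGW}(p,\lambda)$ holds if and only if there is some $j\geq n$ with $\min_{n\leq k\leq j}p(\ell_k)$ even and $\nu_j(\gamma)-\nu_n(\gamma)-d<\lambda$. This reformulation removes the need to juggle the indexing of the delayed suffix $\pi_{n\to}^{+d}$ against that of $\pi$, and it makes the dependence on $d$ completely transparent: the only place $d$ appears is as the term $-d$ in the window-size inequality. I would begin by recalling this characterization (a direct unfolding of the definitions of the delayed suffix and of $\mathsf{TGW}(p,\lambda)$) and then use it as the sole engine of the proof.

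For the first equivalence, the direction from right to left is immediate, since $\pi_{n\to}=\pi_{n\to}^{+0}$ and $0\in[0,d_n]$. For the converse, I would suppose $\pi_{n\to}\in\mathsf{TGW}(p,\lambda)$ and fix a witnessing index $j\geq n$ with $\min_{n\leq k\leq j}p(\ell_k)$ even and $\nu_j(\gamma)-\nu_n(\gamma)<\lambda$. I claim the same $j$ witnesses membership of $\pi_{n\to}^{+d}$ for every $d\in[0,d_n]$: the parity condition on $\min_{n\leq k\leq j}p(\ell_k)$ is unchanged, and since $d\geq 0$ we have $\nu_j(\gamma)-\nu_n(\gamma)-d\leq\nu_j(\gamma)-\nu_n(\gamma)<\lambda$. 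Intuitively, delaying the opening of the window only shrinks its measured size, so a good window for the non-delayed suffix remains good for every delayed suffix; this is precisely the situation drawn in Figure~\ref{fig:countable}, where the same closing index serves both lines and the delayed window closes within $\lambda-d<\lambda$ time units.

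The second equivalence then follows by unfolding the definition of $\mathsf{DTW}(p,\lambda)$ and applying the first. By definition $\pi\in\mathsf{DTW}(p,\lambda)$ iff for all $n\in\IN$ and all $d\in[0,d_n]$ we have $\pi_{n\to}^{+d}\in\mathsf{TGW}(p,\lambda)$. For each fixed $n$, the first equivalence collapses the inner universal quantifier over $d$ to the single condition $\pi_{n\to}\in\mathsf{TGW}(p,\lambda)$, so the whole statement becomes: for all $n\in\IN$, $\pi_{n\to}\in\mathsf{TGW}(p,\lambda)$, which is the claim.

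I do not anticipate a genuine obstacle here: once the characterization is in hand, the entire argument reduces to the observation that the window-size inequality is monotone in $d$. The only point requiring minor care is the boundary case $j=n$ (a window closing immediately), where the inequality reads $-d<\lambda$ and holds because $\lambda\geq 1$; checking this confirms that the characterization and the argument remain valid even when the window closes at its opening step.
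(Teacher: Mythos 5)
Your proposal is correct and follows essentially the same route as the paper's proof: the right-to-left direction by instantiating $d=0$, the converse via the monotonicity chain $\nu_j(\gamma)-\nu_n(\gamma)-d\leq\nu_j(\gamma)-\nu_n(\gamma)<\lambda$ applied to the same witnessing index $j$, and the second claim by unfolding the definition of $\mathsf{DTW}(p,\lambda)$ and invoking the first equivalence. Your extra remark on the boundary case $j=n$ is harmless but unnecessary, since the inequality chain already covers it for any $d\geq 0$.
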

\begin{proof}
  Assume for all $d\in [0, d_n]$,
  $\pi_{n\to}^{+d}\in \mathsf{TGW}(p, \lambda)$ holds.
  Selecting $d=0$ yields $\pi_{n\to}\in\mathsf{TGW}(p, \lambda)$.

  Conversely, assume that $\pi_{n\to}\in\mathsf{TGW}(p, \lambda)$.
  Let $d\in [0, d_n]$. By definition of timed good
  window objectives, there is some $j\geq n$ such that
  $\nu_j(\gamma) - \nu_n(\gamma)< \lambda$ and $\min_{n\leq k\leq j}p(\ell_k)$ is
  even. The fact that $\pi_{n\to}^{+d}\in \mathsf{TGW}(p, \lambda)$ follows
  immediately from the chain of inequalities
  $\nu_j(\gamma) - \nu_n(\gamma) - d \leq \nu_j(\gamma) - \nu_n(\gamma)<\lambda$.

  The last claim of the lemma follows immediately from the first part of the
  lemma and the definition of direct timed window  objectives.
\end{proof}

In turn-based games on graphs, window objectives exhibit an inductive
property:
when a window closes, all subsequently opened windows close (or were closed
earlier)~\cite{BruyereHR16}.
This is also the case for the timed variant. A window closes
when an even priority smaller than all priorities seen in the window
is encountered.
This priority is also smaller than priorities in all windows opened
in the meantime, therefore they must close at this point (if they are not
yet closed). We state this property only for windows opened at steps along
the run and neglect the continuum in between due to Lemma~\ref{lemma:discrete_twpo}.

\begin{lemma}[Inductive property]\label{lemma:inductive_prop}
  Let $\pi = (\ell_0, \nu_0)(m_0^{(1)}, m_0^{(2)})(\ell_1, \nu_1)\ldots\in
  \plays(\game)$.
  Let $n\in \IN$. Assume the window opened at step $n$ closes at step $j$
  and $\nu_j(\gamma)-\nu_n(\gamma) < \lambda$.
  Then, for all $n\leq i\leq j$, $\pi_{i\to}\in\mathsf{TGW}(p, \lambda)$.
\end{lemma}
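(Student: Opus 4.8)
The plan is to prove the inductive property by a direct argument exploiting the characterization of window closure. Fix $\pi$ and $n$ as in the statement, let $j$ be the step at which the window opened at step $n$ closes, and assume $\nu_j(\gamma)-\nu_n(\gamma)<\lambda$. By the definition of ``closes at step $j$'', the note following that definition tells us that $\min_{n\leq k\leq j}p(\ell_k)=p(\ell_j)$, and this common value is even. The key observation is that $p(\ell_j)$ being the \emph{minimum} over the whole range $[n,j]$ means it is in particular $\leq p(\ell_k)$ for every $k$ with $n\leq k\leq j$; hence for any intermediate index $i$ with $n\leq i\leq j$ we still have $\min_{i\leq k\leq j}p(\ell_k)=p(\ell_j)$, which is even.

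First I would use the reformulation at the end of Section~\ref{section:windows} (the characterization stating $\pi_{i\to}\in\mathsf{TGW}(p,\lambda)$ iff there is some $j'\geq i$ with $\min_{i\leq k\leq j'}p(\ell_k)$ even and $\nu_{j'}(\gamma)-\nu_i(\gamma)<\lambda$). Given an arbitrary $i\in\{n,\dots,j\}$, I propose to use the very same closing index $j$ as the witness $j'$. Two things must be checked: that $\min_{i\leq k\leq j}p(\ell_k)$ is even, and that $\nu_j(\gamma)-\nu_i(\gamma)<\lambda$. The first follows from the minimality observation above, since shrinking the range from $[n,j]$ to $[i,j]$ cannot increase the minimum below $p(\ell_j)$ yet $p(\ell_j)$ already attains the minimum, so the minimum over $[i,j]$ is again exactly $p(\ell_j)$, which is even. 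The second follows because clock $\gamma$ is nondecreasing along any path, so $\nu_i(\gamma)\geq \nu_n(\gamma)$, giving
\[
\nu_j(\gamma)-\nu_i(\gamma)\leq \nu_j(\gamma)-\nu_n(\gamma)<\lambda.
\]

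Putting these together, for every $i\in\{n,\dots,j\}$ the index $j$ witnesses membership $\pi_{i\to}\in\mathsf{TGW}(p,\lambda)$, which is exactly the claim. I would then invoke Lemma~\ref{lemma:discrete_twpo} only implicitly, since the statement is already phrased in terms of non-delayed suffixes $\pi_{i\to}$ and thus needs no passage to the continuum here.

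I do not anticipate a serious obstacle; the proof is essentially an unwinding of the definitions. The one point requiring slight care is justifying that the minimum over the shrunken index range $[i,j]$ remains even: this is where the remark that a closing priority equals the minimum of the window (rather than merely being $\leq$ some earlier priorities) is essential, and I would state explicitly that $p(\ell_j)\leq p(\ell_k)$ for all $k\in[i,j]\subseteq[n,j]$ to make the monotonicity of $\gamma$ and the minimality argument fully transparent.
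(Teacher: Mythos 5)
Your proof is correct and follows essentially the same route as the paper's: fix an arbitrary $i\in\{n,\dots,j\}$, use the closing step $j$ as the witness, obtain $\nu_j(\gamma)-\nu_i(\gamma)<\lambda$ from monotonicity of $\gamma$, and conclude that $\min_{i\leq k\leq j}p(\ell_k)=p(\ell_j)$ is even via the sandwich argument $p(\ell_j)\leq\min_{n\leq k\leq j}p(\ell_k)\leq\min_{i\leq k\leq j}p(\ell_k)\leq p(\ell_j)$. No gaps; the paper's proof is the same unwinding of definitions.
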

\begin{proof}
  Fix $i\in\{n, \ldots, j\}$.
  The sequence $(\nu_k(\gamma))_{k\in\IN}$ is non-decreasing, which implies that
  $\nu_{j}(\gamma) - \nu_i(\gamma) \leq \nu_{j}(\gamma)-\nu_n(\gamma)< \lambda$.
  It remains to show that $\min_{i\leq k\leq j}p(\ell_k)$ is even. As the
  window opened at step $n$ closes at step $j$, we have
  $\min_{n\leq k\leq j}p(\ell_k) = p(\ell_j)$ and $p(\ell_j)$ is even.
  We have the inequalities $p(\ell_{j}) \leq
    \min_{n\leq k\leq j}p(\ell_k)\leq \min_{i\leq k\leq j}p(\ell_k)
    \leq p(\ell_{j});$
  the first follows from above, the second because we take a minimum
  of a smaller set and the third by definition of minimum.
  Thus $\min_{i\leq k\leq j}p(\ell_k) = p(\ell_{j})$ is even, ending the proof.
\end{proof}

It follows from this inductive property that it suffices to keep
track of one window at a time when checking whether a play
satisfies the (direct) timed window  objective.

\section{Reduction}\label{section:reduction}
We establish in this section that the realizability
(resp.~verification) problem for the direct/non-direct timed window parity
objective can be reduced to the the realizability
(resp.~verification) problem for safety/co-Büchi objectives on an expanded
TG (resp.~TA). Our reduction uses the same construction of
an expanded TA for both the verification and realizability problems.
A state of the expanded TA describes the status of a window,
allowing the detection of bad windows. This section is divided in three
parts.

Firstly, we describe how a TA can be expanded with window-related
information.
Then we show that time-divergent plays in a TG and its expansion can
be related, by constructing two (non-bijective) mappings, in a manner such
that a time-divergent play in the base TG satisfies the direct/non-direct
timed window parity objective if and only if its related play in the expanded
TG satisfies the safety/co-Büchi objective.
These results developed for plays are (indirectly) applied to paths in
order to show the correctness of the reduction for the verification problem.
Thirdly, we establish that the mappings developed in the second part can be
leveraged to translate strategies in TGs, and prove that the presented
translations preserve winning strategies, proving correctness of the reduction
for TGs.

For this section, we fix a TG $\game=(\automaton, \Sigma_1, \Sigma_2)$
with TA $\automaton = (L, \ell_\init, C, \Sigma, I, E)$, a priority function
$p$ and a bound $\lambda$ on the size of windows.

\subsection{Encoding the objective in an automaton}\label{section:reduction:construction}

To solve the verification and realizability problems for the
direct/non-direct timed window  objective, we rely on a reduction to
a safety/co-Büchi objective in an expanded TA. The inductive property
(Lemma~\ref{lemma:inductive_prop}) implies that it suffices to
keep track of one window at a time when checking a window objective.
Following this, we encode the status of a window in the TA.

A window can be summarized by two characteristics: the lowest
priority within it and for how long it has been open. To keep track of the first
trait, we encode the lowest priority seen in the current window
in locations of the TA.
An expanded location is a pair $(\ell, q)$ where $q\in\{0, \ldots, d - 1\}$;
the number $q$ represents the smallest priority in the window currently under
consideration. We say a pair $(\ell, q)$ is an even (resp.~odd) location if
$q$ is even (resp.~odd).
To measure how long a window is opened, we use an additional clock $z\notin C$
that does not appear in $\automaton$. This clock is reset whenever a new window
opens or a bad window is detected.

The focus of the reduction is over time-divergent plays. Some
time-convergent
plays may violate a timed good window objective without ever seeing a bad
window, e.g., when time does not progress up to the supplied window size.
Along time-divergent plays however, the lack of a good window at any point equates
to the presence of a bad window. We encode the (resp.~direct) timed window
objective as a co-Büchi (resp.~safety) objective. Locations
to avoid in both cases indicate bad windows and are additional
expanded locations $(\ell, \bad)$, referred to as \textit{bad locations}.
We introduce two new actions $\beta_1$ and $\beta_2$, one per player, for
entering and exiting bad locations. While only the action $\beta_1$ is
sufficient for the reduction to be correct, introducing two actions allows
for a simpler correctness proof in the case of TGs; we can exploit the fact that
$\player_2$ can enter and exit bad locations. We use two new actions no matter
the considered problem: this enables us to use the same expanded TA construction for both the verification problem and realizability problem.

It remains to discuss how the initial location, edges and invariants of
an expanded TA are defined. We discuss edges and invariants for each type
of expanded location, starting with even locations, then odd locations and
finally bad locations. Each rule we introduce hereafter is followed
by an application on an example. We depict the TA of
Figure~\ref{fig:paritynotwindow} and the reachable fragment of its expansion
in Figure~\ref{fig:paritynotwindow:expanded} and use these TAs for our example.
For this explanation, we use the terminology of TAs (paths) rather than
that of TGs (plays).

The initial location of an expanded TA encodes the window opened at the
start of an initial path of the original TA. This window contains only a single
priority, that is the priority of the initial location of the original TA.
Thus, the initial location of the expanded TA is the expanded location
$(\ell_\init, p(\ell_\init))$. In the case of our
example, the initial location is  $\ell_0$ and the priority of $\ell_0$ is $1$,
thus the initial location of the expanded TA is $(\ell_0, 1)$.

Even expanded locations encode windows that are closed and do not need to
be monitored anymore.
Therefore, the invariant of an even expanded location is unchanged from the
invariant of the original location in the original TA.
Similarly, we do not add any additional constraints on the edges leaving even expanded
locations.
Leaving an even expanded location means opening a new window:
any edge leaving an even expanded location has an expanded location
of the form $(\ell, p(\ell))$ as its target
($p(\ell)$ is the only priority occurring in the
new window) and resets $z$ to start measuring
the size of the new window.
For example, in Figure~\ref{fig:paritynotwindow:expanded}, the edge
from $(\ell_2, 0)$ to $(\ell_0, 1)$ of the expanded TA
is obtained this way from the edge from $\ell_2$ to $\ell_0$ in the original
TA.

Odd expanded locations represent windows that are still open.
The clock $z$ measures how long a window has been opened. If $z$ reaches
$\lambda$ in an odd expanded location, that equates to a bad window in the
original TA. In this case, we force time-divergent paths of the expanded TA
to visit a bad location.
This is done in three steps. We strengthen the invariant of odd expanded
locations to prevent $z$ from exceeding $\lambda$. We also
disable the edges that leave odd expanded locations and do not go to a bad
location whenever $z=\lambda$ holds, by reinforcing the guards of such edges
by $z<\lambda$. Finally, we include two edges to a bad location
(one per additional action $\beta_1$ and $\beta_2$), which can only be
used whenever there is a bad window, i.e., when $z=\lambda$.
In the case of our example, if $z$ reaches $\lambda$ in $(\ell_0, 1)$,
we redirect the path to location $(\ell_0, \bad)$, indicating a window has
not closed in time in $\ell_0$. When $z$ reaches $\lambda$ in $(\ell_0, 1)$,
no more non-zero delays are possible, the edge from $(\ell_0, 1)$ to
$(\ell_1, 1)$ is disabled and only the edges to $(\ell_0, \bad)$ are enabled.

When leaving an odd expanded location using an edge, assuming we do not
go to a bad location, the smallest priority of the window has to be updated.
The new smallest priority is the minimum between the smallest
priority of the window prior to traversing the edge and the priority of
the target location. In our example for instance,
the edge from $(\ell_1, 1$) to $(\ell_2, 0)$ is derived from the edge from
$\ell_1$ to $\ell_2$ in the original TA.
As the priority of $\ell_2$
is 0 and is smaller than the current smallest priority of the window encoded
by location $(\ell_1, 1)$, the smallest priority of the window is updated
to $0=\min\{1, p(\ell_2)\}$ when traversing the edge. Note that we do not
reset $z$ despite the encoded window closing upon entering $(\ell_2, 0)$:
the value of $z$ does not matter while in even locations, thus there is no
need for a reset when closing the window.

A bad location $(\ell, \bad)$ is entered whenever a bad window is
detected while in location $\ell$. Bad
locations are equipped with the invariant $z=0$ preventing the passage of
time. In this way, for time-divergent paths, a new window is opened immediately
after a bad window is detected. For each additional action $\beta_1$ and
$\beta_2$, we add an edge exiting the bad location.
Edges leaving a bad location $(\ell, \bad)$ have as their target the expanded location
$(\ell, p(\ell))$; we reopen a window in the
location in which a bad window was detected. The clock $z$ is not reset by
these edges, as it was reset prior to entering the bad
location and the invariant $z=0$ prevents any non-zero delay in the bad
location. For instance, the edges
from $(\ell_1, \bad)$ to $(\ell_1, 2)$ in our example represent that
when reopening while in location $\ell_1$, the smallest priority of this window
is $p(\ell_1)=2$.

\begin{figure}
  \centering
  \caption{The TA of Figure~\ref{fig:paritynotwindow} (left) and the
    reachable fragment of its expansion (right).
    We write $\beta$ for  actions $\beta_1$ and $\beta_2$.}
  \label{fig:paritynotwindow:expanded}
    \begin{tikzpicture}[shorten <= 1pt, node distance=3cm, initial text=,
    scale=0.65, every node/.style={transform shape},
    every state/.style={minimum size=1.5cm}]
    \node[state, initial above, align=center] (l0) {$\ell_0$ \\ $x\leq 2$};
    \node[align=center, left of=l0, node distance=1.1cm] {$1$};
    \node[state, align=center, below of=l0] (l1) {$\ell_1$ \\ $\true$};
    \node[align=center, left of=l1, node distance=1.1cm] {$2$};
    \node[state, align=center, below of=l1] (l2) {$\ell_2$ \\ $x\leq 2$};
    \node[align=center, left of=l2, node distance=1.1cm] {$0$};
    \path[->] (l0) edge node[align=center, left] {$(\true, a, \varnothing)$} (l1);
    \path[->] (l1) edge node[align=center, left] {$(\true, a, \{x\})$} (l2);
    \path[->] (l2) edge[bend right] node[align=center, right] {$(\true, a, \{x\})$} (l0);
  \end{tikzpicture}
  \begin{tikzpicture}[shorten <= 1pt, node distance=5cm, initial text=,
    scale=0.65, every node/.style={transform shape},
    every state/.style={minimum size=2.5cm}]
    \node[state, initial, align=center] (l0) {$(\ell_0,1)$ \\ $x\leq 2\land z\leq\lambda$};
    \node[state, align=center, right of=l0] (l11) {$(\ell_1, 1)$ \\ $z\leq\lambda$};
    \node[state, align=center, below of=l0] (l0bad) {$(\ell_0, \bad)$ \\ $z=0$};
    \node[state, align=center, below of=l11] (l1bad) {$(\ell_1, \bad)$ \\ $z=0$};
    \node[state, align=center, right of=l11] (l20) {$(\ell_2, 0)$ \\ $x\leq 2$};
    \node[state, align=center, right of=l1bad] (l12) {$(\ell_1, 2)$ \\ $\true$};
    \path[->] (l0) edge node[align=center, above] {$(z< \lambda, a, \varnothing)$} (l11);
    \path[->] (l0) edge[bend left] node[align=center, right] {$(z = \lambda, \beta, \{z\})$} (l0bad);
    \path[->] (l0bad) edge[bend left] node[align=center, left] {$(\true,\beta, \varnothing)$} (l0);
    \path[->] (l11) edge  node[align=center, right] {$(z = \lambda, \beta, \{z\})$} (l1bad);
    \path[->] (l11) edge node[align=center, above] {$(z < \lambda, a, \{x\})$} (l20);
    \path[->] (l12) edge  node[align=center, right] {$(\true, a, \{x, z\})$} (l20);
    \path[->] (l1bad) edge node[align=center, above] {$(\true, \beta, \varnothing)$} (l12);
    \path[->] (l20) edge[bend right] node[align=center, above] {$(\true, a, \{x, z\})$} (l0);
  \end{tikzpicture}
\end{figure}
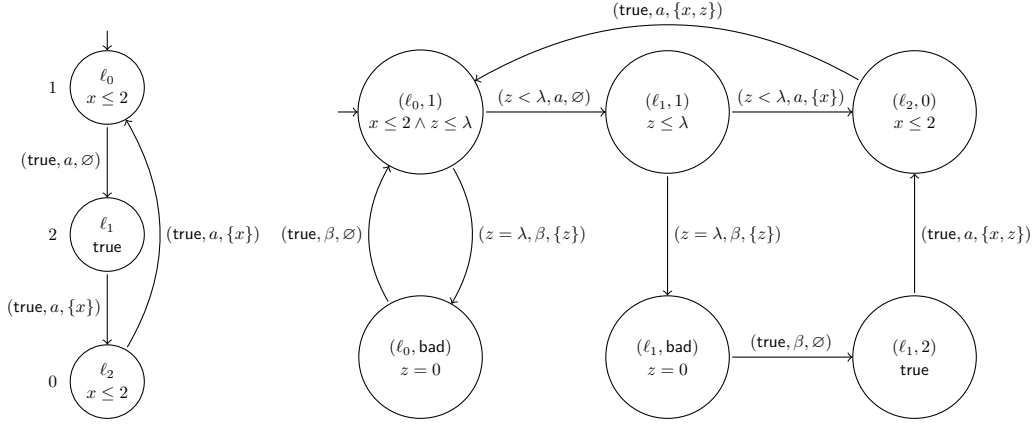

The expansion depends on the priority function $p$ and the
bound on the size of windows $\lambda$. Therefore, we write
$\automaton(p, \lambda)$ for the expansion.
The formal definition of $\automaton(p, \lambda)$ follows.
\begin{definition}\label{definition:singledim:extended_automaton}
  Given a TA $\automaton = (L, \ell_\init, C, \Sigma, I, E)$,
  the TA $\automaton(p, \lambda)$ is defined to be the
  TA $(L', \ell'_\init, C', \Sigma', I', E')$
  such that
  \begin{itemize}
  \item $L' = L\times (\{0, \ldots, d-1\}\cup \{\bad\})$;
  \item $\ell'_\init = (\ell_\init, p(\ell_\init))$;
  \item $C' = C\cup \{z\}$ where $z\notin C$ is a new clock;
  \item $\Sigma' = \Sigma\cup \{\beta_1, \beta_2\}$ is an expanded set
    of actions with special actions $\beta_1$, $\beta_2\notin\Sigma$
    for bad locations;
  \item $I'(\ell, q) = I(\ell)$ for all $\ell\in L$ and
    even $q\in\{0, \ldots, d-1\}$, $I'(\ell, q) = (I(\ell)\land z\leq\lambda)$
    for all $\ell\in L$ and odd $q\in\{0, \ldots, d-1\}$,
    and $I'(\ell, \bad) = (z=0)$
    for all $\ell\in L$;
  \item  the set of edges $E'$ of $\automaton(p, \lambda)$ is the smallest
    set satisfying the following rules:
    \begin{itemize}
    \item if $q$ is an even number and $(\ell, g, a, D, \ell')\in E$, then
      \[((\ell, q), g, a, D\cup\{z\}, (\ell', p(\ell'))\in E';\]
    \item if $q$ is an odd number and $(\ell, g, a, D, \ell')\in E$, then
      \[((\ell, q), (g\land z< \lambda), a,
        D, (\ell', \min\{q, p(\ell')\}))\in E';\]
    \item for all locations $\ell\in L$, odd $q$ and $\beta\in\{\beta_1, \beta_2\}$,
      \[((\ell, q), (z= \lambda), \beta, \{z\}, (\ell, \bad))\in E'\text{ and }
      ((\ell, \bad), \true, \beta, \varnothing, (\ell, p(\ell))\in E'.\]
    \end{itemize}
  \end{itemize}

  For a TG $\game = (\automaton, \Sigma_1, \Sigma_2)$,
  we set $\game(p, \lambda) = (\automaton(p, \lambda),
  \Sigma_1\cup\{\beta_1\}, \Sigma_2\cup\{\beta_2\})$.
\end{definition}
We write $(\ell, q, \bar{\nu})$ for states of
$\mathcal{T}(\automaton(p, \lambda))$ instead of $((\ell, q), \bar{\nu})$, for
conciseness. The bar over the valuation is a visual indicator of the different
domain. 

We say that a play $(\ell_0, q_0, \bar{\nu}_0)(m_0^{(1)}, m_0^{(2)})\cdots$ of
$\game(p, \lambda)$ is
\emph{well-initialized} if $q_0 = p(\ell_0)$ and $\bar{\nu}_0(z)=0$.
A well-initialized play can be seen as a play with a window opening at its
start. Any initial play in $\game(p, \lambda)$ is well-initialized.
Proving statements related to the direct timed window  objective
only for initial plays (rather than well-initialized plays) is too restrictive
to effectively apply them to
to the timed window  objective, as this objective
deals with suffixes. We later define expansions of plays as specific
well-initialized plays.
We write
$\badset=L\times\{\bad\}$ for the set of bad locations.

\subsection{Expanding and projecting plays}\label{section:reduction:mappings}
We prove that any play of $\game$ has an expansion in
$\game(p,\lambda)$, and conversely, any play in $\game(p,\lambda)$
projects to a play in $\game$. This is done by constructing an expansion
mapping and a projection mapping, both of which are shown to
behave well w.r.t.~our objectives (Lemma~\ref{theorem:window_mapping}).

\begin{remark}\label{remark:no_bijection}
  Note that we do not construct a bijection between the set of plays of
  $\game$ and the set of plays of $\game(p, \lambda)$. This cannot be
  achieved naturally due to the additional information encoded in
  the expanded automaton, and notably the presence of bad locations.
  We illustrate this by showing there are some plays of $\game(p, \lambda)$
  that are intuitively indistinguishable if seen as plays of $\game$.
  
  Consider the initial
  location $\ell_\init$ of $\game$, and assume that its priority is odd and its
  invariant is $\true$. Consider the
  initial play $\bar{\pi}_1$ of $\game(p, \lambda)$ where
  the actions $\beta_i$ are used by both players with a delay of $\lambda$
  at the start of the play and then only delay moves are taken in the
  reached bad location, i.e.,
  $\bar{\pi}_1=(\ell, p(\ell), \mathbf{0}^{C\cup \{z\}})
  ((\lambda, \beta_1), (\lambda, \beta_2))\big((\ell, \bad, \bar{\nu})
  ((0, \bot), (0, \bot))\big)^\omega$, where $\bar{\nu}(x)=\lambda$ for
  all $x\in C$ and
  $\bar{\nu}(z)=0$. As the actions $\beta_i$ and $z$ do not exist in
  $\game$, $\bar{\pi}_1$ cannot be discerned from the
  similar play $\bar{\pi}_2$
  of $\game(p, \lambda)$ where instead of using the actions $\beta_i$, delay
  moves were used instead, i.e.,
  $\bar{\pi}_2=(\ell, p(\ell), \mathbf{0}^{C\cup \{z\}})
  ((\lambda, \bot), (\lambda, \bot))((\ell, p(\ell), \bar{\nu}')
  ((0, \bot), (0, \bot)))^\omega$ with $\bar{\nu}'(x)=\lambda$ for all
  $x\in C\cup\{z\}$.

  This motivates using two mappings instead of a bijection to prove
  the correctness of our reduction.
\end{remark}

\subsubsection*{Expansion mapping} The expansion mapping
$\mathsf{Ex}\colon \plays(\game) \to
\plays(\game(p, \lambda))$ between plays of
$\game$ and of $\game(p, \lambda)$ is defined by an inductive
construction.
We construct expansions step by step. The rough idea
is to use the same moves in the play of $\game$ being expanded
and in its expansion in $\game(p, \lambda)$, as long as these moves do not
allow $z$ to
reach $\lambda$ in an odd location of $\game(p, \lambda)$, i.e., as long as
these moves do not make us see a bad window.
In fact the construction addresses how to proceed if a move enabled in $\game$
would allow $z$ to reach $\lambda$ in an odd location. If only one of the
two players, say $\player_i$, suggests a move with a large enough delay for
clock $z$ to reach
$\lambda$, then their adversary $\player_{3-i}$ preempts them and it suffices
to replace
$\player_i$'s move by any valid move with a larger delay than $\player_{3-i}$'s.
However,
if both players suggest moves with too large a delay, the expanded play
goes through a bad location (possibly multiple times) until enough time passes
and one of the two players can use their move (with the remaining delay)
in the expanded game.

Before presenting a formal construction of the expansion mapping, let us
describe the structure of the inductive step of the construction.
We number the different cases in the same order as they appear in the upcoming
formal definition.
Let $\pi = s_0(m_0^{(1)}, m_0^{(2)})\ldots s_{n+1}$ be a play of
$\game$, and assume the expansion of its prefix
$\pi_{|n} = s_0(m_0^{(1)}, m_0^{(2)})\ldots s_{n}$ has already been constructed.
We assume inductively that the last states of $\pi_{|n}$ and its expansion
$\ex(\pi_{|n})$
share the same location of the original TG and that their clock valuations
agree over $C$. Furthermore, we also inductively assume the last state of
$\ex(\pi_{|n})$ is not in a bad location.
Write $\bar{s} = \last(\ex(\pi_{|n})) = (\ell, q, \bar{\nu})$ and
$s = \last(\pi_{|n}) = (\ell, \nu)$. Denote by $d = \delay(m_n^{(1)}, m_n^{(2)})$
the delay of the last pair of moves of the players in $\pi$.

\begin{enumerate}
\item If $q$ is even, the same moves are available in $s$ and $\bar{s}$; the
  expansion can be extended using the pair of moves $(m_n^{(1)}, m_n^{(2)})$.
  \label{itemize:expansion:sketch:1}
  
\item If $q$ is odd, the invariant of the expanded location $(\ell, q)$ prevents
  $z$ from exceeding $\lambda$. We distinguish cases depending on whether the
  delay $d$ allows $z$ to reach $\lambda$.
  \begin{enumerate}
  \item If $\bar{\nu}(z) + d < \lambda$, then one of the players has offered
    a move enabled in $\bar{s}$: this move determines how to extend the play.
  \item Otherwise, $\bar{\nu}(z) + d \geq \lambda$. The construction makes
    the expansion go through location $(\ell, \bad)$. When $(\ell, \bad)$
    is exited, the path goes to $(\ell, p(\ell))$, the invariant of which
    depends on the parity of $p(\ell)$. We treat each case differently.
    \begin{enumerate}
    \item If $p(\ell)$ is even, then the invariant of $(\ell, p(\ell))$ matches
      that of $\ell$. Once $(\ell, \bad)$ is left, we reason similarly
      to case \ref{itemize:expansion:sketch:1}, using the moves with
      whatever delay remains.
    \item If $p(\ell)$ is odd, it may be required to go to a bad location more
      than once if the remaining delay after the first visit to the bad
      location exceeds $\lambda$. Once the remaining delay is strictly less
      than $\lambda$, we can operate as in case 2.a.
    \end{enumerate}
  \end{enumerate}
\end{enumerate}

The formal construction of the expansion mapping follows. The inductive
hypothesis in this construction of
$\ex: \plays(\game)\to\plays(\game(p, \lambda))$ is the following:
for all finite plays $\pi \in \mathsf{Plays}_\mathit{fin}$, using $(\ell,\nu)$ to denote $\last(\pi)$ and $(\ell',q,\bar{\nu})$ to denote
$\mathsf{last}(\mathsf{Ex}(\pi))$, then
$\ell' = \ell$, $q \neq \mathsf{bad}$ and $\bar{\nu}_{|C} = \nu$.
We proceed by induction on the number of moves along a play.

The base case consists of plays of $\game$ with no moves, i.e.,
plays in which there is a single state.
For any play $(\ell, \nu)$, we set
$\mathsf{Ex}((\ell, \nu))$ to be the play $(\ell, p(\ell), \bar{\nu})$ of
$\game(p, \lambda)$ consisting of a single state, where
$\bar{\nu}_{|C} = \nu$ and $\bar{\nu}(z)=0$. The inductive hypothesis is
verified: the states $(\ell, \nu)$ and $(\ell, p(\ell), \bar{\nu})$ share the
same location of $\automaton$, their clock valuations agree over $C$ and
$(\ell, p(\ell))$ is not a bad location.

Next we assume that expansions are defined for all plays with $n$ moves.
Fix $\pi = s_0(m_0^{(1)}, m_0^{(2)})\ldots s_{n+1}$
a play with $n+1$ moves and assume the expansion of its prefix
$\pi_{|n} = s_0(m_0^{(1)}, m_0^{(2)})\ldots s_{n}$ has already been constructed.
Write $s = (\ell, \nu) = \last(\pi_{|n})$, $s' = (\ell', \nu') = \last(\pi)$,
$\bar{s} = (\ell, q, \bar{\nu}) = \last(\ex(\pi_{|n}))$ and for $i\in\{1, 2\}$,
$m_n^{(i)} = (d_n^{(i)}, a_n^{(i)})$. We assume w.l.o.g.~that
$s\xrightarrow{m_n^{(1)}}s'$ holds: the induction step can be done similarly
by exchanging the roles of the players if $s\xrightarrow{m_n^{(1)}}s'$ does
not hold. This assumption implies $\player_1$ is faster or as fast
as $\player_2$. If $\player_2$ were strictly faster, then
$s\xrightarrow{m_n^{(2)}}s'$ would hold, in turn implying
that $\nu'(\gamma) = \nu(\gamma) + d_n^{(2)}$ ($\gamma$ cannot be reset).
However, since $s\xrightarrow{m_n^{(1)}}s'$ holds, it follows that
$\nu'(\gamma) = \nu(\gamma) + d_n^{(1)}$, contradicting the assumption
that $\player_2$ is faster. In other words, we must have
$d_n^{(1)}\leq d_n^{(2)}$.
We separate the construction in multiple cases.
\begin{enumerate}
\item \label{expansion:q:even}
  If $q$ is even, the moves $m_n^{(1)}$ and $m_n^{(2)}$ are enabled in
  $\bar{s}$ by construction. Indeed, $\nu$ and $\bar{\nu}$ agree over $C$, and
  we have $I(\ell) = I'((\ell, q))$ and for any
  outgoing edge $(\ell, g, a, D, \tilde{\ell})$ of $\ell$ in $\automaton$
  there is an edge
  $((\ell, q), g, a, D\cup\{z\}, (\tilde{\ell}, p(\tilde{\ell})))$
  in $\automaton(p, \lambda)$ with the same guard and which resets $z$.
  We distinguish cases following whether a delay is taken
  or not.

  If $m^{(1)}_n$ is a delay move, we set $\mathsf{Ex}(\pi) =
  \mathsf{Ex}(\pi_{|n})(m_n^{(1)}, m_n^{(2)}) (\ell, q, \bar{\nu} + d)$. This
  play is well-defined: $q$ is even, thus $\ell$ and $(\ell, q)$ share the
  same invariant and support the same delay moves.
  The inductive hypothesis is verified: $\ell' = \ell$ because
  a transition labeled by a delay move was taken, $q\neq \bad$ and
  $\nu' = \nu + d = \bar{\nu}_{|C} + d$.
  
  Otherwise, $m^{(1)}_n$ is not a delay move and is associated with an edge of
  the TA. We set
  $\mathsf{Ex}(\pi) =
  \mathsf{Ex}(\pi_{|n})(m_n^{(1)}, m_n^{(2)}) (\ell', p(\ell'), \bar{\nu}')$
  with $\bar{\nu}'_{|C} = \nu'$ and $\bar{\nu}'(z) = 0$.
  This is a well-defined play, owing to the edges recalled above.
  It is not difficult to verify that the inductive hypothesis is satisfied.

  Note that $z$ is reset in the second case. It may be the
  case that $\player_2$ is not responsible for the last
  transition in the expansion despite being responsible for the last
  transition of $\pi$.
  In other words, it is possible that
  both $s\xrightarrow{m_n^{(1)}}s'$ and $s\xrightarrow{m_n^{(2)}}s'$
  hold, but that $\bar{s}\xrightarrow{m _n^{(2)}}\last(\ex(\pi))$ does not
  hold. This occurs whenever the moves of both players share the same target
  state in the base TG
  but one player uses a delay move and the other a move with an action. We
  choose to have $\player_1$'s move be responsible for the transition in the
  expansion in this case.
  This choice is for technical reasons related to blamelessness.
\item \label{expansion:q:odd}
  If $q$ is odd, one or both of the moves $m^{(1)}_n$ or $m^{(2)}_n$ may not
  be enabled
  in $\bar{s}$ due to the different invariant.
  Recall that $I'((\ell, q))=(I(\ell)\land z\leq \lambda)$ and for any
  outgoing edge $(\ell, g, a, D, \tilde{\ell})$ of $\ell$ in $\automaton$
  there is an edge $((\ell, q), (g\land z<\lambda), a, D, (\tilde{\ell},
  \min\{q, p(\tilde{\ell})\}))$  in $\automaton(p, \lambda)$.
  If $a\in\Sigma$, a move $(t, a)$ is disabled in state $\bar{s}$ if
  $\bar{\nu}(z) + t \geq \lambda$. We reason as follows, depending on whether
  a delay of $d$ allows clock $z$ to reach $\lambda$ from the state $\bar{s}$.
  \begin{enumerate}
  \item \label{expansion:q:odd:sub1}
    Assume $\bar{\nu}(z) + d < \lambda$. Then $m_n^{(1)}$ is
    enabled in $\bar{s}$ (recall we assume $\player_1$ is responsible for the
    last transition of $\pi$). 
    To ensure the $\player_2$-selected move in the expansion is
    enabled in $\bar{s}$, we alter $m_n^{(2)}$ it if its delay is too large:
    let $\tilde{m}_n^{(2)} = m_n^{(2)}$ if $\bar{\nu}(z) + d_n^{(2)} < \lambda$
    and $\tilde{m}_n^{(2)}= (\lambda - \bar{\nu}(z), \beta_2)$ otherwise.

    If $m_n^{(1)}$ is a delay move, define
    $\ex(\pi) =
    \ex(\pi_{|n})(m_n^{(1)}, \tilde{m}_n^{(2)})(\ell, q, \bar{\nu} + d)$.
    This is a well-defined play: $\bar{\nu} + d \models I'((\ell, q))$ holds
    because $I'((\ell, q))=(I(\ell)\land z\leq\lambda)$, $\nu$ and
    $\bar{\nu}$ coincide on $C$ and the move $m_n^{(1)}$ is available in $s$, 
    and $\bar{\nu}(z) + d < \lambda$.
    Otherwise, if $m_n^{(1)}$ is not a delay move, define
    $\ex(\pi) = \ex(\pi_{|n})(m_n^{(1)}, \tilde{m}_n^{(2)})
    (\ell', \min\{q, p(\ell')\}, \bar{\nu}')$ where $\bar{\nu}'_{|C} = \nu'$
    and $\bar{\nu}'(z) = \bar{\nu}(z) + d$. By definition of the edges
    recalled above, and because $\nu$ and $\bar{\nu}$ coincide on $C$ and the
    move $m_n^{(1)}$ is available in $s$, we conclude that $\mathsf{Ex}(\pi)$
    is a well-defined play.
    In either case, the inductive hypothesis is satisfied.
    
  \item Otherwise, assume $\bar{\nu}(z) + d \geq \lambda$. In this case,
    a bad location appears along $\ex(\pi)$.
    Denote by $t=\lambda-\bar{\nu}(z)$ the time left before the
    current window becomes a bad window. For any non-negative real $r$, we
    write $\bar{\nu}_r= \mathsf{reset}_{\{z\}}(\bar{\nu}+r)$ for the
    clock valuation obtained by shifting $\bar{\nu}$ by $r$ time units
    and then resetting $z$, $b^{(i)}_r$ for the move
    $(r, \beta_i)$ and $m_n^{(i)}-r$ for  $(d_n^{(i)}-r, a_n^{(i)})$, i.e., the
    move $m_n^{(i)}$ with a delay shortened by $r$ time units.

    Recall, when $(\ell, \bad)$ is left, by definition of edges
    of $\automaton(p, \lambda)$, the location
    $(\ell, p(\ell))$ is entered. Depending on the parity of $p(\ell)$, the
    invariant of $(\ell, p(\ell))$ is different.
    Thus, there are two cases to consider: $p(\ell)$ is even and
    $p(\ell)$ is odd.

    \begin{enumerate}
    \item \label{expansion:q:odd:subsub1} If $p(\ell)$ is even, we set
      $\mathsf{Ex}(\pi)$ to be the play
    \begin{equation}\label{eqn:expansion:even}
      \mathsf{Ex}(\pi_{|n})
      (b_t^{(1)}, b_t^{(2)} )
      (\ell, \bad, \bar{\nu}_t)
      (b_0^{(1)}, b_0^{(2)} )
      (\ell, p(\ell), \bar{\nu}_t)
      (m_n^{(1)}-t, m_n^{(2)}-t)
      (\ell', p(\ell'), \bar{\nu}'),
    \end{equation}
    where $\bar{\nu}'_{|C}=\nu'$ and $\bar{\nu}'(z)=0$ if $a_n^{(1)}\in\Sigma_1$
    and $\bar{\nu}'(z)=d-t$ otherwise ($z$ is reset if $\player_1$'s action is
    not a delay). We obtain this expansion by first using actions $\beta_i$
    with the delay $t$ to enter a bad location, then the actions
    $\beta_i$ immediately again to exit the bad location and finally use
    the original moves of the players, but with an offset of $t$ times
    units ($t$ time units passed before entering the bad location).
    This expansion is a play of
    $\game(p, \lambda)$: the moves $b_t^{(1)}$ and $b_t^{(2)}$ are enabled in
    $(\ell, q, \bar{\nu})$ as
    $\bar{\nu}+t\models I'((\ell, q))$ (because $\nu + d\models I(\ell)$ as
    $m_n^{(1)}$ is enabled in $(\ell, \nu)$, $\bar{\nu}(z) + t = \lambda$ and
    $I'((\ell, q))=(I(\ell)\land z\leq\lambda)$), and lead to the
    state $(\ell, \bad, \bar{\nu}_t)$ (recall edges entering bad locations
    reset $z$).
    The moves $b_0^{(1)}$ and $b_0^{(2)}$ are enabled in
    $(\ell, \bad, \bar{\nu}_t)$ due to the edges
    $((\ell, \bad), \true, \beta_i, \varnothing, (\ell, p(\ell)))$ for
    $i\in\{1, 2\}$ of
    $\automaton(p, \lambda)$. One can argue the moves $m_n^{(1)}-t$ and
    $m_n^{(2)}-t$ are enabled in $(\ell, p(\ell), \bar{\nu}_t)$ using
    the same arguments as in case \ref{expansion:q:even}.
    The inductive hypothesis is preserved in this
    case.

  \item Whenever $p(\ell)$ is odd, the invariant of $(\ell, p(\ell))$
    implies $z\leq \lambda$.
    Let $\mu$ denote the integral part of
    $\frac{\bar{\nu}(z) + d}{\lambda}$; $\mu$ represents
    the number of bad windows we detect with our construction during delay
    $d$. In a nutshell,
    we divide the single step in the original TG into
    $2\mu+1$ steps in the expansion: we enter and exit the bad location
    $\mu$ times and finally play the original move in the end with the
    remaining delay. It may be necessary to modify $\player_2$'s move due to
    the invariant of the odd location $(\ell, p(\ell))$ implying
    $z \leq \lambda$ and the guard of its outgoing edges labeled by actions in
    $\Sigma$ implying $z<\lambda$.
    To this end, let $\tilde{m}_n^{(2)} = m_n^{(2)}$ if
    $d_n^{(2)} - t < \mu\lambda$ and
    $\tilde{m}_n^{(2)} = (\mu\lambda + t, \beta_2)$ otherwise.
    If $\mu=1$, we get an expansion similar to the previous case. We set
     $\mathsf{Ex}(\pi)$ to be 
     \begin{equation*}
       \mathsf{Ex}(\pi_{|n})
      (b_t^{(1)}, b_t^{(2)} )
      (\ell, \bad, \bar{\nu}_t)
      (b_0^{(1)}, b_0^{(2)} )
      (\ell, p(\ell), \bar{\nu}_t)
      (m_n^{(1)}-t, \tilde{m}_n^{(2)}-t)
      (\ell', q', \bar{\nu}'),
    \end{equation*}
    where $q'= \min\{p(\ell), p(\ell')\}$,
    $\bar{\nu}'_{|C}=\nu'$ and $\bar{\nu}'(z) = d-t$. That is
    indeed a play of $\game(p, \lambda)$: moves $b^{(i)}_r$ ($i\in\{1, 2\}$
    and $r\in\{0, t\}$) are enabled for the same reasons as the previous case
    and the other moves are valid for the same reasons as in
    case 2.a: $d-t = d - \lambda +\bar{\nu}(z) < \lambda$ because $\mu =1$.
    For $\mu\geq 2$, we define $\mathsf{Ex}(\pi)$ as
    \begin{equation*}
    \begin{array}{l}
      \mathsf{Ex}(\pi_{|n})
      (b_t^{(1)}, b_t^{(2)} )
      (\ell, \bad, \bar{\nu}_t) 
      (b_0^{(1)}, b_0^{(2)} )  
      \\
      \left.
        \begin{array}{l}
          (\ell, p(\ell), \bar{\nu}_t)
          (b_\lambda^{(1)}, b_\lambda^{(2)} )
          (\ell, \bad, \bar{\nu}_{t+\lambda})
          (b_0^{(1)}, b_0^{(2)} )
          \\\phantom{aaaaaaaaaaaaaaaaaaaaa}\vdots
          \\
          (\ell, p(\ell), \bar{\nu}_{t+(\mu-2)\lambda})
          (b_\lambda^{(1)}, b_\lambda^{(2)} )
          (\ell, \bad, \bar{\nu}_{t+(\mu-1)\lambda})
          (b_0^{(1)}, b_0^{(2)} )
        \end{array}
        \right\} \mu-1 \text{ steps}
      \\
      (\ell, p(\ell), \bar{\nu}_{t'})(m_n^{(1)}-t', \tilde{m}_n^{(2)}-t')
      (\ell', q', \bar{\nu}'),
    \end{array}
  \end{equation*}
    where $q'=\min\{p(\ell), p(\ell')\}$, $t'= t + (\mu-1)\lambda$ represents
    the time spent repeatedly  entering and exiting the bad location,
    $\bar{\nu}'_{|C}=\nu'$ and $\bar{\nu}'(z) = d - t'$. This is a well-defined
    play of $\game(p, \lambda)$ for the reasons argued above. Extending the
    expansion this way preserves the inductive hypothesis.
    \end{enumerate}
  \end{enumerate}
\end{enumerate}

This construction generalizes to infinite plays.
In some cases, an expansion of a finite play  may contain more steps. However,
this is only the case when a bad location appears in an expansion. If a finite play
and its expansion share the same number of steps, we say that they are
\textit{coherent}.
The expansion mapping preserves time-convergence and divergence for infinite
paths: the sum of delays are identical in a play and its expansion.

The behavior of the expansion mapping
w.r.t.~suffixes and well-initialized plays is of interest for studying
the connection between the non-direct
timed window  objective on the base TG and the co-Büchi objective on
the expanded TG. Given a (finite or infinite) play
$\pi= (\ell_0, \nu_0)(m_0^{(1)}, m_0^{(2)})\ldots$,
a suffix of $\mathsf{Ex}(\pi)$ is not necessarily the
expansion of a suffix of $\pi$. However, any
well-initialized suffix of $\mathsf{Ex}(\pi)$ can be shown to be the expansion
of a delayed suffix $\pi_{n\to}^{+d}$ for some $n\in\IN$ and
$d\in[0, \delay(m_n^{(1)}, m_n^{(2)})]$.
A well-initialized suffix of an expansion starts whenever an even location
is left through an edge (case 1) or a bad location is left through an edge
(case 2.b). In the former case, the suffix of the expansion is the expansion
of a suffix $\pi_{n\to}$ by construction (the last state of the expansion,
viewed as a play, is well-initialized). In the latter case, a delayed suffix
may be required. This is observable with equation \eqref{eqn:expansion:even}
(and is similar in other cases involving bad locations):
the suffix
$(\ell, p(\ell), \bar{\nu}_t)(m_n^{(1)}-t, m_n^{(2)}-t)(\ell', p(\ell'), \bar{\nu}')$
of the expansion under construction
is the expansion of the shifted suffix $\pi_{n\to}^{+t}$ of the finite play
under consideration and is well-initialized.

\subsubsection*{Projection mapping} The counterpart to the expansion mapping
is the projection mapping
$\mathsf{Pr}\colon \plays(\game(p, \lambda)) \to \plays(\game)$.
The projection mapping
removes window information in any play in $\game(p, \lambda)$ to obtain
a play in $\game$. Any action $\beta_1$ or $\beta_2$ is replaced by
the action $\bot$.
Formally, we define the projection mapping over finite and infinite plays as
follows.

For any (finite or infinite) play
$\bar{\pi} = (\ell_0, q_0, \bar{\nu}_0)
((d_0^{(1)}, a_0^{(1)}), (d_0^{(2)}, a_0^{(2)}))\ldots \in
\plays(\game(p, \lambda))$, we set $\pr(\bar{\pi})$ to be the sequence
$\pi = (\ell_0, (\bar{\nu}_0)_{|C}) (\tilde{m}_0^{(1)}, \tilde{m}_0^{(2)})\ldots$ where for all $i\in\{1, 2\}$ and all $j$,
$\tilde{m}_j^{(i)}=(d_j^{(i)}, a_j^{(i)})$ if $a_j^{(i)}\notin\{\beta_1, \beta_2\}$
and $\tilde{m}_j^{(i)}=(d_j^{(i)}, \bot)$ otherwise. This sequence is indeed
a well-defined play: any move $(d, a)$ enabled in a state
$\bar{s}_j = (\ell_j, q_j, \bar{\nu}_j)$ such that
$a\notin\{\beta_1, \beta_2\}$ is enabled in
$s_j = (\ell_j, (\bar{\nu}_j)_{|C})$. If the expanded location of
$\bar{s}_j$ is bad, the only such move is $(0, \bot)$. If it is not bad,
guards of outgoing edges and invariants in $\automaton(p, \lambda)$ are
either the same or strengthened from their counterpart
in $\automaton$, i.e., if the constraints are
verified in the expanded TA, they must be in the original one.
Furthermore, as edges unrelated to bad states are derived from the edges of
the original TA, this ensures that
$s_j\in\delta(s_{j-1}, \tilde{m}_{j-1}^{(1)}, \tilde{m}_{j-1}^{(2)})$
for all $j>0$ (where $\delta$ is the joint destination function).

The projection mapping preserves time-divergence. Unlike the expansion mapping,
projecting a finite play does not alter the amount of moves.
This mapping respects suffixes: for all finite plays $\bar{\pi}$
and (finite or infinite) plays $\bar{\pi}'$ of $\game(p, \lambda)$ such
that $\bar{\pi}((d^{(1)}, a^{(1)}), (d^{(2)}, a^{(2)}))\bar{\pi}'$
is a well-defined play, we have
$\mathsf{Pr}(\bar{\pi}(d^{(1)}, a^{(1)}), (d^{(2)}, a^{(2)})\bar{\pi}')=
\mathsf{Pr}(\bar{\pi})(\tilde{m}^{(1)}, \tilde{m}^{(2)})\mathsf{Pr}(\bar{\pi}')$,
where the move $\tilde{m}^{(i)}$ is $(d^{(i)}, a^{(i)})$ if
$a^{(i)}\notin\{\beta_1, \beta_2\}$
and $\tilde{m}^{(i)}=(d^{(i)}, \bot)$ otherwise.
We refer to this property as \textit{suffix compatibility}.

\subsubsection*{Objective preservation}
We now establish the main theorem of this section: a play of $\game$ satisfies
the (resp.~direct) timed window objective if and only if its expansion
satisfies the co-Büchi (resp.~safety) objective over bad locations; and
a play of $\game(p, \lambda)$ satisfies the co-Büchi (resp.~safety) objective
over bad locations if and only if its projection satisfies the (resp.~direct)
timed window objective.

\begin{lemma}\label{theorem:window_mapping}
  The following assertions hold.
  For all time-divergent plays $\pi\in\plays_\infty(\game)$:
  \begin{enumerate}[{A}.1.]
  \item $\pi\in\mathsf{DTW}(p, \lambda)$
    if and only if $\mathsf{Ex}(\pi)\in\mathsf{Safe}(\badset)$;
    \label{item:expansion:1}
  \item $\pi\in\mathsf{TW}(p, \lambda)$
    if and only if $\mathsf{Ex}(\pi)\in\mathsf{coB\ddot{u}chi}(\badset)$.
    \label{item:expansion:2}
  \end{enumerate}

  For all well-initialized time-divergent plays
  $\bar{\pi}\in\plays_\infty(\game(p, \lambda))$:
  \begin{enumerate}[{B}.1.]
  \item 
    $\bar{\pi}\in\mathsf{Safe}(\badset)$
    if and only if $\mathsf{Pr}(\bar{\pi})\in\mathsf{DTW}(p, \lambda)$;
    \label{item:projection:1}
  \item
    $\bar{\pi}\in\mathsf{coB\ddot{u}chi}(\badset)$
    if and only if $\mathsf{Pr}(\bar{\pi})\in\mathsf{TW}(p, \lambda)$.
    \label{item:projection:2}    
  \end{enumerate}
\end{lemma}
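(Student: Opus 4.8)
The plan is to first isolate the semantic invariant that makes the construction correct, then combine it with the inductive property (Lemma~\ref{lemma:inductive_prop}) to prove the four equivalences in parallel, handling the expansion statements (A) and the projection statements (B) by the same reasoning and reducing the co-Büchi statements (A.2, B.2) to the safety statements (A.1, B.1). The key object throughout is the \emph{currently tracked window}: the window opened at the last step where $z$ was reset (leaving an even location) or where a bad location was exited.

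I would establish, by induction along any \emph{well-initialized} play $\bar{\pi}$ of $\game(p,\lambda)$ (and, on the expansion side, directly from the construction of $\mathsf{Ex}$), that whenever the play sits in a non-bad state $(\ell, q, \bar{\nu})$ the component $q$ equals $\min_{m\leq k}p(\ell_k)$, where $m$ is the opening step of the tracked window, $\bar{\nu}(z)$ equals the time elapsed since step $m$, and $\bar{\nu}_{|C}$ agrees with the underlying valuation. The crux is that the edges and invariants of $\automaton(p,\lambda)$ are designed so that this correspondence is maintained \emph{automatically} by every well-initialized play: odd $q$ signals an open window, even $q$ a closed one, and the constraints on $z$ force a move into a bad location exactly when $z$ reaches $\lambda$ in an odd location. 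Consequently, for a time-divergent play, a bad location is entered if and only if the tracked window stays open for $\lambda$ time units, that is, exactly when it is a bad window. Time-divergence is essential here: it guarantees that $z$ can actually reach $\lambda$, turning the absence of a good window into the presence of a bad one.

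Using this invariant, $\mathsf{Ex}(\pi)$ (resp.~$\bar{\pi}$) visits a bad location if and only if some tracked window fails to close within $\lambda$ time units. It then remains to show that \emph{every tracked window closes in time} is equivalent to $\pi\in\mathsf{DTW}(p,\lambda)$ (resp.~$\mathsf{Pr}(\bar{\pi})\in\mathsf{DTW}(p,\lambda)$), which requires the window opened at \emph{every} step to close in time; one implication is immediate. For the other, I would cover the windows opened at non-tracked steps in two ways: windows opened at steps strictly between the opening and the closing of a tracked window close in time by the inductive property (Lemma~\ref{lemma:inductive_prop}), since the tracked window closing in time forces all windows opened in its span to close in time; and windows opened while the play rests in an even location close immediately, because such a location carries even priority (it is precisely where the preceding window closed). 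These cases leave no step uncovered, so tracking a single window loses no bad window, which settles A.1 and B.1.

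For the prefix-independent statements I would exploit that $\mathsf{DTW}$ is suffix-closed and that suffixes of expansions and projections behave well: every well-initialized suffix of $\mathsf{Ex}(\pi)$ equals $\mathsf{Ex}(\pi_{n\to}^{+d})$ for suitable $n,d$ (as noted after the construction), and $\mathsf{Pr}$ is suffix-compatible. If $\pi\in\mathsf{TW}$, then $\pi_{n\to}\in\mathsf{DTW}$ for some $n$; picking a window-opening point of $\mathsf{Ex}(\pi)$ at or beyond step $n$ yields a well-initialized suffix $\mathsf{Ex}(\pi_{n'\to}^{+d})$ with $\pi_{n'\to}^{+d}\in\mathsf{DTW}$ (passing between delayed and non-delayed suffixes via Lemma~\ref{lemma:discrete_twpo}), which lies in $\mathsf{Safe}(\badset)$ by A.1, so only finitely many bad locations occur, i.e.~$\mathsf{Ex}(\pi)\in\mathsf{coB\ddot{u}chi}(\badset)$; conversely, after the last bad-location visit the remaining well-initialized suffix is a safe expansion of some $\pi_{n'\to}^{+d}$, so that delayed suffix is in $\mathsf{DTW}$ by A.1, whence $\pi_{n'+1\to}\in\mathsf{DTW}$ and $\pi\in\mathsf{TW}$; B.2 follows identically from suffix-compatibility of $\mathsf{Pr}$ and B.1. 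I expect the principal difficulty to be the rigorous statement and maintenance of the invariant through the many cases of the $\mathsf{Ex}$ construction (even, odd, and bad locations, including repeated bad-location visits during a single long delay) together with the index bookkeeping in the coverage argument, where every window opened at an \emph{arbitrary} step — not merely at tracked opening points — must be certified good by combining Lemma~\ref{lemma:inductive_prop} with immediate closure at even-priority locations, leaving no gap; the suffix manipulations for the co-Büchi cases are then routine.
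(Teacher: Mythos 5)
Your proposal follows essentially the same route as the paper's proof: it establishes the safety/DTW equivalences (A.1, B.1) via the invariant tying the priority component $q$ and the clock $z$ to a single tracked window, covers all remaining windows by combining the inductive property (Lemma~\ref{lemma:inductive_prop}) with immediate closure in even locations and time-divergence to turn the absence of good windows into a visit to a bad location, and then derives the co-B\"uchi statements (A.2, B.2) from the safety ones via well-initialized suffixes and suffix-compatibility, exactly as the paper does. The only small point to patch when writing it out: in A.2 a window-opening point of $\mathsf{Ex}(\pi)$ at or beyond step $n$ need not exist (e.g., the expansion may remain in one even location forever), but in that case no bad location occurs beyond that point either, so the co-B\"uchi conclusion holds trivially.
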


The form of this result is due to the lack of a
bijection between the sets of plays of a TG and its expansion
(Remark~\ref{remark:no_bijection}). 
This lemma essentially follows from the construction of $\automaton(p, \lambda)$
and the definitions of the expansion and projection mappings.

\begin{proof}
  We start with \textbf{Item A.\ref{item:expansion:1}}. Fix a time-divergent play
  $\pi=(\ell_0, \nu_0)(m_0^{(1)}, m_0^{(2)})\ldots\in\plays_\infty(\game)$
  and write
  $\bar{\pi}=(\ell'_0, q_0, \bar{\nu}_0)(\tilde{m}_0^{(1)}, \tilde{m}_0^{(2)})\ldots$ for its expansion. Assume $\pi$ satisfies
  the direct timed window  objective. We establish that $\bar{\pi}$ is
  safe and all of its prefixes are coherent with their expansion
  using an inductive argument. Intuitively,
  for the first step, we show that if the window opened at step $0$ closes at
  step $n$, then the $n$th location of $\bar{\pi}$ is even and is reached in
  less than $\lambda$ time units.
  
  Since $\pi\in\mathsf{TGW}(p, \lambda)$, there is
  some $n_0$ such that $\nu_{n_0}(\gamma)-\nu_0(\gamma)<\lambda$,
  $\min_{0\leq k < j}p(\ell_k)$ is odd for all $0\leq j <n_0$ and
  $\min_{0\leq k\leq n_0}p(\ell_k)$ is even.
  Since $\bar{\nu}_0(z)=0$, $\bar{\pi}$ and $\pi$ are coherent
  up to step $n_0$ as $z$ does not reach $\lambda$ (i.e., no bad locations can
  occur up to step $n_0$).
  By construction of the expansion mapping (see
  case \ref{expansion:q:odd} of the construction),
  $q_{n_0} =\min_{0\leq k\leq n_0}p(\ell_k)= p(\ell_{n_0})$ and $p(\ell_{n_0})$
  is even.
  From there, there are two possibilities: either location
  $\ell_{n_0}$ is never left (only delays are taken from there) or
  after some delay moves, location $\ell_{n_0}$ is exited. In the first case,
  the safety objective is trivially satisfied as the location
  $(\ell_{n_0}, p(\ell_{n_0}))$ is never left in the expansion and
  no bad locations were visited beforehand.
  Otherwise, some edge is traversed (for the first time since step $n_0$)
  at some step $j_0$.
  Then, $\bar{\nu}_{j_0+1}(z) = 0$, $q_{j_0+1} = p(\ell_{j_0+1})$ and no bad
  locations appear in the first $j_0$ steps of $\bar{\pi}$.

  We can repeat the first argument in position $j_0+1$ because
  $\pi\in\mathsf{DTW}(p, \lambda)$ and thus
  $\pi_{j_0+1\to}\in\mathsf{TGW}(p, \lambda)$.
  We conclude there is some
  $n_1\geq j_0+1$ such that $q_n\neq \bad$ for $j_0+1\leq n\leq n_1$ and
  $q_{n_1}$ is even. Once more, we separate cases following if $\ell_{n_1}$
  is left through an edge.
  Iterating this argument shows that no bad locations appear along $\bar{\pi}$.

  Conversely, assume $\pi$ does not satisfy the direct window  objective.
  Let $j_0$ be the smallest index $j$ such that
  $\pi_{j\to}\notin\mathsf{TGW}(p, \lambda)$ (Lemma~\ref{lemma:discrete_twpo}).
  We can argue, using a similar
  inductive argument as above, that $\pi$ and $\bar{\pi}$ are
  coherent up to step $j_0$, no bad locations occur up to
  step $j_0$, and that
  $q_{j_0} = p(\ell_{j_0})$ and $\bar{\nu}_{j_0}(z) = 0$. In the sequel, we argue
  that a bad location is entered using the fact that there is no good window
  at step $j_0$.
  
  The negation of $\mathsf{TGW}(p, \lambda)$ yields that for all
  $j\geq j_0$, if $\nu_{j}(\gamma)-\nu_{j_0}(\gamma)<\lambda$, then
  $\min_{j_0\leq k\leq j}p(\ell_k)$ is odd.
  There is some $j$ such that $\nu_{j}(\gamma)-\nu_{j_0}(\gamma)\geq\lambda$
  as $\pi$ is assumed to be time-divergent. Write $j_1$ for the
  smallest such $j$. As $j_1$ is minimal, 
  $\nu_{j_1-1}(\gamma)-\nu_{j_0}(\gamma) <\lambda$ holds.
  It follows from the above that
  $q_{j_1-1}$ is odd as $q_{j_1-1}=\min_{j_0\leq k\leq j_1}p(\ell_k)$ and that there are no resets of $z$ between steps
  $j_0$ and $j_1-1$ (resets of $z$ require an even location or a bad location),
  hence $\bar{\nu}_{j_1-1}(z) = \nu_{j_1-1}(\gamma)-\nu_{j_0}(\gamma)$.
  The delay $d_{j_1-1}= \delay(m_{j_1-1}^{(1)}, m_{j_1-1}^{(2)})$ is such that
  $\bar{\nu}_{j_1-1}(z) + d_{j_1-1} =\nu_{j_1}(\gamma)-\nu_{j_0}(\gamma)\geq\lambda$;
  the definition of the expansion function redirects $\bar{\pi}$ to a
  bad location (case~\ref{expansion:q:odd}.b of the expansion definition).
  This shows that $\bar{\pi}$ does not satisfy $\safe(\badset)$.

  Let us move on to \textbf{Item~A.\ref{item:expansion:2}}. Fix a time-divergent play
  $\pi=(\ell_0, \nu_0)(m_0^{(1)}, m_0^{(2)})\ldots\in\plays_\infty(\game)$
  and write
  $\bar{\pi} = (\ell_0', q_0, \bar{\nu}_0)(\tilde{m}_0^{(1)}, \tilde{m}_0^{(2)})\ldots$ for
  its expansion.
  Assume $\bar{\pi}$ does not
  satisfy the objective $\mathsf{coB\ddot{u}chi}(\badset)$, i.e.,
  there are infinitely many occurrences
  of bad locations along $\bar{\pi}$. We establish that
  $\pi\notin\mathsf{TW}(p, \lambda)$, using Item~A.\ref{item:expansion:1}.

  It suffices to show that for infinitely
  many $n$, there is some $d$ such that
  $\pi_{n\to}^{+d}\notin\mathsf{DTW}(p, \lambda)$. Indeed,
  Lemma~\ref{lemma:discrete_twpo} and this last assertion imply that
  for infinitely many $n$, $\pi_{n\to}\notin\mathsf{DTW}(p, \lambda)$, hence,
  there are infinitely many bad windows along $\pi$.
  In turn, this property ensures no suffix of $\pi$ satisfies the direct timed
  window  objective, i.e., $\pi\notin\mathsf{TW}(p, \lambda)$.
  
  Recall a well-initialized suffix of $\bar{\pi}$ is the expansion of some
  $\pi_{n\to}^{+d}$ and a well-initialized suffix always follows after a bad
  location. Bad locations are assumed to occur infinitely often
  along $\bar{\pi}$, therefore $\bar{\pi}$ has infinitely many
  well-initialized suffixes. It follows from $\bar{\pi}\in\cobuchi(\badset)$
  that there are infinitely many
  well-initialized suffixes of $\bar{\pi}$ that violate $\safe(\badset)$.
  Therefore, there are infinitely
  many $n$ such that $\ex(\pi_{n\to}^{+d})$ does not satisfy $\safe(\badset)$
  for some $d$, because
  each step of $\pi$ induces finitely many visits to a bad
  location in the expansion by construction (in other words, there are
  finitely many well-initialized suffixes of $\bar{\pi}$ per step in $\pi$).
  From Item~A.\ref{item:expansion:1}, there are infinitely many $n$
  such that $\pi_{n\to}^{+d}$ does not satisfy the direct window  objective
  for some $d$. This ends this direction of the proof.

  Conversely, assume $\bar{\pi}$ satisfies the objective
  $\mathsf{coB\ddot{u}chi}(\badset)$. If $\bar{\pi}$ is safe, then
  it follows from Item~A.\ref{item:expansion:1} that
  $\pi\in\mathsf{DTW}(p, \lambda)\subseteq \mathsf{TW}(p, \lambda)$.
  If $\bar{\pi}$ is unsafe, there is a well-initialized suffix of
  $\bar{\pi}$ that is safe, i.e., the suffix of $\bar{\pi}$  starts after
  the last occurrence of a bad location. This suffix is of the form
  $\ex(\pi_{n\to}^{+d})$ for some $n$ and $d$. By
  Item~A.\ref{item:expansion:1},
  $\pi_{n\to}^{+d}\in\mathsf{DTW}(p, \lambda)$, which implies
  that $\pi_{(n+1)\to}\in\mathsf{DTW}(p, \lambda)$. Thus, we have
  $\pi\in\mathsf{TW}(p, \lambda)$, which ends the proof of this item.
  
  Let us proceed to \textbf{Item B.\ref{item:projection:1}}.
  Let $\bar{\pi} = (\ell_0, q_0, \bar{\nu}_0)(m_0^{(1)}, m_0^{(2)})\cdots
  \in\plays_\infty(\game(p, \lambda))$ be a time-divergent
  well-initialized play and write $\pi$ for its projection.
  Assume $\bar{\pi}$ satisfies the objective $\mathsf{Safe}(\badset)$ and
  let us prove that $\pi\in\mathsf{DTW}(p, \lambda)$.
  It suffices to establish that for all $n\in\IN$,
  $\pi_{n\to}\in \mathsf{TGW}(p, \lambda)$ (Lemma~\ref{lemma:discrete_twpo}).
  First, we argue there is an even location along $\bar{\pi}$. Then
  we establish that when the first even location is entered in $\bar{\pi}$,
  the window opened at step 0 in $\pi$ closes.
  We conclude, using the inductive property of windows and an inductive
  argument, that $\pi$ satisfies the direct window objective.
  
  Safety and divergence of $\bar{\pi}$ ensure that an even location appears
  along $\bar{\pi}$.
  Assume that there are no even locations along $\bar{\pi}$. Then
  every location appearing in $\bar{\pi}$ must be odd by safety of $\bar{\pi}$.
  Thus $z$ cannot be reset as it requires
  exiting an even location or entering a bad location.
  The invariant $z\leq \lambda$ of odd locations would prevent
  time-divergence of $\bar{\pi}$, which would be contradictory.
  Thus, there must be some even location along $\bar{\pi}$.

  Let $n_0$ denote the smallest index $n$ such that $q_n$ is even. We establish
  that the window opened at step $0$ in $\pi$ closes at step $n_0$.
  It must hold that
  $\bar{\nu}_{n_0}(z) = \bar{\nu}_{n_0}(\gamma)-\bar{\nu}_{0}(\gamma) < \lambda$,
  as the outgoing edges of odd locations that do not target bad locations
  have guards implying $z<\lambda$ and do not reset $z$.
  Furthermore, $q_n$ is odd for all $n<n_0$
  and $q_{n_0} = \min_{0\leq k\leq n_0} p(\ell_k)$, by
  definition of the edge relation of $\automaton(p, \lambda)$
  and since $q_0=p(\ell_0)$. This proves that the window
  opened at step 0 closes at step $n_0$ in less than $\lambda$ time units.
  It follows from the inductive
  property of windows (Lemma~\ref{lemma:inductive_prop}) that for all
  $0\leq n\leq n_0$, $\pi_{n\to}\in\mathsf{TGW}(\lambda, p)$.
  
  As $q_n$ is odd for $n< n_0$, we have $p(\ell_{n_0}) = q_{n_0}$.
  There are two possibilities for $\bar{\pi}$: either the even location
  $(\ell_{n_0}, p(\ell_{n_0}))$ is never left or
  there is some $j_0$ such that at step $j_0$, the expanded location is
  exited through an edge (via the pair of moves $(m_{j_0}^{(1)}, m_{j_0}^{(2)})$).
  In the first case, only delays are taken in the location
  $\ell_{n_0}$ in $\pi$, the priority of which is even, yielding
  $\pi_{n\to}\in\mathsf{TGW}(p, \lambda)$ for all $n\geq n_0$ (all windows
  after step $n_0$ close immediately), and combining
  this with the previous paragraph implies $\pi\in\mathsf{DTW}(p, \lambda)$.
  In the latter case, we have $\pi_{n\to}\in\mathsf{TGW}(p, \lambda)$ for
  $n_0\leq n\leq j_0$ (similarly to the former case, only delays are taken
  in $\ell_{n_0}$ in $\pi$ up to step $j_0$), and $\bar{\nu}_{j_0+1}(z)=0$ and
  $q_{j_0+1}=p(\ell_{j_0+1})$ as edges leaving even locations reset $z$ and
  lead to locations of the form $(\ell', p(\ell'))$.
  Repeating the previous arguments from position
  $j_0+1$ ($\bar{\pi}_{j_0+1\to}$ is a time-divergent well-initialized suffix of
  $\bar{\pi}$),
  one can find some $n_1$ such that $q_{n_1}=p(\ell_{n_1})$ is even
  and for all $j_0+1\leq n\leq n_1$, $\pi_{n\to}\in\mathsf{TGW}(p, \lambda)$.
  Once more,
  we split in cases following whether any edge is traversed in location
  $(\ell_{n_1}, p(\ell_{n_1}))$. It follows from an induction that
  $\pi\in\mathsf{DTW}(p, \lambda)$.

  Assume now that $\bar{\pi}$ does not satisfy the safety objective. There is
  some smallest $n$ such that $q_n=\bad$. Let $j_0$ be $0$ if $z$ was never
  reset before position $n$ or the greatest $j < n$ such that the
  pair of moves $(m_{j-1}^{(1)}, m_{j-1}^{(2)})$ induces a reset of $z$ otherwise.
  We have $q_{j_0}=p(\ell_{j_0})$ and $\bar{\nu}_{j_0}(z)=0$: if $j_0=0$, this
  is due to $\bar{\pi}$ being well-initialized and if $j_0>0$, this follows
  from $n$ being the smallest index of a bad location and from the
  definition of edges in $\automaton(p, \lambda)$;
  edges that reset $z$ have as their target expanded locations of the
  form $(\ell,p(\ell))$ or $(\ell,\bad)$.
  We argue that $\pi_{j_0\to}$ does not satisfy the timed good window objective.
  There cannot be any even locations between positions $j_0$ and $n$; there are
  no edges to bad locations in even locations and edges leaving even locations
  reset $z$. Thus, for all $j_0\leq j<n$, $q_j=\min_{j_0\leq k\leq j}p(\ell_k)$
  and $q_j$ is odd. Bad locations can only be entered when $z$ reaches $\lambda$
  in an odd location: the window opened at step $j_0$ does not close in time.
  Therefore, $\pi\notin\mathsf{DTW}(p, \lambda)$.

  For \textbf{Item B.\ref{item:projection:2}}, we use compatibility of the projection
  mapping with suffixes. Fix a divergent well-initialized play
  $\bar{\pi} \in\plays_\infty(\game(p, \lambda))$ and let $\pi$ be its
  projection.
  Assume $\bar{\pi}$ satisfies the objective
  $\mathsf{coB\ddot{u}chi}(\badset)$. If $\bar{\pi}$ is safe, we have the property
  by Item B.\ref{item:projection:1}. Assume some bad location appears along
  $\bar{\pi}$. As the co-Büchi objective is satisfied, this location is left
  and the suffix following this exit is well-initialized.
  As there are finitely many bad locations along $\bar{\pi}$, it follows
  $\bar{\pi}$ has a well-initialized suffix satisfying $\safe(\badset)$.
  From the compatibility of projections with suffixes and
  Item B.\ref{item:projection:1},
  $\pi$ has a suffix satisfying the direct timed window  objective,
  hence $\pi$ satisfies the timed window  objective.

  Conversely, assume $\bar{\pi}$ does not satisfy
  $\mathsf{coB\ddot{u}chi}(\badset)$, i.e., there are infinitely many occurrences
  of bad locations along $\bar{\pi}$. Divergence ensures any bad location is
  eventually left through an edge, yielding a well-initialized suffix.
  From Item
  B.\ref{item:projection:1} and suffix compatibility of the projection mapping,
  it follows that $\pi$ has infinitely many suffixes that do not satisfy the
  direct timed window  objective. This implies that $\pi$ does not satisfy
  the timed window  objective, ending the proof.
\end{proof}

This lemma completely disregards plays that are not well-initialized.
This is not an issue, as any play starting in the initial state of the TG
$\game(p, \lambda)$ is well-initialized. Indeed, if $\ell_0$ is the initial
location of $\automaton$, then $(\ell_0, p(\ell_0))$ is the initial location of
$\automaton(p, \lambda)$, and thus $(\ell_0, p(\ell_0), \mathbf{0}^{C\cup\{z\}})$
is the initial state of $\mathcal{T}(\automaton(p, \lambda))$.
Any initial play of $\game$
expands to an initial play of $\game(p, \lambda)$ and any initial play of
$\game(p, \lambda)$ projects to an initial play of $\game$.

The previous result can be leveraged to prove that the verification problem for
the (resp.~direct) timed window  objective on $\automaton$ can
be reduced to the verification problem for the co-Büchi (resp.~safety) objective
on $\automaton(p, \lambda)$.

\begin{theorem}\label{corollary:singledim:verification}
  Let $\automaton = (L, \ell_\init, C, \Sigma, I, E)$ be a TA, $p$ a priority
  function and $\lambda\in\IN\setminus\{0\}$.
  All time-divergent paths of $\automaton$ satisfy the (resp.~direct)
  timed window objective if and only if all time-divergent paths of $\automaton(p, \lambda)$ satisfy the co-Büchi (resp. safety)
  objective over bad locations.
\end{theorem}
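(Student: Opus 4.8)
The plan is to obtain the theorem as a direct consequence of the objective-preservation result Lemma~\ref{theorem:window_mapping}, together with the structural behaviour of the expansion and projection mappings. Since the timed window objective, its direct variant, the safety objective and the co-Büchi objective are all state-based, Remark~\ref{remark:paths_vs_plays} lets me specialize $\ex$, $\pr$ and the lemma from plays to paths verbatim. I would first isolate three facts, each already recorded in the text surrounding the lemma: (i) both $\ex$ and $\pr$ preserve time-divergence, since the total elapsed time is unchanged; (ii) $\ex$ maps initial paths of $\automaton$ to initial paths of $\automaton(p,\lambda)$, and $\pr$ maps initial paths of $\automaton(p,\lambda)$ to initial paths of $\automaton$; and (iii) every initial path of $\automaton(p,\lambda)$ is well-initialized, as its first state is $(\ell_\init, p(\ell_\init), \mathbf{0}^{C\cup\{z\}})$. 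Because the verification problem quantifies over initial time-divergent paths, these facts are exactly what promote the per-path equivalences of Lemma~\ref{theorem:window_mapping} to the universally quantified biconditional.

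For the forward implication, assume every time-divergent (initial) path of $\automaton$ satisfies the timed window objective (resp.\ the direct timed window objective). Let $\bar{\pi}$ be an arbitrary time-divergent initial path of $\automaton(p,\lambda)$. By (iii) it is well-initialized, and by (i)--(ii) its projection $\pr(\bar{\pi})$ is a time-divergent initial path of $\automaton$; the hypothesis then gives $\pr(\bar{\pi})\in\mathsf{TW}(p,\lambda)$ (resp.\ $\mathsf{DTW}(p,\lambda)$). Item~B.\ref{item:projection:2} (resp.\ Item~B.\ref{item:projection:1}) of Lemma~\ref{theorem:window_mapping} then yields $\bar{\pi}\in\cobuchi(\badset)$ (resp.\ $\bar{\pi}\in\safe(\badset)$). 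As $\bar{\pi}$ was arbitrary, all time-divergent initial paths of $\automaton(p,\lambda)$ satisfy the co-Büchi (resp.\ safety) objective.

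For the converse, assume every time-divergent (initial) path of $\automaton(p,\lambda)$ satisfies the co-Büchi (resp.\ safety) objective over $\badset$. Let $\pi$ be an arbitrary time-divergent initial path of $\automaton$. By (i)--(ii) its expansion $\ex(\pi)$ is a time-divergent initial path of $\automaton(p,\lambda)$, and the hypothesis gives $\ex(\pi)\in\cobuchi(\badset)$ (resp.\ $\ex(\pi)\in\safe(\badset)$). Applying Item~A.\ref{item:expansion:2} (resp.\ Item~A.\ref{item:expansion:1}) of Lemma~\ref{theorem:window_mapping} then yields $\pi\in\mathsf{TW}(p,\lambda)$ (resp.\ $\mathsf{DTW}(p,\lambda)$). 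Since $\pi$ was arbitrary, this establishes both directions of the biconditional.

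Essentially all of the genuine work is already packed into Lemma~\ref{theorem:window_mapping}, so at this level I do not expect a real obstacle; the one point that demands care is the interplay between the per-object equivalence of the lemma and the universal quantifier in the theorem. This is precisely where the preservation and coverage facts (i)--(iii) intervene: the forward direction relies on $\pr$ turning any candidate counterexample in $\automaton(p,\lambda)$ into a counterexample in $\automaton$, and the converse relies on $\ex$ doing the reverse. I would flag in particular that Item~B is only available for \emph{well-initialized} paths, so the restriction to \emph{initial} paths of $\automaton(p,\lambda)$ — which are automatically well-initialized by (iii) — is exactly what makes the forward direction go through; dropping initiality there would decouple the clock $z$ from the window opened at the projection's starting step and break the reduction.
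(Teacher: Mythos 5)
Your plan follows the paper's route at the top level: Theorem~\ref{corollary:singledim:verification} is indeed obtained by combining Lemma~\ref{theorem:window_mapping} with preservation of time-divergence and initiality, your facts (i)--(iii) are correct, and your quantifier bookkeeping is fine (the paper phrases both directions contrapositively, exhibiting counterexample paths, but that is immaterial). However, there is one genuine gap, and it sits exactly where the paper's proof does its only real work: the mappings $\ex$ and $\pr$ and Lemma~\ref{theorem:window_mapping} are defined and proved for \emph{plays of a TG}, whose steps are pairs of moves of two players, not for \emph{paths of a TA}, whose steps are single moves. Remark~\ref{remark:paths_vs_plays} only licenses transferring the \emph{definitions of objectives} from plays to paths; it says nothing about the expansion and projection mappings or about the lemma itself, so the expressions $\ex(\pi)$ and $\pr(\bar{\pi})$ for \emph{paths} $\pi$ and $\bar{\pi}$ are, as you use them, undefined objects rather than a ``verbatim specialization.''

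The paper bridges this mismatch explicitly. Given a time-divergent initial path $\pi = s_0 \xrightarrow{d_0,a_0} s_1 \cdots$ of $\automaton$ violating the window objective, it forms the auxiliary TG $\game = (\automaton, \varnothing, \Sigma)$ in which $\player_2$ owns every action, observes that $\pi' = s_0((d_0,\bot),(d_0,a_0))s_1\cdots$ is a time-divergent initial \emph{play} of $\game$ with the same state sequence as $\pi$ (hence violating the same state-based objective), applies Lemma~\ref{theorem:window_mapping} to $\pi'$, and finally extracts from the play $\ex(\pi')$ a \emph{path} of $\automaton(p,\lambda)$ sharing its state sequence, which is therefore time-divergent and violates the co-B\"uchi (resp.~safety) objective; the converse direction is handled symmetrically with $\pr$. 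Your argument goes through once you insert this path-to-play-and-back translation (or, equivalently, re-define $\ex$ and $\pr$ for paths and re-establish the lemma in that setting); without it, the central step of your proof invokes machinery the paper never constructs for paths.
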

\begin{proof}
    We show that if there is a time divergent path of $\automaton$ that violates
  the (resp.~direct) timed window objective, then there is a time-divergent
  path of $\automaton(p, \lambda)$ that violates the co-Büchi (resp.~safety)
  objective for bad locations. The other direction is proven using
  similar arguments and the projection mapping rather than the expansion
  mapping.

  Assume there is a time-divergent initial path
  $\pi = s_0\xrightarrow{d_0, a_0}s_1\ldots$ of
  $\automaton$ that does not satisfy the (resp.~direct) timed window 
  objective.
  Consider the TG $\game = (\automaton, \varnothing, \Sigma)$.
  The sequence $\pi' = s_0((d_0, \bot), (d_0, a_0))s_1\ldots$ is a time-divergent
  initial play of $\game$ as $\pi$ is a time-divergent initial path
  of $\automaton$.
  Furthermore, $\pi'$ does not satisfy the (resp.~direct)
  timed window  objective as it shares the same sequence of states as
  $\pi$. By Lemma~\ref{theorem:window_mapping}, the time-divergent play
  $\ex(\pi')$ of $\game(p, \lambda)$ does not satisfy the
  co-Büchi (resp.~safety)
  objective over bad locations. There is some path of the TA
  $\automaton(p, \lambda)$ that shares the same sequence of states as
  $\ex(\pi')$. This path is time-divergent and does not satisfy the
  co-Büchi (resp.~safety) objective over bad locations, ending the proof.
\end{proof}

\subsection{Translating strategies}\label{section:reduction:translations}
In this section, we present how strategies can be translated from the base
game to the expanded game and vice-versa, using the expansion and projection
mappings. We restrict our attention to move-independent strategies, as this
subclass of strategies suffices for state-based objectives~\cite{AlfaroFHMS03}.

We open the section with a binary classification of time-convergent plays
of the expanded TG, useful to prove how our translations affect blamelessness
of outcomes.
Then we proceed to our translations. We define how a strategy of
$\game(p, \lambda)$ can be translated to a strategy of $\game$ using the
expansion mapping. Then we define a translation of strategies of
$\game$ to strategies of $\game(p, \lambda)$, using the projection mapping.
Each translation definition is accompanied by a technical result that
establishes a connection between outcomes of a translated strategy and
outcomes of the original strategy, through the projection or expansion
mapping. It follows from these technical results that the translation of
a winning strategy in one game is a winning strategy in the other.

Recall that given a winning strategy of $\player_1$, all of its
time-convergent outcomes are $\player_1$-blameless. Therefore, when
translating a winning strategy from one game to the other, we must
argue that all time-convergent outcomes of the obtained strategy
are $\player_1$-blameless.
To argue how our translations preserve this property, let us introduce a
binary classification of time-convergent plays
of $\game(p, \lambda)$. We argue that any time-convergent play of the expanded
TG either remains in a bad location from some point on
or visits finitely many bad locations.
Whenever a bad location is left, a subsequent visit to a bad location requires
at least $\lambda$ time units to elapse. It is thus impossible to visit bad
locations infinitely often without either remaining in a bad location from
some point onward
or having time diverge. We now formalize this property and present its proof.

\begin{proposition}\label{proposition:expansion_structure}
  Let $\bar{\pi} = (\ell_0, q_0, \bar{\nu}_0)(m_0^{(1)}, m_0^{(2)})\cdots$ be
  an infinite play of $\game(p, \lambda)$.
  If $\bar{\pi}$ is time-convergent, then
  $\bar{\pi}\in \cobuchi(L'\setminus \badset)\uplus\cobuchi(\badset)$.
\end{proposition}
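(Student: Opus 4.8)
The plan is to split on the number of visits of $\bar\pi$ to bad locations and to exploit a simple quantitative fact: two successive visits to bad locations that are not ``contiguous'' must be separated by at least $\lambda$ units of time, which a time-convergent play cannot afford infinitely often. First I would record that the two target sets are disjoint, so that the disjoint union $\uplus$ is meaningful: an infinite play cannot visit both $\badset$ and $L'\setminus\badset$ only finitely often, since every position of $\bar\pi$ lies in exactly one of these two sets. This reduces the statement to showing that every time-convergent $\bar\pi$ lies in at least one of $\cobuchi(\badset)$ and $\cobuchi(L'\setminus\badset)$. I then split on whether $\bar\pi$ visits bad locations finitely or infinitely often. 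If finitely often, then $\bar\pi\in\cobuchi(\badset)$ by definition and there is nothing more to do.

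The substantive case is when bad locations occur infinitely often; here I claim $\bar\pi\in\cobuchi(L'\setminus\badset)$, i.e.\ $\bar\pi$ eventually remains inside bad locations. I would argue by contradiction, assuming $\bar\pi$ visits bad locations infinitely often but is \emph{not} eventually confined to them. Then non-bad locations are also visited infinitely often, so the play alternates between the two infinitely often. By Definition~\ref{definition:singledim:extended_automaton} there are no edges from a bad location to another bad location (every edge leaving $(\ell,\bad)$ targets $(\ell,p(\ell))$), hence each maximal block of bad locations is entered through an edge from a non-bad location, and there are infinitely many such entries. The conclusion will follow by summing a time lower bound over these infinitely many entries: the value of the total-time clock $\gamma$ then grows without bound, contradicting time-convergence of $\bar\pi$ (which requires $(\bar\nu_j(\gamma))_{j}$ to be bounded).

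The crux of the argument, and the step requiring the most care, is the time-gap claim: between two consecutive entries into a bad location from a non-bad location, at least $\lambda$ time units elapse. Here I would track the clock $z$ precisely. Entering a bad location is only possible from an \emph{odd} location with $z=\lambda$, and the corresponding edge resets $z$; moreover the invariant $z=0$ of bad locations forbids the passage of time, so $z=0$ still holds when the bad location is left. For the next entry, $z$ must again reach $\lambda$. Since $z$ is reset to $0$ only on leaving an even location or on entering a bad location, and otherwise increases at the rate of time, the elapsed time from the last reset of $z$ preceding the next entry up to that entry is exactly $\lambda$; and that last reset occurs no earlier than the previous exit from a bad location, where $z$ was already $0$. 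Consequently $\gamma$ increases by at least $\lambda$ between consecutive bad-entries, which yields the desired divergence and closes the contradiction, completing the case analysis.
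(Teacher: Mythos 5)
Your proof is correct and follows essentially the same route as the paper's: both hinge on the observation that the invariant $z=0$ of bad locations together with the guard $z=\lambda$ on edges entering them forces at least $\lambda$ time units between non-contiguous visits to $\badset$, which iterated contradicts time-convergence. Your presentation merely reorganizes this as a case split (plus an explicit check that the union is disjoint) where the paper argues by direct contradiction on the disjunction, so the content is the same.
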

\begin{proof}
  Assume towards a contradiction that neither
  $\bar{\pi}\in\cobuchi(L'\setminus \badset)$ nor
  $\bar{\pi}\in \cobuchi(\badset)$ hold. It follows that there are infinitely
  many $q_n$ such that $q_n=\bad$ and infinitely many $j$ such that
  $q_j\neq \bad$. Consider some $n$ such that $q_n=\bad$. There is some
  $j> n$ such that $q_j\neq \bad$ and some $n'> j$ such that
  $q_{n'}=\bad$. We argue that at least $\lambda$ time units pass between steps
  $n$ and $n'$, i.e.,
  $\bar{\nu}_{n'}(\gamma) - \bar{\nu}_n(\gamma) \geq\lambda$.
  Once this is established, iterating this argument establishes divergence
  of $\bar\pi$, reaching a contradiction.
  The invariant of $(\ell_n, \bad)$ ensures that $\bar{\nu}_n(z)=0$.
  The bad location $(\ell_n, q_n)$ is exited at some stage to reach
  $(\ell_j, q_j)$ and the guards of edges to the location $(\ell_{n'}, q_{n'})$
  are $z=\lambda$.
  It follows that at least $\lambda$ time units must elapse between steps
  $n$ and $n'$.
\end{proof}

We now move on to our translations.
We translate $\player_1$-strategies of the expanded TG to
$\player_1$-strategies of the original TG by evaluating the strategy on the
expanded TG on expansions of plays provided by
the expansion mapping and by replacing any occurrences of
$\beta_1$ by $\bot$.
The fact that translating a winning strategy of the expanded TG this way yields
a winning strategy of the original TG is not straightforward: when we
translate a strategy $\bar{\sigma}$ of $\game(p, \lambda)$ to
a strategy $\sigma$ of $\game$, the expansion of an outcome $\pi$ of $\sigma$
may not be consistent with $\bar{\sigma}$, preventing the direct use of
Lemma~\ref{theorem:window_mapping}.
Indeed the definition of the expansion mapping may impose moves
$(d, \beta_1)$ in $\ex(\pi)$ where $\bar{\sigma}$ would suggest $(d, \bot)$.
However, we can mitigate this
issue by constructing another play $\bar{\pi}$ in parallel that is consistent
with $\bar{\sigma}$ and shares the same
sequence of states as $\ex(\pi)$. We leverage the non-deterministic
behavior of tie-breaking and move-independence of $\bar{\sigma}$ to ensure
consistency of $\bar{\pi}$ with $\bar{\sigma}$, by
changing the moves of $\player_2$ on $\bar{\pi}$ comparatively to $\ex(\pi)$.
We also prove that if $\bar{\pi}$ is $\player_1$-blameless and time-convergent,
then $\pi$ also is $\player_1$-blameless.

\begin{lemma}\label{lemma:translation:expansion}
  Let $\bar{\sigma}$ be a move-independent strategy of $\player_1$ in
  $\game(p, \lambda)$.
  Let $\sigma$ be the $\player_1$-strategy in $\game$ defined by
  \[\sigma(\pi) = \begin{cases}
      \bar{\sigma}(\mathsf{Ex}(\pi)) &
      \text{if } \bar{\sigma}(\mathsf{Ex}(\pi))\notin \IR_{\geq 0}\times\{\beta_1\} \\
      (d, \bot) & \text{if } \bar{\sigma}(\mathsf{Ex}(\pi)) = (d, \beta_1)
    \end{cases}\] for all finite plays $\pi\in\plays_\mathit{fin}(\game)$.
  For all $\pi\in\outcome_1(\sigma)$, there is a play
  $\bar{\pi}\in\outcome_1(\bar{\sigma})$ such that $\bar{\pi}$ shares the
  same sequence of states as $\mathsf{Ex}(\pi)$ and such that if $\bar{\pi}$
  is time-convergent and $\player_1$-blameless, then $\pi$ is
  $\player_1$-blameless.
\end{lemma}
\begin{proof}
  Fix an infinite play $\pi\in\outcome_1(\sigma)$. We construct $\bar{\pi}$
  inductively: at step $n\in\IN$ of the construction, we assume that we have
  constructed a finite play
  $\bar{\pi}_n\in\plays_\mathit{fin}(\game(p, \lambda))$ such that
  $\bar{\pi}_n$ shares
  the same sequence of states as $\ex(\pi_{|n})$ and $\bar{\pi}_n$ is
  consistent with $\bar{\sigma}$.

  Initially, we set $\bar{\pi}_0 = \ex(\pi_{|0})$. This is a play consisting of
  only one state and with no moves. It shares the same sequence of states as
  $\ex(\pi_{|0})$ by construction and is consistent with $\bar{\sigma}$.
  
  Now assume by induction that we have constructed a finite play
  $\bar{\pi}_n$ that shares the same sequence of states as $\ex(\pi_{|n})$
  and is consistent with $\bar{\sigma}$.  
  Let $(d^{(1)}, a^{(1)}) = \sigma(\pi_{|n})$.
  Let $m^{(2)}=(d^{(2)}, a^{(2)})$ be a move of $\player_2$ and $s$ be a
  state of $\mathcal{T}(\automaton)$ such that
  $\pi_{|n+1}=\pi_{|n}(\sigma(\pi_{|n}), m^{(2)})s$. Write
  $\last(\bar{\pi}_n)=(\ell, q, \bar{\nu})$.
  We construct $\bar{\pi}_{n+1}$ so that it shares the same sequence of moves
  as $\ex(\pi_{|n+1})$ and so it is consistent with $\bar{\sigma}$.

  Recall that an expansion never ends in a bad location.
  The proof is divided in different cases. Case 1 is when
  $\last(\bar{\pi}_n)$ is in an even location. Case 2 is when
  $\last(\bar{\pi}_n)$ is in an odd location. We further split case 2 in
  four sub-cases:
  \begin{enumerate}[{2}.a]
  \item $\bar{\nu}(z) + d^{(1)}<\lambda$;
  \item $\bar{\nu}(z) + d^{(2)}<\lambda$ and $\bar{\nu}(z) + d^{(1)}=\lambda$;
  \item $d^{(2)} \geq d^{(1)} = \lambda-\bar{\nu}(z)$ and $\player_1$ is
    responsible for the last transition;
  \item $d^{(2)} \geq d^{(1)} = \lambda-\bar{\nu}(z)$ and $\player_1$ is not
    responsible for the last transition (this implies $d^{(2)} = d^{(1)}$).
  \end{enumerate}
  These four sub-cases are disjoint. We show that they cover all
  possibilities before moving on to the remainder of the construction. Assume
   $\last(\bar{\pi}_n)$ is in an odd location.
  The inequality $\bar{\nu}(z) + d^{(1)}\leq\lambda$ holds due to $d^{(1)}$
  being the delay of the move $\sigma(\pi_{|n})$ and this delay being the same
  as the delay of $\bar{\sigma}(\ex(\pi_{|n}))$ (this follows
  from the definition of $\sigma$) and because
  $\last(\ex(\pi_{|n}))=\last(\bar{\pi}_n)$ is an odd location with an
  invariant implying $z\leq\lambda$. We now describe how the construction
  proceeds in each case.
  
  \textbf{Case 1.}
  Assume $\last(\bar{\pi}_n)=\last(\ex(\pi_{|n}))$ is in an even location.
  Then the same moves are
  enabled in $\last(\bar{\pi}_n)$ and
  $\last(\pi_{|n})$.
  We have
  $\ex(\pi_{|n+1}) = \ex(\pi_{|n})(\sigma(\pi_{|n}), m^{(2)})\bar{s}$ for some
  state $\bar{s}$ of $\mathcal{T}(\automaton(p, \lambda))$ given by the
  definition
  of $\ex$. We let $\bar{\pi}_{n+1}$ be
  $\bar{\pi}_{n}(\bar{\sigma}(\bar{\pi}_{n}), m^{(2)})\bar{s}$.
  This is a well-defined play:
  $\ex(\pi_{|n})(\sigma(\pi_{|n}), m^{(2)})\bar{s}$ being a play implies
  $\bar{s}\in\delta(\last(\ex(\pi_{|n})), \sigma(\pi_{|n}), m^{(2)})$ and
  $\bar{\pi}_n$ and $\ex(\pi_{|n})$ share the same sequence of states, hence
  $\last(\bar{\pi}_n) = \last(\ex(\pi_{|n}))$.
  Furthermore, by move-independence of $\bar{\sigma}$ and definition of
  $\sigma$,
  $\sigma(\pi_{|n})=\bar{\sigma}(\ex(\pi_{|n}))=\bar{\sigma}(\bar{\pi}_n)$.
  The play $\bar{\pi}_{n+1}$ is consistent with $\bar{\sigma}$ by construction.

  \textbf{Case 2.}
  Now assume that $\last(\bar{\pi}_n)=(\ell, q, \bar{\nu})$ is in
  an odd location.
  Recall that we must have $\bar{\nu}(z)+d^{(1)}\leq\lambda$.

  \textbf{Sub-case 2.a.}
  Assume $\bar{\nu}(z) + d^{(1)}<\lambda$. Then
  $\ex(\pi_{|n+1}) = \ex(\pi_{|n})(\sigma(\pi_{|n}), \tilde{m}^{(2)})\bar{s}$
  for some state $\bar{s}$ of $\mathcal{T}(\automaton(p, \lambda))$
  given by the definition of the expansion mapping and where
  $\tilde{m}^{(2)} = (\lambda-\bar{\nu}(z), \beta_2)$ if
  $\bar{\nu}(z) + d^{(2)}\geq\lambda$
  and $\tilde{m}^{(2)}=m^{(2)}$ otherwise. We define $\bar{\pi}_{n+1}$ to be
  the play
  $\bar{\pi}_{n}(\bar{\sigma}(\bar{\pi}_{n}), \tilde{m}^{(2)})\bar{s}$, which
  is a well-defined play sharing the sequence of states of $\ex(\pi_{|n+1})$
  and consistent with $\bar{\sigma}$ for the same reasons as case 1.

  \textbf{Sub-case 2.b.} Assume
  $\bar{\nu}(z) + d^{(2)}<\lambda$ and $\bar{\nu}(z) + d^{(1)}=\lambda$. Then
  the move of $\player_1$ is changed in the expansion to $(d^{(1)}, \beta_1)$:
  we have
  $\ex(\pi_{|n+1}) = \ex(\pi_{|n})((d^{(1)}, \beta_1), m^{(2)})\bar{s}$
  for some state $\bar{s}$ of $\mathcal{T}(\automaton(p, \lambda))$.
  It follows from $\player_2$ preempting $\player_1$
  that
  $\last(\ex(\pi_{|n}))\xrightarrow{m^{(2)}}\bar{s}$. By move-independence of
  $\bar{\sigma}$ and definition of $\sigma$,
  the delay of the moves $\sigma(\pi_{|n})$ and
  $\bar{\sigma}(\ex(\pi_{|n})) = \bar{\sigma}(\bar{\pi}_n)$ match. Let
  $\bar{\pi}_{n+1}$ be $\bar{\pi}_n(\bar{\sigma}(\bar{\pi}_n), m^{(2)})\bar{s}$.
  Thus $\bar{\pi}_{n+1}$ is a well-defined play sharing
  the same sequence of states as $\ex(\pi_{|n+1})$ and
  consistent with $\bar{\sigma}$.
  
  \textbf{Sub-cases 2.c.~and 2.d.} For the two remaining cases, assume that
  $\bar{\nu}(z) + d^{(1)}=\lambda$. Thus, $a^{(1)}=\bot$:
  the only $\player_1$ actions available in state $(\ell, q, \bar{\nu}+d^{(1)})$
  are $\beta_1$ and $\bot$ as $I'((\ell, q))$ implies
  $z\leq\lambda$ and edges leaving $(\ell, q)$ with actions other
  than $\beta_1$ and $\beta_2$ have guards requiring $z<\lambda$.
  We write $\bar{\nu}_{d^{(1)}}= \mathsf{reset}_{\{z\}}(\bar{\nu}+d^{(1)})$
  and for $i\in\{1, 2\}$ and $t\geq 0$, we write $b^{(i)}_t = (t, \beta_i)$
  in the following.

  \textbf{Sub-case 2.c.} If $d^{(2)} \geq d^{(1)}$ and $\player_1$ is responsible
  for the last transition of $\pi_{|n+1}$, then
  $\mathsf{Ex}(\pi_{|n+1})$ is of the form
  \[\ex(\pi_{|n})(b^{(1)}_{d^{(1)}}, b^{(2)}_{d^{(1)}})
    (\ell, \bad, \bar{\nu}_{d^{(1)}})
    (b^{(1)}_{0}, b^{(2)}_{0})
    (\ell, p(\ell), \bar{\nu}_{d^{(1)}})((0, \bot), \tilde{m}^{(2)}))
    (\ell, p(\ell), \bar{\nu}_{d^{(1)}})\]
  for some $\tilde{m}^{(2)}$ given by the expansion
  definition.
  We construct $\bar{\pi}_{n+1}$ in three steps.
  The sequence
  $\bar{\pi}_n^+=\bar{\pi}_n(\bar{\sigma}(\bar{\pi}_n), b^{(2)}_{d^{(1)}})
  (\ell, \bad, \bar{\nu}_{d^{(1)}})$ is a well-defined play: both appended
  moves share the same delay because the delay $d^{(1)}$ of the move
  $\sigma(\pi_{|n})$ is that of
  $\bar{\sigma}(\ex(\pi_{|n})) = \bar{\sigma}(\bar{\pi}_n)$ by definition
  of $\sigma$ and move-independence of $\bar{\sigma}$, and
  $\last(\bar{\pi}_n)\xrightarrow{b^{(1)}_{d^{(1)}}}
  (\ell, \bad, \bar{\nu}_{d^{(1)}})$ holds. Its extension
  $\bar{\pi}_n^{++}= \bar{\pi}_n^+(\bar{\sigma}(\bar{\pi}_n^+),b^{(2)}_0)
  (\ell, p(\ell), \bar{\nu}_{d^{(1)}})$ is also a well-defined play as
  $\bar{\sigma}(\bar{\pi}^+)$ must have a delay of zero due to the invariant of
  bad locations enforcing $z=0$. We define $\bar{\pi}_{n+1}$ to be
  the play 
  $\bar{\pi}_n^{++}(\bar{\sigma}(\bar{\pi}_n^{++}),(0, \bot))
  (\ell, p(\ell), \bar{\nu}_{d^{(1)}})$. The sequence $\bar{\pi}_{n+1}$ is
  a play: the move $(0, \bot)$ is available in any state and performing it
  does not change the state, and it cannot be outsped.
  By construction, $\bar{\pi}_{n+1}$ shares the same sequence
  of states as $\mathsf{Ex}(\pi_{|n+1})$ and is consistent with
  $\bar{\sigma}$.

  \textbf{Sub-case 2.d.} Assume now that $d^{(2)}=d^{(1)}$ and that
  $\player_1$ is not responsible for the last transition.
  Then
  $\mathsf{Ex}(\pi_{|n+1})$ is of the form
  \[\ex(\pi_{|n})(b^{(1)}_{d^{(1)}}, b^{(2)}_{d^{(1)}})
    (\ell, \bad, \bar{\nu}_{d^{(1)}})(b^{(1)}_0, b^{(2)}_0)
    (\ell, p(\ell), \bar{\nu}_{d^{(1)}})((0, \bot), (0, a^{(2)}))
    \bar{s}\]
  for some state $\bar{s}$ of $\mathcal{T}(\automaton(p, \lambda))$.
  Like in case 2.c., we extend $\bar{\pi}_n$ using the same states and
  changing the moves along the above to ensure consistency with
  $\bar{\sigma}$. Let $\bar{\pi}_n^{+}$ and $\bar{\pi}_n^{++}$ be
  defined identically to case 2.c.
  We define $\bar{\pi}_{n+1}$ to be
  $\bar{\pi}_n^{++}(\bar{\sigma}(\bar{\pi}_n^{++}),(0, a^{(2)}))\bar{s}$.
  The sequence $\bar{\pi}_{n+1}$ is a well-defined play (the last transition
  depends on $\player_2$'s move), is consistent with $\bar{\sigma}$ and shares
  the same sequence of states as $\mathsf{Ex}(\pi_{|n+1})$.

  The inductive construction above yields a play
  $\bar{\pi}\in\outcome_1(\bar{\sigma})$ that shares the same sequence of
  states as $\ex(\pi)$. It remains to show that if $\bar{\pi}$ is
  time-convergent and $\player_1$-blameless, then $\pi$ is
  $\player_1$-blameless. Assume that $\bar{\pi}$
  is time-convergent and $\player_1$-blameless.
  By Proposition~\ref{proposition:expansion_structure},
  either $\bar{\pi}$ does not leave a bad location from some point on or
  satisfies $\cobuchi(\badset)$. An expansion always exits a bad location
  the step following entry, therefore
  $\bar{\pi}$ is necessarily of the second kind, as it shares its sequence of
  states with $\ex(\pi)$. Thus, from some point on, the inductive construction
  of $\bar{\pi}$ is done using cases 1, 2.a or 2.b. Furthermore, from
  some point on,  $\player_1$'s moves are no longer responsible for
  transitions in $\bar{\pi}$.
  In cases 1 and 2.a, $\player_1$ is responsible for the added transition in
  $\bar{\pi}_n$ if and only if $\player_1$ is responsible for the last
  transition of $\pi_{|n+1}$ (by construction of the expansion mapping). In
  case 2.b, $\player_2$ is strictly faster in both plays. Therefore, if
  from some point on $\player_1$ is no longer responsible for transitions
  in $\bar{\pi}$, then from some point on $\player_1$ is no longer responsible
  for transitions in $\pi$, i.e., $\pi$ is $\player_1$-blameless.
\end{proof}

We can also translate $\player_1$-strategies defined on $\game$ to
$\player_1$-strategies
on $\game(p, \lambda)$ using the projection mapping. Translating strategies
this way must be done with care: we must consider the case where a move
suggested by the strategy in $\game$ requires too long a delay to
be played in the expanded TG $\game(p,\lambda)$. In this case, the suggested
move is replaced by $(d, \beta_1)$ for a suitable delay $d$. By construction,
the translated strategy always suggests the move $(0, \beta_1)$ when the play
ends in a bad location.

Similarly to the first translation, when deriving a strategy $\bar{\sigma}$
of $\game(p, \lambda)$ by translating a strategy $\sigma$ of $\game$,
the projection of an outcome $\bar{\pi}$ of $\bar{\sigma}$ may not be
consistent with $\sigma$.
However, we show that
there is a play $\pi$ consistent with $\sigma$ that shares the
same sequence of states as $\pr(\bar{\pi})$, by using techniques similar
to those used to prove Lemma~\ref{lemma:translation:expansion}.
Analogously to Lemma~\ref{lemma:translation:expansion}, we establish that
time-convergence and $\player_1$-blamelessness of $\pi$ imply
$\player_1$-blamelessness of $\bar{\pi}$.

\begin{lemma}\label{lemma:translation:projection}
  Let $\sigma$ be a move-independent strategy of $\player_1$ in $\game$.
  Let $\bar{\pi}$ be a finite play in $\game(p, \lambda)$ and let
  $(\ell, q, \bar{\nu})$ denote $\last(\bar{\pi})$ and
  $(d^{(1)}, a^{(1)})=\sigma(\pr(\bar{\pi}))$.
  We set:
  \[\bar{\sigma}(\bar{\pi}) = \begin{cases}
      (\lambda-\bar{\nu}(z), \beta_1) & \text{if } q\bmod 2=1 \land \bar{\nu}(z) + d^{(1)}\geq\lambda \\
      (0, \beta_1) & \text{if } q=\bad\\
      \sigma(\mathsf{Pr}(\bar{\pi})) & \text{otherwise}.
    \end{cases}\]
  For all $\bar{\pi}\in\outcome_1(\bar{\sigma})$, there is a play
  $\pi\in\outcome_1(\sigma)$ such that $\pi$ shares the
  same sequence of states as $\pr(\bar{\pi})$ and if $\pi$ is time-convergent
  and $\player_1$-blameless, then $\bar{\pi}$ is $\player_1$-blameless.
\end{lemma}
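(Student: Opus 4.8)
The plan is to mirror the inductive construction of Lemma~\ref{lemma:translation:expansion}, but in the opposite direction. Starting from an infinite outcome $\bar{\pi} = (\ell_0, q_0, \bar{\nu}_0)(m_0^{(1)}, m_0^{(2)})\cdots \in \outcome_1(\bar{\sigma})$, I would build $\pi$ step by step, maintaining the invariant that each finite prefix $\pi_{|n}$ shares its sequence of states with $\pr(\bar{\pi}_{|n})$ and is consistent with $\sigma$. The base case takes $\pi_{|0} = \pr(\bar{\pi}_{|0})$. For the inductive step, since $\pi_{|n}$ and $\pr(\bar{\pi}_{|n})$ share their states and $\sigma$ is move-independent, the move $\sigma(\pi_{|n}) = \sigma(\pr(\bar{\pi}_{|n})) = (d^{(1)}, a^{(1)})$ is forced for $\player_1$; I would append this move together with a $\player_2$-move chosen so that the transition reaches the projection $s_{n+1}$ of the next state $\bar{s}_{n+1}$ of $\bar{\pi}$.

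The case analysis follows the three clauses defining $\bar{\sigma}$. When $\bar{\sigma}$ returns $\sigma(\pr(\bar{\pi}))$ (that is, $q$ even, or $q$ odd with $\bar{\nu}(z)+d^{(1)}<\lambda$), the projected $\player_1$-move already equals $(d^{(1)}, a^{(1)})$, so I would simply copy the corresponding step of $\pr(\bar{\pi})$, which is consistent with $\sigma$ and reaches $s_{n+1}$. When $\bar{\sigma}$ emits $\beta_1$ (so $q$ odd and $\bar{\nu}(z)+d^{(1)}\geq\lambda$) or $q = \bad$, the projected $\player_1$-move is a shorter $\bot$-move, so $\pi$ must deviate from $\pr(\bar{\pi})$: I would play $\player_1$'s full move $(d^{(1)}, a^{(1)})$ and let $\player_2$ preempt it with a move whose delay equals the delay $\delta_n$ actually taken in $\bar{\pi}$. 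The crucial inequality enabling this is $d^{(1)} \geq \delta_n$, which holds in every case ($\delta_n \leq \lambda - \bar{\nu}(z) \leq d^{(1)}$ in the $\beta_1$-case, and $\delta_n = 0$ in the $\bad$-case), so $\player_2$ is faster or ties; using the non-determinism of tie-breaking I would select $\player_2$'s target, which is exactly $s_{n+1}$. Checking well-definedness of the inserted moves and the shared-state invariant is routine bookkeeping inherited from the definitions of $\ex$ and $\pr$.

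For the blamelessness claim I would assume $\pi$ time-convergent and $\player_1$-blameless; since $\pi$ shares its states, hence its values of $\gamma$, with $\pr(\bar{\pi})$, the play $\bar{\pi}$ is time-convergent as well, so Proposition~\ref{proposition:expansion_structure} applies and yields two cases. If $\bar{\pi}$ eventually remains in a single bad location, then from that point $\bar{\sigma}$ proposes $(0, \beta_1)$ while the play stays put via $\player_2$'s tie-broken delay move; since $\player_1$'s $\beta_1$-move does not achieve this transition, $\player_1$ is never responsible there and $\bar{\pi}$ is $\player_1$-blameless unconditionally. Otherwise $\bar{\pi}$ visits bad locations finitely often, so from some index on no bad location occurs: on those steps, the copied steps (first clause) carry the same responsibility in $\pi$ and in $\bar{\pi}$, while on the $\beta_1$-steps $\player_1$ cannot be responsible in $\bar{\pi}$ (the absence of a bad location forces $\player_2$ either to preempt strictly or to be selected on a tie). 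Combining this with $\player_1$-blamelessness of $\pi$ gives that $\player_1$ is eventually never responsible in $\bar{\pi}$, as required.

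The main obstacle I anticipate is the responsibility bookkeeping on the deviation steps: I must ensure that the $\player_2$-moves I insert to match projected states never force $\player_1$ to be responsible on the tail that matters, and that the finitely many bad-location-entry steps — where $\player_1$'s responsibility in $\pi$ and $\bar{\pi}$ may genuinely differ — always lie in a finite prefix and are therefore irrelevant to $\player_1$-blamelessness. Proposition~\ref{proposition:expansion_structure} is precisely what supplies this prefix/tail separation, so the whole argument hinges on invoking it correctly together with the delay inequality $d^{(1)} \geq \delta_n$ that legitimizes every inserted $\player_2$-preemption.
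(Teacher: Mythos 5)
Your proposal is correct and follows essentially the same route as the paper's proof: the same inductive construction of $\pi$ (base case $\pr(\bar{\pi}_{|0})$, invariant of shared states and consistency with $\sigma$), the same case split driven by the clauses of $\bar{\sigma}$ with the key inequality $d^{(1)}\geq\delta_n$ licensing the inserted $\player_2$-preemptions, and the same blamelessness argument via Proposition~\ref{proposition:expansion_structure}. Your grouping of the cases (copied steps versus deviation steps) merely merges the paper's cases 1/3.c and 2/3.a/3.b, respectively, without changing the substance.
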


\begin{proof}
  Fix $\bar{\pi}\in\outcome_1(\bar{\sigma})$. We construct the sought play $\pi$ by
  induction. We assume that at step $n\in\IN$ of the construction,
  we have constructed $\pi_n\in\plays_\mathit{fin}(\game)$ that shares the
  same sequence of states as $\pr(\bar{\pi}_{|n})$, and such that $\pi_n$ is
  consistent with $\sigma$.

  The base case is straightforward. We set $\pi_0=\pr(\bar{\pi}_{|0})$.
  Now, assume that we have constructed a finite play
  $\pi_n\in\plays_\mathit{fin}(\game)$ as described above
  and let us construct $\pi_{n+1}$.
  Let $m^{(2)}=(d^{(2)}, a^{(2)})$ be a move of $\player_2$ and $\bar{s}$
  be a state of $\mathcal{T}(\automaton(p, \lambda))$ such that
  $\bar{\pi}_{|n+1} = \bar{\pi}_{|n}(\bar{\sigma}(\bar{\pi}_{|n}), m^{(2)})\bar{s}$.
  Write $(d^{(1)}, a^{(1)})= \bar{\sigma}(\bar{\pi}_{|n})$,
  $(\ell, q, \bar{\nu})=\last(\bar{\pi}_{|n})$ and
  $(\ell', q', \bar{\nu}')=\bar{s}$.

  We discuss different cases: 1.~$q$ is an even number, 2.~$q=\bad$
  and 3.~$q$ is an odd number. The third case is divided in three disjoint
  sub-cases:
  3.a.~$q'=\bad$; 
  3.b.~$q'\neq \bad$ and $a^{(1)}\neq \beta_1$;
  3.c.~$q'\neq \bad$ and $a^{(1)}=\beta_1$.

  \textbf{Case 1.}
  Assume $q$ is an even number. Then the same moves are enabled
  in $\last(\bar{\pi}_{|n})$
  and $\last(\pr(\bar{\pi}_{|n}))$. By definition of $\bar{\sigma}$ and
  move-independence of $\sigma$,
  $\bar{\sigma}(\bar{\pi}_{|n}) = \sigma(\pr(\bar{\pi}_{|n})) = \sigma(\pi_n)$.
  We have
  $\pr(\bar{\pi}_{|n+1}) =
  \pr(\bar{\pi}_{|n})(\bar{\sigma}(\bar{\pi}_{|n}), m^{(2)})(\ell', \bar{\nu}'_{|C})$.
  We define $\pi_{n+1}$ as the play
  $\pi_n(\sigma(\pi_n), m^{(2)})(\ell', \bar{\nu}'_{|C})$.
  This is a well-defined play consistent with $\sigma$ that shares the
  same sequence of states as
  $\pr(\bar{\pi}_{|n+1})$, due to the
  fact that $\pr(\bar{\pi}_{|n})$ and $\pi_n$ share the same sequence of
  states and
  $\sigma(\pi_n) = \bar{\sigma}(\bar{\pi}_{|n})$.

  \textbf{Case 2.}
  Assume $q=\bad$. We have
  $\ell=\ell'$, $\bar{\nu}=\bar{\nu}'$. Indeed, the only $\player_i$ moves
  available in a bad location are $(0, \bot)$ and $(0, \beta_i)$.
  The play either proceeds using an edge to $(\ell, p(\ell))$ or stays
  in $(\ell, \bad)$. The values of clocks do not change in both of these cases.
  By definition of the projection mapping,
  $\pr(\bar{\pi}_{|n+1}) =
  \pr(\bar{\pi}_{|n})((0, \bot), (0, \bot)) (\ell, \bar{\nu}_{|C})$. We let
  $\pi_{n+1}$ be $\pi_n(\sigma(\pi_n), (0, \bot))(\ell, \bar{\nu}_{|C})$:
  $\pi_{n+1}$ is a well-defined play (the last transition is valid
  by the move $(0, \bot)$ of $\player_2$), is consistent with $\sigma$ and has
  the same sequence of states as
  $\pr(\bar{\pi}_{|n+1})$.

  \textbf{Case 3.}
  Assume $q$ is an odd number. We study three sub-cases:
  a.~$q'=\bad$; b.~$q'\neq \bad$ and $a^{(1)}=\beta_1$; c.~$q'\neq \bad$ and
  $a^{(1)}\neq\beta_1$. They are disjoint and cover all possibilities.
  
  We start with \textbf{case 3.a}. Assume that $q'=\bad$. Then
  $d^{(1)}=d^{(2)}=\lambda-\bar{\nu}(z)$ holds:
  first, to enter a bad location from an odd location, $z$ must reach
  $\lambda$, thus $d^{(i)}\geq \lambda-\bar{\nu}(z)$ for $i\in\{1, 2\}$;
  second, the invariant of odd locations implies $z\leq \lambda$, thus
  no delay greater
  than $\lambda-\bar{\nu}(z)$ can be played by either player in the state
  $\last(\bar{\pi}_n) = (\ell, q, \bar{\nu})$.
  It follows from $d^{(1)}=\lambda-\bar{\nu}(z)$ that
  the delay of the move $\sigma(\pi_n)=\sigma(\pr(\bar{\pi}_{|n}))$
  must be greater than or equal to $d^{(1)}$ (by line 1 of the definition of
  $\bar{\sigma}$).
  We have $\pr(\bar{\pi}_{|n+1}) =
  \pr(\bar{\pi}_{|n})((d^{(1)}, \bot), (d^{(2)}, \bot) (\ell', \bar{\nu}'_{|C})$.
  We define $\pi_{n+1}$ to be
  $\pi_n(\sigma(\pi_n), (d^{(2)}, \bot))(\ell', \bar{\nu}'_{|C})$. The
  sequence $\pi_{n+1}$ is a
  well-defined play consistent with $\sigma$; the last step is valid due to
  the delay of  $\sigma(\pi_{n})$ being more than $d^{(2)}=d^{(1)}$.
  By construction, $\pi_{n+1}$ has the same sequence of states as
  $\pr(\bar{\pi}_{|n+1})$.

  Assume for \textbf{cases 3.b and 3.c} that $q'\neq\bad$.
  We start with \textbf{case 3.b}.
  Assume $a^{(1)}=\beta_1$. Then
  $d^{(1)}=\lambda-\bar{\nu}(z)\geq d^{(2)}$ because $m^{(2)}$ is a
  $\player_2$ move enabled in the state $(\ell, q, \bar{\nu})$ which is
  constrained by the invariant $z\leq\lambda$. As $q'\neq\bad$, necessarily
  $a^{(2)}\neq\beta_2$ and $\player_2$ is responsible for the last transition
  of $\bar{\pi}_{|n+1}$.
  We have
  $\pr(\bar{\pi}_{|n+1})=\pr(\bar{\pi}_{|n})
  ((d^{(1)}, \bot), m^{(2)})(\ell', \bar{\nu}'_{|C})$.
  In this case, we set
  $\pi_{n+1}=\pi_n(\sigma(\pi_n), m^{(2)})(\ell', \bar{\nu}'_{|C})$. This is a
  well-defined play, as $d^{(2)}\leq d^{(1)}$ and $d^{(1)}$ is smaller than the delay
  suggested by $\sigma(\pi_n)=\sigma(\pr(\bar{\pi}_{|n}))$ by definition
  of $\bar{\sigma}$.

  For \textbf{case 3.c}, assume $a^{(1)}\neq \beta_1$.
  Then  $(d^{(1)}, a^{(1)})=\sigma(\pr(\bar{\pi}_{|n}))$ by definition of
  $\sigma$.
  To avoid distinguishing cases following whether $a^{(2)}=\beta_2$,
  let $\tilde{a}^{(2)}$ denote $\bot$ if $a^{(2)}=\beta_2$ and $a^{(2)}$
  otherwise. We have
  $\pr(\bar{\pi}_{|n+1})=
  \pr(\bar{\pi}_{|n})({\sigma}(\pr(\bar{\pi}_{|n})), (d^{(2)}, \tilde{a}^{(2)}))
  (\ell', \bar{\nu}'_{|C})$.
  We let $\pi_{n+1}$ be the sequence
  $\pi_n(\sigma(\pi_n), (d^{(2)}, \tilde{a}^{(2)}))(\ell', \bar{\nu}'_{|C})$.
  It is a well-defined play, consistent with $\sigma$ and
  sharing the same sequence
  of states as $\pr(\bar{\pi}_{|n+1})$.
  These last statements follow from $\pr(\bar{\pi}_{|n})$ and $\pi_n$ sharing
  the same sequence of states and move-independence of $\sigma$ ensuring
  $\sigma(\pr(\bar{\pi}_{|n})) = \sigma(\pi_n)$.

  The previous inductive construction shows the existence of a play
  $\pi\in\outcome_1(\sigma)$ such that $\pr(\bar{\pi})$ and $\pi$
  share the same sequence of states. We now argue that if the constructed play
  $\pi$ is time-convergent and $\player_1$-blameless, then $\bar{\pi}$
  is also $\player_1$-blameless.

  Assume that $\pi$ is time-convergent and $\player_1$-blameless. It follows
  that $\bar{\pi}$ is also time-convergent because $\pi$ and $\pr(\bar{\pi})$
  share the same sequence of states (recall the projection mapping
  preserves time-convergence).
  By Proposition~\ref{proposition:expansion_structure},
  either $\bar{\pi}$ does not leave a bad location from some point on or
  satisfies $\cobuchi(\badset)$. If from some point on $\bar{\pi}$ does not
  leave a bad location, then it is $\player_1$-blameless by construction of
  $\bar{\sigma}$: in a bad location, $\bar{\sigma}$ always suggests a move
  that exits the bad location. Assume now that $\bar{\sigma}$ satisfies
  $\cobuchi(\badset)$. Then, from some point on, the construction of $\pi$
  proceeds following cases 1, 3.b and 3.c of the inductive construction.
  In cases 1 and 3.c, the player responsible for the transition that is
  added is the same in both games. In case 3.b, $\player_1$ is not responsible
  for the last transition of $\bar{\pi}_{|n+1}$. Therefore, if from some point
  on $\player_1$ is no longer responsible for transitions in $\pi$, then
  from some point on, $\player_1$ is no longer responsible for transitions
  in $\bar{\pi}$, i.e., $\bar{\pi}$ is $\player_1$-blameless.
\end{proof}

The translations of strategies described in
Lemma~\ref{lemma:translation:expansion}
(resp.~Lemma~\ref{lemma:translation:projection}) establish a relation associating
to any outcome of a translated strategy, an outcome of the original
strategy which shares states with the expansion (resp. projection) of this
outcome.
Recall that all the objectives we have considered are state-based, and
therefore move-independent strategies suffice for these objectives.
Using the relations developed in
Lemma~\ref{lemma:translation:expansion} and
Lemma~\ref{lemma:translation:projection}, and Lemma~\ref{theorem:window_mapping},
we can establish that translating a winning strategy of $\game$ yields a
winning strategy of $\game(p, \lambda)$, and vice-versa.
From this, we can conclude that the realizability problem on TGs with
(resp.~direct) timed window objectives can be reduced to the realizability problem
on TGs with co-Büchi (resp.~safety) objectives.

\begin{theorem}\label{theorem:singledim:reduction_games}
  Let $s_\init$ be the initial state of $\game$ and $\bar{s}_\init$ be
  the initial state of $\game(p, \lambda)$.
  There is a winning strategy $\sigma$ for  $\player_1$  for the objective
  $\mathsf{TW}(p, \lambda)$ (resp. $\mathsf{DTW}(p, \lambda)$)
  from $s_\init$ in $\game$ if and only if there is a winning strategy
  $\bar{\sigma}$ for $\player_1$ for the objective
  $\cobuchi(\badset)$ (resp. $\safe(\badset)$) from $\bar{s}_\init$ in $\game(p, \lambda)$.
\end{theorem}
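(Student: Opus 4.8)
The plan is to establish both implications by translating a winning strategy across the reduction and verifying that the translation is again winning; the two translations are precisely those furnished by Lemma~\ref{lemma:translation:projection} (from $\game$ to $\game(p,\lambda)$) and Lemma~\ref{lemma:translation:expansion} (from $\game(p,\lambda)$ to $\game$). Before invoking these lemmas, I would first note that $\mathsf{TW}(p,\lambda)$, $\mathsf{DTW}(p,\lambda)$, $\cobuchi(\badset)$ and $\safe(\badset)$ are all state-based (indeed the latter two are location-based), so by~\cite{AlfaroFHMS03} move-independent strategies suffice; hence I may assume the winning strategy I start from is move-independent, as both translation lemmas require. Two standing facts will be used repeatedly: the expansion and projection mappings preserve time-divergence, and two plays with the same sequence of states agree on membership in any state-based objective.

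For the direction from $\game$ to $\game(p,\lambda)$, suppose $\sigma$ is a move-independent winning strategy for $\player_1$ from $s_\init$ and let $\bar{\sigma}$ be its translation from Lemma~\ref{lemma:translation:projection}. I would show $\outcome_1(\bar{\sigma}, \bar{s}_\init)\subseteq \wc_1(\cobuchi(\badset))$ (resp.\ $\wc_1(\safe(\badset))$). Fix an initial outcome $\bar{\pi}$ of $\bar{\sigma}$; being initial it is well-initialized, and Lemma~\ref{lemma:translation:projection} yields a companion $\pi\in\outcome_1(\sigma)$ sharing the states of $\pr(\bar{\pi})$, so $\pi$ is an initial outcome of $\sigma$. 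If $\bar{\pi}$ is time-divergent, then $\pr(\bar{\pi})$ and hence $\pi$ are divergent, so $\pi\in\mathsf{TW}(p,\lambda)$ (resp.\ $\mathsf{DTW}(p,\lambda)$) by winningness of $\sigma$; as membership is state-based this carries to $\pr(\bar{\pi})$, and Item~B.\ref{item:projection:2} (resp.\ Item~B.\ref{item:projection:1}) of Lemma~\ref{theorem:window_mapping} gives $\bar{\pi}\in\cobuchi(\badset)$ (resp.\ $\safe(\badset)$). If $\bar{\pi}$ is time-convergent, then $\pi$ is time-convergent and thus $\player_1$-blameless as a convergent outcome of the winning $\sigma$; Lemma~\ref{lemma:translation:projection} then makes $\bar{\pi}$ $\player_1$-blameless, so $\bar{\pi}\in\textsf{Blameless}_1\setminus\plays_\infty(\game(p,\lambda))$. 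In either case $\bar{\pi}$ lies in the winning condition, so $\bar{\sigma}$ is winning.

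The converse direction is symmetric, using the expansion translation. Starting from a move-independent winning strategy $\bar{\sigma}$ from $\bar{s}_\init$, I let $\sigma$ be its translation from Lemma~\ref{lemma:translation:expansion} and fix an initial outcome $\pi$ of $\sigma$. Then $\ex(\pi)$ is initial, hence well-initialized, and Lemma~\ref{lemma:translation:expansion} provides $\bar{\pi}\in\outcome_1(\bar{\sigma})$ sharing the states of $\ex(\pi)$, so $\bar{\pi}$ is an initial (thus well-initialized) outcome of $\bar{\sigma}$. When $\pi$ is time-divergent, $\ex(\pi)$ and hence $\bar{\pi}$ are divergent, so $\bar{\pi}$ satisfies $\cobuchi(\badset)$ (resp.\ $\safe(\badset)$) by winningness of $\bar{\sigma}$; state-basedness carries this to $\ex(\pi)$, and Item~A.\ref{item:expansion:2} (resp.\ Item~A.\ref{item:expansion:1}) of Lemma~\ref{theorem:window_mapping} gives $\pi\in\mathsf{TW}(p,\lambda)$ (resp.\ $\mathsf{DTW}(p,\lambda)$). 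When $\pi$ is time-convergent, $\bar{\pi}$ is time-convergent and $\player_1$-blameless as a convergent outcome of the winning $\bar{\sigma}$, so Lemma~\ref{lemma:translation:expansion} yields that $\pi$ is $\player_1$-blameless. Hence every initial outcome of $\sigma$ lies in $\wc_1(\mathsf{TW}(p,\lambda))$ (resp.\ $\wc_1(\mathsf{DTW}(p,\lambda))$), so $\sigma$ is winning.

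Most of the technical difficulty is already absorbed into Lemmas~\ref{theorem:window_mapping}, \ref{lemma:translation:expansion} and \ref{lemma:translation:projection}, so the main care in this proof lies in splitting cleanly on time-divergence and pairing each case with the correct item of the objective-preservation lemma and with the blamelessness clause of the relevant translation lemma. The point I expect to require the most attention is the alignment of initial states and well-initialization: one must check that an initial outcome on one side is matched to an \emph{initial}, hence well-initialized, companion play on the other side, since this is exactly what licenses the application of Lemma~\ref{theorem:window_mapping}. Once this bookkeeping is in place, both directions go through uniformly for the direct and prefix-independent variants.
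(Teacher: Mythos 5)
Your proposal is correct and follows essentially the same route as the paper's own proof: both directions are handled by the translation lemmas (Lemma~\ref{lemma:translation:projection} and Lemma~\ref{lemma:translation:expansion}), combined with a case split on time-divergence, state-basedness to transfer objective membership between state-sharing plays, Lemma~\ref{theorem:window_mapping} for divergent outcomes, and the blamelessness clauses of the translation lemmas for convergent ones. Your explicit check that initial outcomes correspond to initial (hence well-initialized) companion plays is exactly the observation the paper makes right after Lemma~\ref{theorem:window_mapping}, so nothing is missing.
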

\begin{proof}
  Let $(\Psi, \Psi(p, \lambda))\in\{
  (\mathsf{TW}(p, \lambda), \cobuchi(\badset)),
  (\mathsf{DTW}(p, \lambda), \safe(\badset))\}$.

  Assume $\player_1$ has a winning strategy $\sigma$ for the objective $\Psi$ in
  $\game$ from $s_\init$. We assume this strategy to be move-independent
  because $\Psi$ is a state-based objective.
  By Lemma~\ref{lemma:translation:projection}, there is a strategy
  $\bar{\sigma}$ such that for any play
  $\bar{\pi}\in\outcome_1(\bar{\sigma}, \bar{s}_\init)$, there is a play
  $\pi\in\outcome_1(\sigma, s_\init)$ such that the sequence of states
  of $\rho$ and $\pr(\bar{\pi})$ coincide, and if $\pi$ is time-convergent
  and $\player_1$-blameless, then $\bar{\pi}$ is $\player_1$-blameless.

  Fix $\bar{\pi}\in\outcome_1(\bar{\sigma}, \bar{s}_\init)$ and let $\pi$
  as above. We establish that $\bar{\pi}\in\wc_1(\Psi(p, \lambda))$.
  Because $\sigma$ is winning, $\pi\in\wc_1(\Psi)$.
  First, assume $\bar{\pi}$ is time-divergent. Then $\pi$ is
  also time-divergent. It follows that
  $\pi\in\Psi$ and since $\Psi$ is state-based,
  $\pr(\bar{\pi})\in\Psi$. Lemma~\ref{theorem:window_mapping} ensures
  that $\bar{\pi}\in\Psi(p, \lambda)$. Assume now that $\bar{\pi}$ is
  time-convergent. Then $\pi$ is also time-convergent.
  Because $\sigma$ is winning, $\pi$ is $\player_1$-blameless. This implies
  $\player_1$-blamelessness of $\bar{\pi}$.
  Thus, we have $\bar{\pi}\in\wc_1(\Psi(p, \lambda))$ and have shown
  that $\bar{\sigma}$ is winning.

  Conversely, assume $\player_1$ has a winning strategy $\bar{\sigma}$ in
  $\game(p, \lambda)$ for objective $\Psi(p, \lambda)$ from $\bar{s}_\init$.
  We assume that the strategy $\bar{\sigma}$ is move-independent.
  We show that the strategy $\sigma$ defined in
  Lemma~\ref{lemma:translation:expansion} is winning in $\game$ from $s_\init$.
  Fix $\pi\in\outcome_1(\sigma, s_\init)$. Let
  $\bar{\pi}\in\outcome_1(\bar{\sigma}, \bar{s}_\init)$ be such that
  the sequence of states of $\bar{\pi}$ coincides with that of $\ex(\pi)$,
  and that if $\bar{\pi}$ if time-convergent and $\player_1$-blameless,
  then $\pi$ is also blameless. We
  prove that $\pi\in \wc_1(\Psi)$ and distinguish cases following
  time-divergence of $\pi$.
  Assume $\pi$ is time-divergent, then so is $\bar{\pi}$. The play $\bar{\pi}$
  is consistent with a winning strategy and time-divergent, thus
  $\bar{\pi}\in\Psi(p, \lambda)$. Safety/co-Büchi objectives are state-based,
  and hence $\bar{\pi}\in\Psi(p,\lambda)$ if and only if
  $\ex(\pi)\in\Psi(p,\lambda)$.
  It follows from
  Lemma~\ref{theorem:window_mapping} that $\pi\in\Psi$. Now assume
  that $\pi$ is time-convergent. Then, $\bar{\pi}$ is time-convergent thus
  $\player_1$-blameless, therefore $\pi$ is also
  $\player_1$-blameless. We have proven that $\pi\in\wc_1(\Psi)$, which ends the proof.
\end{proof}

\section{Multi-dimensional objectives}\label{section:multi}
We can generalize the former reduction to conjunctions of (resp.~direct)
timed window parity objectives. Fix a TG
$\game=(\automaton, \Sigma_1, \Sigma_2)$
with set of locations $L$. We consider a $k$-dimensional
priority function $p\colon L\to \{0, \ldots, d-1\}^k$. We write
$p_i\colon L\to \{0, \ldots, d-1\}$ for each component function. Fix
$\lambda = (\lambda_1, \ldots, \lambda_k)\in (\IN\setminus\{0\})^k$
a vector of bounds on window sizes.

Generalized (resp.~direct) timed window objectives are defined as
conjunctions of (resp. direct) timed window objectives. Formally, the
\emph{generalized timed window (parity) objective} is defined as
\[\mathsf{GTW}(p, \lambda) =
  \bigcap_{1\leq i\leq k}\mathsf{TW}(p_i, \lambda_i)\]
and the \emph{generalized direct timed window (parity) objective} as 
\[\mathsf{GDTW}(p, \lambda) =
  \bigcap_{1\leq i\leq k}\mathsf{DTW}(p_i, \lambda_i).\]

The verification and realizability problems for these objectives can be
solved using a similar construction to the single-dimensional case.
The inductive property (Lemma~\ref{lemma:inductive_prop})
ensures only one window needs to be monitored at a time on each dimension.
We adapt the construction of the expanded TA to keep track of several
windows at once.

Expanded locations are labeled with vectors $q = (q_1, \ldots, q_k)$ where
$q_i$ represents the smallest priority in the current window on dimension
$i$ for each $i$. Besides these, for each location $\ell$, there is an
expanded location marked by $\bad$.
We do not keep track of the dimension on which a bad window was seen:
we only consider one kind of bad location.
To measure the size of a window in each dimension, we introduce $k$ new clocks
$z_1, \ldots, z_k\notin C$. For any $q\in\{0, \ldots, d-1\}^k$,
denote by $O_q$ the set $\{1\leq i\leq k\mid q_i\bmod 2 =1\}$ of indices of
components that are odd.

Updates of the vector of priorities are independent between dimensions and
follow the same logic as the simpler case. In the single-dimensional case,
an edge leaving an even location $(\ell, q)$ leads to a location of
the form $(\ell', p(\ell'))$ and an edge leaving an odd location $(\ell, q)$
leads to a location of the form $(\ell', \min\{q, p(\ell')\})$.
The behavior of edges is similar in this case. 
The update function $\mathsf{up}$ generalizes the handling of updates.
Let $q$ be a vector of current priorities and $\ell$ be
a location of $\automaton$. We set $\mathsf{up}(q, \ell)$ to be the
vector $q'$ such that $q'_i = \min\{q_i, p_i(\ell)\}$ if $i\in O_q$ and
$q'_i = p_i(\ell)$ otherwise.

For any non-bad location $(\ell, q)$
and edge $(\ell, g, a, D, \ell')$ leaving location $\ell$, there is a
matching edge in the expanded TA. The target location of this edge is
$(\ell', \mathsf{up}(q, \ell'))$. Its guard is a strengthened
version of $g$: it is disabled if a bad window is detected on some dimension by
adding a conjunct $z_i<\lambda_i$ for each $i\in O_q$ to the guard of the edge.
In the same manner that odd locations were equipped with a strengthened
invariant of the form $I(\ell)\land z\leq\lambda$, expanded locations
$(\ell, q)$ are equipped with an invariant that is the conjunction of
$I(\ell)$ with the conjunction over $O_q$ of $z_i\leq\lambda_i$.

It remains to discuss how bad locations are generalized. They have the
invariant $z_1=0$ ($z_1$ is chosen arbitrarily) and for each
$\beta\in\{\beta_1, \beta_2\}$ and location
$\ell$, there is an edge
$((\ell, \bad), \true, \beta, \varnothing, (\ell, p(\ell)))$.
In the single-dimensional case, each odd location had an edge to a
bad location with a guard $z=\lambda$.
Analogously, for any location $(\ell, q)$ with $O_q$ non-empty, there
are edges to
$(\ell, \bad)$. There are two such edges (one for each $\beta\in\{\beta_1, \beta_2\}$) per dimension in $O_q$.
Recall that there cannot be two edges $(\ell, g, a, D, \ell')$
and $(\ell, h, a, D', \ell'')$ with $g\land h$ satisfiable. This prevents
adding $2\cdot |O_q|$ edges guarded by $z_i=\lambda_i$. We do not need more than
$2\cdot |O_q|$ edges however. For instance, instead of having one edge guarded
by $z_1=\lambda_1$ and another guarded by $z_2=\lambda_2$, we can replace
the guard of the second edge by $z_1<\lambda_1 \land z_2=\lambda_2$ to
ensure the guards are incompatible.
Upon entry to a bad location, all additional clocks $z_1$, \ldots, $z_k$
are reset, no matter the dimension on which a bad window was detected.

The formal definition of $\automaton(p, \lambda)$ follows.
\begin{definition}\label{definition:multidim:extended_automaton}
  Given a TA $\automaton = (L, \ell_\init, C, \Sigma, I, E)$,
  $\automaton(p, \lambda)$ is defined to be the
  TA $(L', \ell'_\init, C', \Sigma', I', E')$
  such that
  \begin{itemize}
  \item $L' = L\times (\{0, \ldots, d-1\}^k\cup \{\bad\})$;
  \item $\ell'_\init = (\ell_\init, p(\ell_\init))$;
  \item $C' = C\cup \{z_1, \ldots, z_k\}$ where $z_1, \ldots, z_k\notin C$;
  \item $\Sigma' = \Sigma\cup \{\beta_1, \beta_2\}$ where
    $\beta_1$, $\beta_2\notin\Sigma$;
  \item $I'(\ell, q) = (I(\ell) \land \bigwedge_{i\in O_q}(z_i\leq\lambda_i))$
    for all $\ell\in L$, $q\in\{0, \ldots, d-1\}^k$, and
    $I'(\ell, \bad) = (z_1=0)$ for all $\ell\in L$;
  \item the set of edges $E'$ of $\automaton(p, \lambda)$ is the smallest
    set that satisfies the following rules:
    \begin{itemize}
    \item if $q\neq bad$ and $(\ell, g, a, D, \ell')\in E$, then
      \[\bigg((\ell, q), g\land \bigwedge_{i\in O_q}(z_i< \lambda_i), a,
        D\cup\{z_i\mid i\notin O_q\}, (\ell', \mathsf{up}(q, \ell')\bigg)\in E';\]
    \item for all $\ell\in L$, $q\neq\bad$, $i\in O_q$ and
      $\beta\in\{\beta_1, \beta_2\}$,
      \(((\ell, q), \varphi_i, \beta, \{z_1, \ldots, z_k\}, (\ell, \bad))\in E'\)
      where the clock condition $\varphi_i$ is defined by
      \[\varphi_i \coloneqq (z_i = \lambda_i) \land
        \bigwedge_{\substack{1\leq j < i\\ j\in O_q}} (z_j< \lambda_j);\]
    \item for all $\ell\in L$ and $\beta\in\{\beta_1, \beta_2\}$,
      $((\ell, \bad), \true, \beta, \varnothing, (\ell, p(\ell))\in E'.$
    \end{itemize}
  \end{itemize}

  For a TG $\game = (\automaton, \Sigma_1, \Sigma_2)$,
  we set $\game(p, \lambda) = (\automaton(p, \lambda),
  \Sigma_1\cup\{\beta_1\}, \Sigma_2\cup\{\beta_2\})$.
\end{definition}

The generalized (resp.~direct) timed window objective on a TA
$\automaton$ or TG $\game$ translates to a co-Büchi (resp.~safety) objective on
the expansion $\automaton(p, \lambda)$ and $\game(p, \lambda)$ respectively.
Let $\badset = L\times \{\bad\}$.
We can derive the following theorems by adapting
the proofs of the single-dimensional case
(Theorems~\ref{corollary:singledim:verification}
and~\ref{theorem:singledim:reduction_games}). The only nuance is that there are
several dimensions to be handled in parallel. However, different dimensions
minimally interact with one another. The only way a dimension may affect the
others is by resetting them in the expansion (when a bad location is visited),
and that would entail a bad window was detected on one dimension.

\begin{theorem}\label{theorem:multidim:verification}
  All time-divergent paths of $\automaton$ satisfy the generalized
  (resp.~direct) timed window objective if and only if all time-divergent
  paths of $\automaton(p, \lambda)$ satisfy the co-Büchi (resp. safety)
  objective over bad locations.
\end{theorem}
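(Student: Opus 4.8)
The plan is to mirror the proof of Theorem~\ref{corollary:singledim:verification}: that result reduced verification to the objective-preservation statement of Lemma~\ref{theorem:window_mapping} and then invoked a degenerate-game trick, so here I would first lift Lemma~\ref{theorem:window_mapping} to the multi-dimensional setting and afterwards replay the degenerate-game argument unchanged. Concretely, I would define the expansion and projection mappings exactly as before, except that they now track the vector $q$ of per-dimension smallest priorities together with the clocks $z_1, \dots, z_k$, and a bad location is entered as soon as some $z_i$ would reach $\lambda_i$ in a location with $i \in O_q$; entering a bad location resets \emph{all} of $z_1, \dots, z_k$ (Definition~\ref{definition:multidim:extended_automaton}). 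The goal of this first step is to show that for every time-divergent play $\pi$ of $\game$, we have $\pi \in \mathsf{GDTW}(p,\lambda)$ (resp.~$\pi \in \mathsf{GTW}(p,\lambda)$) iff $\ex(\pi) \in \safe(\badset)$ (resp.~$\ex(\pi) \in \cobuchi(\badset)$), and symmetrically for projections of well-initialized plays.

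The direct/safety case is the simple one. Along non-bad edges the priority update $\mathsf{up}$ and the clock resets act componentwise and independently, an edge resetting exactly the clocks $z_i$ with $i \notin O_q$; hence, \emph{as long as no bad location is visited}, each dimension evolves precisely as in the single-dimensional construction. Therefore $\ex(\pi)$ avoids $\badset$ iff no dimension ever exhibits a bad window, which by the single-dimensional argument applied componentwise is equivalent to $\pi \in \bigcap_{i} \mathsf{DTW}(p_i,\lambda_i) = \mathsf{GDTW}(p,\lambda)$. The cross-dimension reset simply never fires in this regime, so no spurious reopening occurs.

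The non-direct/co-Büchi case is where the \textbf{main obstacle} lies: bad locations may now be visited, and the all-clock reset reopens windows on \emph{every} dimension, which is the only way dimensions interact. I would argue this is benign. For the contrapositive of one direction, suppose infinitely many bad locations occur; by pigeonhole some fixed dimension $i$ triggers them infinitely often. Each such detection concerns an expansion window on dimension $i$ opened at some reset step $m$ and found odd-dominated for $\lambda_i$ time. Since the expansion's $q_i$ is the running minimum of $p_i$ since the last reset, I would combine this with the inductive property (Lemma~\ref{lemma:inductive_prop}) and the fact that the minimum of two odd values is odd to show that the \emph{genuine} window on dimension $i$ opened at the last true closure $m_i \le m$ also has odd minimum over the whole relevant interval, hence fails to close within $\lambda_i$ time and is a real bad window; this contradicts $\pi \in \mathsf{TW}(p_i,\lambda_i)$. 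Conversely, if only finitely many bad locations occur, then after the last one (a well-initialized suffix, by the structure underlying Proposition~\ref{proposition:expansion_structure}) no further resets take place, each dimension again evolves independently and exhibits no bad window, so it satisfies $\mathsf{DTW}(p_i,\lambda_i)$ on that suffix; prefix-independence (taking a common suffix over the finitely many dimensions) yields $\pi \in \bigcap_i \mathsf{TW}(p_i,\lambda_i) = \mathsf{GTW}(p,\lambda)$.

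With the multi-dimensional objective-preservation statement established, the theorem follows verbatim from the proof of Theorem~\ref{corollary:singledim:verification}. Given a time-divergent path of $\automaton$ violating the generalized (resp.~direct) objective, I would view it as a time-divergent play of the degenerate game $\game = (\automaton, \varnothing, \Sigma)$; its expansion $\ex(\cdot)$ is a time-divergent play of $\game(p,\lambda)$ violating $\cobuchi(\badset)$ (resp.~$\safe(\badset)$), and a path of $\automaton(p,\lambda)$ sharing its sequence of states witnesses the right-hand side. The reverse direction is identical using the projection mapping $\pr$. Thus the only genuinely new content is verifying that the cross-dimension reset neither manufactures nor erases real bad windows beyond what prefix-independence absorbs, which is precisely the step sketched above.
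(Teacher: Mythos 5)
Your proposal is correct and takes essentially the same route as the paper: the paper derives Theorem~\ref{theorem:multidim:verification} by adapting the single-dimensional argument (the multi-dimensional analogue of Lemma~\ref{theorem:window_mapping} followed by the degenerate-game reduction of Theorem~\ref{corollary:singledim:verification}), observing only that dimensions interact solely via the all-clock reset upon visiting a bad location. Your pigeonhole argument for the co-B\"uchi direction (each detection on a fixed dimension $i$ witnesses a genuine bad window for $p_i$, since $q_i$ is the running minimum since the last reset of $z_i$) and the common safe suffix after the last bad location are precisely the details the paper leaves as a remark, so the proposal is sound and structurally identical.
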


\begin{theorem}\label{theorem:multidim:games}
  There is a winning strategy $\sigma$ for  $\player_1$ for the objective
  $\mathsf{GTW}(p, \lambda)$ (resp. $\mathsf{GDTW}(p, \lambda)$)
  from the initial state of $\game$ if and only if there is a winning strategy
  $\bar{\sigma}$ for $\player_1$ for the objective $\cobuchi(\badset)$
  (resp. $\safe(\badset)$) from the initial state of $\game(p, \lambda)$.
\end{theorem}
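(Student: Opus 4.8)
The plan is to reproduce, for the multi-dimensional expansion $\game(p,\lambda)$ of Definition~\ref{definition:multidim:extended_automaton}, the entire chain of auxiliary results established for the single-dimensional reduction, and then invoke the argument of Theorem~\ref{theorem:singledim:reduction_games} essentially verbatim. Concretely, I would first define multi-dimensional analogues of the expansion mapping $\ex$ and the projection mapping $\pr$, then prove the multi-dimensional counterpart of Lemma~\ref{theorem:window_mapping} (objective preservation), the counterpart of Proposition~\ref{proposition:expansion_structure} (structure of time-convergent plays), and the two strategy-translation lemmas (Lemmas~\ref{lemma:translation:expansion} and~\ref{lemma:translation:projection}). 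Once these are in place, the proof of Theorem~\ref{theorem:multidim:games} follows the proof of Theorem~\ref{theorem:singledim:reduction_games} word for word, since that argument only uses state-basedness of the objectives, the translation lemmas, and objective preservation, none of which refers to the number of dimensions.

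The one genuinely new phenomenon, and the reason the auxiliary lemmas must be re-checked rather than literally reused, is the coupling between dimensions introduced by bad locations: upon entering $(\ell,\bad)$ the construction resets all clocks $z_1,\dots,z_k$ and, upon leaving, reopens a fresh window on every dimension (the target being $(\ell, p(\ell))$). The key observation I would exploit is that, as long as no bad location is visited, the components of the vector $q$ and the clocks $z_i$ evolve completely independently across dimensions: by the edge rule of Definition~\ref{definition:multidim:extended_automaton}, $z_i$ is reset and $q_i$ set to $p_i(\ell')$ exactly when $q_i$ is even (the dimension-$i$ window has closed), whereas $q_i$ is updated to $\min\{q_i,p_i(\ell')\}$ with $z_i$ left running exactly when $q_i$ is odd --- precisely the single-dimensional bookkeeping applied coordinatewise. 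Hence on any bad-location-free segment each dimension $i$ behaves as in the single-dimensional expansion for $(p_i,\lambda_i)$, and a bad location is entered if and only if some dimension exhibits a bad window (its $z_i$ reaches $\lambda_i$ while $q_i$ is odd), the threshold delay triggering this being $\min_{i\in O_q}(\lambda_i-\bar{\nu}(z_i))$.

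With this observation the objective-preservation lemma splits cleanly. For the safety/$\mathsf{GDTW}$ equivalence, a play satisfies $\mathsf{GDTW}(p,\lambda)$ iff it satisfies $\mathsf{DTW}(p_i,\lambda_i)$ on every dimension, iff no dimension ever produces a bad window, iff the expansion never visits a bad location; here the reset-all coupling is harmless because any visit already witnesses a violation on some coordinate, so it can only occur after $\mathsf{GDTW}$ has failed, and when $\mathsf{GDTW}$ holds the segment is bad-location-free and the per-dimension single-dimensional argument (the first even location closes the current window, then Lemma~\ref{lemma:inductive_prop}) applies to each coordinate. For the co-Büchi/$\mathsf{GTW}$ equivalence I would argue exactly as in the single-dimensional Items~A.2 and B.2, using prefix-independence and well-initialization/suffix compatibility: well-initialized suffixes of an expansion are expansions of delayed suffixes; infinitely many bad locations yield infinitely many bad windows on some coordinate, hence failure of some $\mathsf{TW}(p_i,\lambda_i)$ and of $\mathsf{GTW}$; and a co-Büchi expansion has a safe well-initialized suffix, reducing to the $\mathsf{GDTW}$ case via the safety equivalence. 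The structure proposition carries over with $\min_i\lambda_i$ replacing $\lambda$: after leaving a bad location all $z_i$ equal $0$, and re-entering one requires some $z_i$ to reach $\lambda_i$, so at least $\min_i\lambda_i>0$ time elapses between separated bad visits, which forbids infinitely many of them along a time-convergent play.

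I expect the main obstacle to be the formal multi-dimensional expansion mapping itself, in particular the subcase where a single original delay crosses several window bounds while the reopened vector $p(\ell)$ still has odd components: as in single-dimensional case~\ref{expansion:q:odd}.b one must interleave repeated visits to the bad location with $\beta_i$-moves until the residual delay drops below the next threshold, but now that threshold is $\min_{i\in O_{p(\ell)}}\lambda_i$, and the guards $\varphi_i$ must be triggered in the prescribed (lexicographic) order so that exactly one edge into $(\ell,\bad)$ is enabled at a time. Verifying that this interleaving is a well-defined play of $\game(p,\lambda)$ and that the $\player_2$-move adjustments preserve both consistency with the translated strategy and $\player_1$-blamelessness across all coordinates is the bookkeeping-heavy part; everything downstream, including the final strategy-transfer argument, is then a transcription of the single-dimensional proofs.
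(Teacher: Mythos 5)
Your proposal is correct and takes essentially the same approach as the paper: the paper proves this theorem precisely by adapting the single-dimensional chain (expansion/projection mappings, objective preservation, the structure proposition, and the two strategy-translation lemmas), and its stated justification is exactly your key observation --- dimensions interact only through the reset-all coupling at bad locations, which can fire only when a bad window has already been witnessed on some dimension. If anything, your write-up is more explicit than the paper's, which leaves the adaptation (including the lexicographically ordered guards $\varphi_i$ and the $\min_i\lambda_i$ threshold in the time-convergence structure argument) as a brief remark.
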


\section{Algorithms and complexity}\label{section:complexity}
This section presents algorithms for solving the verification
and realizability problems for the generalized (resp.~direct)
timed window parity objective. We establish lower bounds that match the complexity class of our algorithms.  We consider the general multi-dimensional setting
and we denote by $k$ the number of timed window
parity objectives under consideration.

\subsection{Algorithms}
We use the construction presented in the previous sections to solve both
the realizability problem and the verification problem for the
(resp.~direct) timed window objective.
In both cases, we invoke known sub-algorithms.
For games, we rely on a general algorithm for
solving TGs with $\omega$-regular location-based objectives.
For automata, we use an algorithm for the emptiness problem of timed automata.

First, we analyze the time required to construct the expanded TA with respect
to the inputs to the problem.
\begin{lemma}\label{lemma:size:reduction}
  Let $\automaton=(L, \ell_\init, C, \Sigma, I, E)$ be a TA.
  Let $p\colon L\to\{0, \ldots, d-1\}^k$ be a $k$-dimensional priority
  function and $\lambda\in (\IN\setminus\{0\})^k$ be a vector of bounds on
  window sizes.
  The expanded TA
  $\automaton(p, \lambda)=(L', \ell_\init', C', \Sigma', I', E')$
  can be computed in time exponential in $k$ and polynomial in the size of
  $L$, the size of $E$, the size of $C$, $d$, the length of the encoding of
  the clock constraints of $\automaton$ and in the encoding of $\lambda$.
\end{lemma}

\begin{proof}
  We analyze each component of $\automaton(p, \lambda)$.
  First, we study the size of the set of expanded locations. The set of expanded
  locations $L'$ is given by $L' = L\times (\{0, \ldots, d-1\}^k\cup \{\bad\})$.
  Therefore, $|L'| = |L|\cdot (d^k+1)$. The set of clocks $C'$ contains
  $|C|+k$ clocks.

  To determine a bound on the size of the set of edges, we base ourselves
  on the rules that define $E'$. For each
  edge $(\ell, g, a, D, \ell')\in E$ and $q\in\{0, \ldots, d-1\}^k$,
  there is an edge in $E'$ exiting location $(\ell, q)$.
  There are
  $|E| \cdot d^k$ edges obtained this way. Furthermore, for each non-bad
  expanded location $(\ell, q)$, there are up to $2k$ edges to the location
  $(\ell, \bad)$. There are at most $2k \cdot |L|\cdot d^k$ edges obtained by this
  rule. Finally, it remains to count the edges that leave bad locations.
  There are two such edges per bad location, totaling to $2 \cdot|L|$.
  Overall, an upper bound on the size of $E'$ is given by
  $|E| \cdot d^k+2k \cdot |L|\cdot d^k+2 \cdot|L|$.

  It remains to study the total length of the encoding of the clock constraints of
  $\automaton(p, \lambda)$. We have shown above that $L'$ and $E'$
  are of size exponential in $k$ and polynomial in $|L|$, $|E|$
  and $d$. There are as many clock constraints
  in $\automaton(p, \lambda)$ as there are locations and edges. Therefore,
  it suffices to show that the encoding of each individual clock constraint
  of $\automaton(p, \lambda)$ is polynomial
  in the total length of the encoding of the clock constraints of $\automaton$,
  $k$ and the encoding of $\lambda$ to end the proof.
  
  A clock constraint of $\automaton(p, \lambda)$
  is either derived from a clock constraint of $\automaton$,
  the invariant of a bad location or the guard of
  an edge to or from a bad location.
  Clock constraints derived from $\automaton$ are either unchanged, or they
  are obtained by reinforcing a clock constraint of $\automaton$ with
  conjuncts $z_i\leq\lambda_i$ in the case of invariants or
  with a conjuncts $z_i<\lambda_i$ in the case of guards. At most, we extend
  clock constraints of $\automaton$ with $k$ conjuncts that can be encoded
  linearly w.r.t.~the encoding of the $\lambda_i$. The invariant of bad
  locations is $z_1=0$ and is therefore constant in length. The guards of
  edges exiting bad locations are $\true$ and are also constant in length.
  Finally, guards of edges to bad locations are conjunctions of the form
  $z_{i_1}<\lambda_{i_1}\land\ldots\land z_{i_{j-1}}<\lambda_{i_{j-1}}\land
  z_{i_j}=\lambda_{i_j}$, which can be encoded linearly w.r.t.~$k$ and
  the encoding of the $\lambda_i$.
\end{proof}

\subparagraph{Timed games.} To solve the realizability problem, we rely on
the algorithm from~\cite{AlfaroFHMS03}.
To use the algorithm of~\cite{AlfaroFHMS03}, we introduce
deterministic parity automata. For all finite alphabets $A$,
deterministic parity automata can represent all $\omega$-regular
subsets of $A^\omega$. We use deterministic parity automata to encode
location-based objectives (the set of locations of the studied timed automaton
serves as the alphabet of the parity automaton).

Let $A$ be a finite non-empty alphabet.
A \textit{deterministic parity automaton} (DPA) of order $m$ is a tuple
$H = (Q, q_\init, A, \delta, \Omega)$, where $Q$ is a finite set of states,
$q_\init\in Q$ is the initial state, $\delta\colon Q\times A\to Q$ is the
transition function and $\Omega: Q\to \{0, \ldots, 2m-1\}$ is a function assigning
a priority to each state of the DPA. An execution of $H$ on an infinite
word $a_0a_1\ldots\in A^\omega$ is an infinite sequence of states
$q_0q_1\ldots\in Q^\omega$ that starts in the initial state of $H$, i.e.,
$q_0=q_\init$ and such that for all $i\in\IN$, $q_{i+1}=\delta(q_i, a_i)$,
i.e., each step of
the execution is performed by reading a letter of the input word.
An infinite word $w\in A^\omega$ is accepted by $H$ if there is an execution
$q_0q_1\ldots\in Q^\omega$ such that the smallest priority appearing
infinitely often along the execution is even, i.e. if
$(\liminf_{n\to\infty}\Omega(q_i))\bmod 2=0$. A DPA
is \textit{total} if its transition function is total, i.e., for all $q\in Q$
and $a\in A$, $\delta(q, a)$ is defined.

The algorithm of~\cite{AlfaroFHMS03} checks the existence of a
$\player_1$-winning strategy in a TG
$\game=(\automaton, \Sigma_1, \Sigma_2)$ where the set of locations of
$\automaton$ is $L$ and the set of clocks is $C$
with a location objective specified by a total DPA
$H$ with set of states $Q$ and of order $m$ in time
\begin{equation}\label{equation:complexity:games}
  \mathcal{O}\left(\left(|L|^2\cdot |C|!\cdot
      2^{|C|}\cdot \prod_{x\in C}(2c_x+1) \cdot |Q| \cdot m\right)^{m+2}\right),
\end{equation}
where $c_x$ is the largest constant to which $x$ is compared to in the
clock constraints of $\automaton$.

We now show that the realizability problem for generalized (resp.~direct)
timed window objectives is in $\textsf{EXPTIME}$ using the algorithm
described above. This essentially boils down to specifying DPAs that
encode safety and co-Büchi objectives and using these along with the
algorithm of~\cite{AlfaroFHMS03} to check the existence of a
$\player_1$-winning strategy in $\game(p, \lambda)$.

\begin{proof}
Fix a TG $\game = (\automaton, \Sigma_1, \Sigma_2)$ and let $L$ denote the
set of locations of $\automaton$. Let $p$ be a multi-dimensional priority
function and $\lambda\in (\IN\setminus\{0\})^k$ be a vector of window sizes.
Let $L'$ denote the set of locations of $\automaton(p, \lambda)$.
We describe total DPAs
$H_{\safe(\badset)}$ and $H_{\cobuchi(\badset)}$ for the objectives $\safe(\badset)$
and $\cobuchi(\badset)$ in the TG $\game(p, \lambda)$.

We encode the safety objective using a DPA with two states. Intuitively,
the initial state is never left as long as no bad location is read. If the
DPA reads a bad location, it moves to a
sink state, representing that the safety objective was violated.
The initial state is given an even priority
and the sink state an odd priority so that accepting executions are those
that read sequences of locations that respect the safety objective.
Formally, let $H_{\safe(\badset)}$ be the DPA
$(Q, q_\init, L', \delta, \Omega)$ where $Q = \{q_\init, q_\bad\}$,
$\Omega(q_\init) = 0$ and $\Omega(q_\bad) = 1$, and the transition function
is defined by $\delta(q_\init, (\ell, q)) = q_\init$ for all $(\ell, q)\in L'\setminus \badset$, $\delta(q_\init, (\ell, \bad)) = q_\bad$ for all $\ell\in L$
and $\delta(q_\bad, (\ell, q))=q_\bad$ for all $(\ell, q)\in L'$. This
DPA is total by construction.

The co-Büchi objective is also encoded by a DPA with two states.
The first state $q_\init$ is entered every time a non-bad location is read
by the DPA and the second state $q_\bad$ whenever a bad location
is read. Runs that visit $q_\bad$ infinitely often violate the co-Büchi
objective, therefore we give $q_\bad$ an odd priority smaller than the
even priority of $q_\init$.
Formally, let $H_{\cobuchi(\badset)}$ be the DPA
$(Q, q_\init, L', \delta, \Omega)$ where $Q = \{q_\init, q_\bad\}$,
$\Omega(q_\init) = 2$ and $\Omega(q_\bad) = 1$, and the transition function
is defined by $\delta(q, (\ell, q)) = q_\init$ for all $(\ell, q)\in L'\setminus \badset$ and $q\in\{q_\init, q_\bad\}$, and
$\delta(q, (\ell, \bad)) = q_\bad$ for all $\ell\in L$ and
$q\in\{q_\init, q_\bad\}$.
This deterministic parity automaton is total by construction.

By Theorem~\ref{theorem:multidim:games}, it suffices to check the existence of
a winning strategy for $\player_1$ in the expanded TG $\game(p, \lambda)$ to
answer the realizability problem over $\game$.
By Lemma~\ref{lemma:size:reduction}, the construction of the TG
$\game(p, \lambda)$ from $\game$ takes exponential time w.r.t.~to the
inputs to the problem.  

Recall that there are
$|L|\cdot (d^k+1)$  locations in $\automaton(p, \lambda)$.
We have shown that the objective $\safe(\badset)$ (resp.~$\cobuchi(\badset)$)
can be encoded using a DPA with two states and order at most $2$.
Combining this with
equation \eqref{equation:complexity:games}, it follows that we
can check the existence of a $\player_1$-winning strategy in $\game(p, \lambda)$
for a safety or co-Büchi objective in time
\begin{equation}\label{equation:complexity:window_games}
  \mathcal{O}\left(\left(
      |L|^2\cdot (d^k+1)^2 \cdot (|C| + k)!\cdot 2^{|C|+k}
      \prod_{x\in C}(2c_x+1) \cdot \prod_{1\leq i\leq k}(2\lambda_i+1)
    \right)^4\right),
\end{equation}
where $c_x$ is the largest constant to which $x$ is compared to in the
clock constraints of $\automaton$.

Overall, $\game(p, \lambda)$ can be constructed in exponential time
and the existence of
a $\player_1$-winning strategy in $\game(p, \lambda)$ can be checked in exponential time,
establishing \textsf{EXPTIME}-membership of the realizability problem for
(resp.~direct) timed window objectives.
\end{proof}

\subparagraph{Timed automata.} We describe a
polynomial space algorithm for the verification problem for
generalized (resp.~direct) timed window objectives derived from our reduction.
First, we remark that to check that all time-divergent paths of a TA
satisfy a conjunction of objectives, we can proceed one objective at a time:
for any family of objectives, there is some time-divergent path that
does not satisfy the conjunction of objectives in the family if and only if,
for some objective from the family, there is some time-divergent path that does
not satisfy it. This contrasts with TGs, in which $\player_1$ may have a
winning strategy for each individual objective but is unable to satisfy their
conjunction.

Following the observation above, we establish membership in
\textsf{PSPACE} of the verification problem for the generalized
(resp.~direct) timed window objective in two steps. First, we argue that the
verification problem is in \textsf{PSPACE} for the single-dimensional (resp.~direct) timed window parity objective.
Then, in the multi-dimensional setting,
we use the single-dimensional case as an oracle to check satisfaction of
a generalized objective one dimension at a time.

\begin{lemma}\label{lemma:complexity:automata}
  The generalized (direct) direct timed window verification problem
  is in \textsf{PSPACE}.
\end{lemma}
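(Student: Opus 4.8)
The plan is to reduce the problem to emptiness checks for timed automata via the expansion construction, treating the one-dimensional case first and then decomposing the multi-dimensional conjunction over paths; the decomposition is what will keep the procedure in polynomial space. I would begin with the single-dimensional case ($k=1$): given a TA $\automaton$, a one-dimensional priority function $p$ and a bound $\lambda$, Theorem~\ref{corollary:singledim:verification} states that all time-divergent paths of $\automaton$ satisfy the (resp.~direct) timed window objective if and only if all time-divergent paths of $\automaton(p, \lambda)$ satisfy $\cobuchi(\badset)$ (resp.~$\safe(\badset)$). By Lemma~\ref{lemma:size:reduction} with $k = 1$, the expanded TA $\automaton(p, \lambda)$ has $|L|\cdot(d+1)$ locations and $|C|+1$ clocks, with clock constraints whose encoding is polynomial in the input (including the encoding of $\lambda$); hence it is of polynomial size and computable in polynomial time. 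I would then decide the complement of the (universal) verification problem: for $\safe(\badset)$, whether some time-divergent path of $\automaton(p, \lambda)$ reaches a bad location, and for $\cobuchi(\badset)$, whether some time-divergent path visits bad locations infinitely often. These are, respectively, the time-divergent reachability and B\"uchi emptiness problems on a polynomial-size timed automaton, both in \textsf{PSPACE} by the algorithm of~\cite{AlurD94}; since \textsf{PSPACE} is closed under complementation, the single-dimensional verification problem lies in \textsf{PSPACE}.

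For general $k$ I would invoke the observation stated just before the lemma: a conjunction of objectives fails on some time-divergent path exactly when one of its conjuncts fails on some time-divergent path, because a single path witnessing the violation of $\bigcap_{1\leq i\leq k}\mathsf{TW}(p_i, \lambda_i)$ already violates some $\mathsf{TW}(p_i, \lambda_i)$, and conversely a path violating one conjunct violates the whole conjunction. Consequently, all time-divergent paths of $\automaton$ satisfy the generalized (resp.~direct) objective if and only if, for every dimension $i \in \{1, \ldots, k\}$, all time-divergent paths satisfy $\mathsf{TW}(p_i, \lambda_i)$ (resp.~$\mathsf{DTW}(p_i, \lambda_i)$). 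I would therefore loop over the $k$ dimensions, calling the single-dimensional procedure above on $\automaton$ with $p_i$ and $\lambda_i$, rejecting as soon as one dimension fails. Each call runs in polynomial space; since the $k$ calls are carried out sequentially, the working space can be reused, so the entire procedure runs in polynomial space.

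The main obstacle---and the reason the decomposition is indispensable---is the exponential dependence on $k$. By Lemma~\ref{lemma:size:reduction}, the full $k$-dimensional expansion $\automaton(p, \lambda)$ has a location set of size $|L|\cdot(d^k + 1)$, exponential in $k$, so running the emptiness algorithm on it directly would not obviously remain within polynomial space. The path-wise decomposition sidesteps this blow-up precisely because, over paths, a conjunction fails if and only if some conjunct fails; this equivalence is exactly what breaks down in the game setting, where $\player_1$ must satisfy all objectives with a single strategy, and it is what confines the verification problem (unlike the realizability problem) to \textsf{PSPACE}.
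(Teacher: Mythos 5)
Your proposal is correct and follows essentially the same route as the paper: reduce the one-dimensional case to time-divergent Büchi/reachability emptiness on the polynomial-size expansion (Theorem~\ref{corollary:singledim:verification} plus Lemma~\ref{lemma:size:reduction} with $k=1$, then the \textsf{PSPACE} algorithm of~\cite{AlurD94}), and handle multiple dimensions by checking each conjunct separately, since over paths a conjunction fails iff some conjunct fails. The only cosmetic difference is that you justify the multi-dimensional bound by sequential reuse of working space, whereas the paper phrases it as membership in $\mathsf{P}^{\mathsf{PSPACE}} = \mathsf{PSPACE}$; these are equivalent.
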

\begin{proof}

Fix a TA $\automaton=(L, \ell_\init, C, \Sigma, I, E)$, a
priority function $p\colon L\to \{0, \ldots, d-1\}$ and
a bound $\lambda\in \IN\setminus\{0\}$ on window sizes. 
By Theorem~\ref{corollary:singledim:verification}, the verification
problem on $\automaton$ for the (resp.~direct) timed window objective
can be reduced to the verification problem on
$\automaton(p, \lambda)$ for the co-Büchi objective
$\cobuchi(\badset)$ (resp.~the safety objective $\safe(\badset)$).
The complement of a co-Büchi (resp.~safety) objective is a Büchi
(resp.~reachability) objective.
Therefore, there is a time-divergent path of $\automaton$ that does not satisfy the (resp.~direct) timed window objective if and
only if there is some time-divergent path in $\automaton(p, \lambda)$
that satisfies $\buchi(\badset)$ (resp.~$\reach(\badset)$).

Our algorithm for the verification problem on $\automaton$ for the
(resp.~direct) timed window objective proceeds as follows: construct
$\automaton(p, \lambda)$ and check if there is some time-divergent path
in $\automaton(p, \lambda)$ satisfying $\buchi(\badset)$
(resp.~$\reach(\badset)$) and return no if that is the case.

This algorithm is in polynomial space.  By
Lemma~\ref{lemma:size:reduction}, as we have fixed $k=1$,
$\automaton(p, \lambda)$ can be computed in time polynomial in $d$,
the sizes of $L$, $E$, $C$, the length of the encoding of the clock constraints
of $\automaton$
and the encoding of $\lambda$. In other words, the verification problem
for the (resp.~direct) timed window objective can be reduced to
checking the existence of a time-divergent path of a TA satisfying a Büchi
(resp.~reachability objective) in polynomial time.
Checking the existence of a time-divergent path satisfying a Büchi or
reachability objective can be done in polynomial space~\cite{AlurD94}.
Thus, the verification problem for the (one-dimensional)
(resp.~direct) timed window objective can be solved in polynomial space.

For the $k$-dimensional case, the previous algorithm can be used for
each individual component, to check that all dimensions satisfy
their respective objective.
Complexity-wise, this shows the multidimensional
problem belongs in $\mathsf{P}^{\mathsf{PSPACE}}$ (with an oracle in
\textsf{PSPACE} for the single-dimensional case).
Because, $\mathsf{P}^{\mathsf{PSPACE}} = \mathsf{PSPACE}$~\cite{DBLP:journals/siamcomp/BakerGS75}, this proves that the generalized (resp.~direct) timed window
verification problem is in \textsf{PSPACE}.
\end{proof}

\subsection{Lower bounds}
We have presented algorithms which share the same complexity class for one or
multiple dimensions. In this section, we establish that our bounds are tight.
It suffices to show hardness for the single-dimensional problems,
as they are subsumed by the $k$-dimensional case.

The verification and realizability problems for timed window
objectives can be shown to be at least as hard as the verification and
realizability problems for safety objectives. The safety verification problem is \textsf{PSPACE}-complete~\cite{AlurD94}.
The realizability problem for safety objectives is
\textsf{EXPTIME}-complete (this follows from \textsf{EXPTIME}-completeness
of the safety control problem~\cite{HenzingerK99}).
The same construction is used
for the verification and realizability problem. Given a timed automaton,
we expand it so as to encode in locations whether the safety objective was
violated at some point. We assign an even priority to locations that indicate
the safety objective never was violated and an odd priority to the other locations:
as long as the safety objective is not violated, windows close immediately and
as soon as it is violated, it no longer is possible to close windows.

\begin{lemma}\label{lemma:complexity:lowerbound}
  The verification (resp.~realizability) problem for the (direct)
  timed window objective is \textsf{PSPACE}-hard
  (resp.~\textsf{EXPTIME}-hard).
\end{lemma}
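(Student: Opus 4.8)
The plan is to reduce the safety verification (resp.\ realizability) problem to the verification (resp.\ realizability) problem for the (direct) timed window objective in polynomial time. Since the safety verification problem is \textsf{PSPACE}-complete~\cite{AlurD94} and safety realizability is \textsf{EXPTIME}-complete~\cite{HenzingerK99}, this yields the claimed lower bounds. Given a TA $\automaton = (L, \ell_\init, C, \Sigma, I, E)$ and a set $F \subseteq L$ defining the objective $\safe(F)$, I would build a TA $\automaton'$ whose locations are pairs $(\ell, b)$ with $b \in \{0,1\}$, where $b$ records whether a location of $F$ has been visited so far. Formally, $\automaton'$ has location set $L \times \{0,1\}$, the same clocks, invariants $I'(\ell, b) = I(\ell)$, initial location $(\ell_\init, b_\init)$ with $b_\init = 1$ iff $\ell_\init \in F$, and for every edge $(\ell, g, a, D, \ell') \in E$ and every $b$ an edge $((\ell, b), g, a, D, (\ell', \max\{b, [\ell' \in F]\}))$, so the bit is monotone and flips to $1$ exactly when $F$ is first entered. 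I take the one-dimensional priority function $p(\ell, b) = b$, so ``safe'' locations carry the even priority $0$ and ``unsafe'' locations the odd priority $1$, and I fix any bound, say $\lambda = 1$. This is computable in polynomial time, as it only doubles locations and edges, and the bit is a deterministic function of the history, giving a bijection between (initial) paths of $\automaton$ and of $\automaton'$ that preserves delays and time-divergence.

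The key step is to show that a time-divergent path $\pi$ of $\automaton'$ satisfies $\mathsf{DTW}(p, 1)$ (equivalently $\mathsf{TW}(p, 1)$) if and only if its underlying path of $\automaton$ satisfies $\safe(F)$. If $\pi$ stays safe, every location has priority $0$, so the window opened at each step closes immediately; using the delayed-suffix characterization of Lemma~\ref{lemma:discrete_twpo}, every delayed suffix lies in $\mathsf{TGW}(p, 1)$, whence $\pi \in \mathsf{DTW}(p, 1) \subseteq \mathsf{TW}(p, 1)$. Conversely, if $\pi$ violates safety at some step $m$, then from step $m$ onward every priority equals $1$; the window opened at step $m$ sees only odd priorities and never closes, and since $\pi$ is time-divergent its size eventually exceeds $\lambda = 1$, so $\pi_{m\to} \notin \mathsf{TGW}(p, 1)$. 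As the priority stays odd forever after $m$, this forces $\pi_{n\to} \notin \mathsf{DTW}(p, 1)$ for every $n$, so neither $\mathsf{DTW}(p,1)$ nor $\mathsf{TW}(p,1)$ holds. Through the bijection this means ``all time-divergent initial paths are safe'' is equivalent to ``all time-divergent initial paths satisfy the (direct) timed window objective'', completing the verification reduction and establishing \textsf{PSPACE}-hardness.

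For realizability I would apply the same construction to a TG $\game = (\automaton, \Sigma_1, \Sigma_2)$, setting $\game' = (\automaton', \Sigma_1, \Sigma_2)$; the action partition is untouched, since the added bit is controlled by neither player. The bijection on plays then preserves delays, chosen moves, time-divergence \emph{and} which player is responsible for each transition, hence also $\textsf{Blameless}_1$. The equivalence above gives $\safe(F) \cap \plays_\infty(\game) = \Psi \cap \plays_\infty(\game')$ under the bijection, where $\Psi$ is $\mathsf{DTW}(p,1)$ or $\mathsf{TW}(p,1)$; combined with identical blameless sets this yields $\wc_1(\safe(F)) = \wc_1(\Psi)$, so $\player_1$ wins for safety from the initial state iff $\player_1$ wins for the timed window objective, giving \textsf{EXPTIME}-hardness. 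I expect the main obstacle to be precisely this game argument: one must check carefully that the winning condition of~\cite{AlfaroFHMS03}, which handles time-convergent plays via blamelessness, transfers across the construction, which is exactly why it matters that the relabeling alters neither the action partition nor the responsibility for any transition.
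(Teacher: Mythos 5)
Your proposal is correct and follows essentially the same route as the paper: the identical bit-tracking construction $L\times\{0,1\}$ with priority $p(\ell,b)=b$ and window bound $1$, the same path bijection argument (safe paths close all windows immediately; once $F$ is hit, the absorbing odd-priority component makes every window bad under time-divergence), and the same transfer to games via a play/strategy bijection. Your treatment of the realizability direction is in fact slightly more explicit than the paper's, which only sketches the preservation of blamelessness and the strategy correspondence that you spell out.
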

\begin{proof}
Fix a TA $\automaton=(L, \ell_\init, C, \Sigma, I, E)$ and a set of
unsafe locations $F\subseteq L$. We construct a TA $\automaton'$ and a
priority function $p$ such that all time-divergent initial paths of $\automaton$
satisfy $\safe(F)$ if and only if all time-divergent initial paths of
$\automaton'$ satisfy $\mathsf{TW}(p, 1)$ (resp.~$\mathsf{DTW}(p, 1)$) (the
choice of 1 as the bound of the window size is arbitrary, the provided
reduction functions for any bound on the size of windows).
We encode the safety objective in a TA $\automaton'$ as
a (resp.~direct) timed window objective.

We expand locations of $\automaton$ with a Boolean
representing whether $F$ was visited.
Formally, let $\automaton'=(L', \ell_\init', C, \Sigma, I', E')$ be the TA
with $L' = L\times\{0, 1\}$, $\ell'_\init=(\ell_\init, 0)$ if
$\ell_\init\notin F$ and $\ell'_\init=(\ell_\init, 1)$ otherwise,
$I'((\ell, b))=I(\ell)$ for all $\ell\in L$ and $b\in\{0, 1\}$, and the
set of edges $E'$ is the smallest set satisfying, for each edge
$(\ell, g, a, D, \ell')\in E$: $((\ell, 0), g, a, D, (\ell', 0))\in E'$
if $\ell'\notin F$; $((\ell, 0), g, a, D, (\ell', 1))\in E'$ if
$\ell'\in F$; and $((\ell, 1), g, a, D, (\ell', 1))\in E'$.
The priority function $p$ defined over locations of $\automaton'$ is defined
by $p((\ell, b))=b$.

There is a natural bijection $f$ between the set of initial paths
of $\automaton$ and the set of initial paths of  $\automaton'$. An initial path
$\pi=(\ell_0, \nu_0)\xrightarrow{m_0}(\ell_1, \nu_1)\ldots\paths(\automaton)$
is mapped via $f$ to the initial path
$\bar{\pi}=((\ell_0, b_0), \nu_0)\xrightarrow{m_0}((\ell_1, b_1), \nu_1)\ldots$
of $\automaton'$, where the sequence $(b_n)_{n\in\IN}$ is $(0)_{n\in\IN}$ if
for all $n\in\IN$, $\ell_n\notin F$ and otherwise,
$b_n=0$ for all $n<n^\star$ and $b_n=1$ for all $n\geq n^\star$,
where $n^\star$ denotes $\min\{n\in\IN\mid \ell_n\in F\}$. This mapping
is well-defined and bijective:
the same moves are enabled in a state $(\ell, \nu)$ of
$\mathcal{T}(\automaton)$ and in the state $((\ell, b), \nu)$ of $\mathcal{T}(\automaton')$
for all
$b\in\{0, 1\}$.
Furthermore, for all $b\in\{0, 1\}$ and all moves $m$ enabled in $(\ell, \nu)$,
$(\ell, \nu)\xrightarrow{m}(\ell', \nu')$ holds if and only if there
is some $b'$ such that $((\ell, b), \nu)\xrightarrow{m}((\ell', b'), \nu')$

The bijection $f$ preserves time-divergence. Furthermore, a path $\pi$ of $\automaton$ satisfies
$\safe(F)$ if and only if $f(\pi)$ does not visit a location of the form
$(\ell, 1)$. The initial paths of $\automaton'$ that visit a location of the form
$(\ell, 1)$
are exactly those that do not satisfy the (resp.~direct) timed window objective:
once such a location is entered, it is no longer possible to close
windows as the set of locations $L\times\{1\}$ cannot be left by construction.
Therefore, there is a time-divergent path of $\automaton$
that does not satisfy $\safe(F)$ if and only if there is a time-divergent
path of $\automaton'$ that does not satisfy $\mathsf{TW}(p, 1)$
(resp.~$\mathsf{DTW}(p, 1)$). Furthermore, the TA $\automaton'$ has the same
set of clocks as $\automaton$ and twice as many locations and edges as
$\automaton$, and the overall length of the encoding of the clock constraints
of $\automaton'$ is double that of $\automaton$.
This shows that the safety verification
problem can be reduced in polynomial time to the (resp.~direct) timed window
objective. This establishes \textsf{PSPACE}-hardness of the verification
problem for (resp.~direct) timed window objectives.

The same construction can be used for the realizability problem. There
is an analogous mapping for initial plays. This mapping can be used to
establish a bijection between the restriction of strategies over initial
paths in the safety TG and the (resp.~direct) timed window TG. It follows that
the realizability problem for safety objectives can be reduced to the
realizability problem for (resp.~direct) timed window objectives in polynomial
time, establishing \textsf{EXPTIME}-hardness of the realizability
problem for (resp.~direct) timed window objectives.
\end{proof}

\subsection{Wrap-up}
We summarize our complexity results in the following theorem.
\begin{theorem}\label{theorem:complexity}
  The verification problem for the (direct) generalized timed window parity
  objective is \textsf{PSPACE}-complete and the realizability problem
  for the (direct) generalized timed window parity objective is
  \textsf{EXPTIME}-complete.
\end{theorem}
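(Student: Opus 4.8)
The plan is to assemble Theorem~\ref{theorem:complexity} directly from the upper and lower bounds established earlier in this section, recalling that a problem is complete for a complexity class exactly when it admits both a membership result and a matching hardness result for that class. First I would split the statement into its two halves --- the verification problem and the realizability problem --- and treat each by pairing its membership bound with its hardness bound.

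For the verification problem, I would invoke Lemma~\ref{lemma:complexity:automata}, which already shows that the generalized (resp.~direct) timed window verification problem is in \textsf{PSPACE}; this handles membership for an arbitrary number $k$ of dimensions. For hardness, I would appeal to Lemma~\ref{lemma:complexity:lowerbound}, which establishes \textsf{PSPACE}-hardness of the single-dimensional (resp.~direct) timed window verification problem. Since the single-dimensional objective is precisely the $k=1$ instance of the generalized objective, \textsf{PSPACE}-hardness transfers immediately to the multi-dimensional setting: any algorithm solving the general problem solves the one-dimensional one. Combining the two bounds yields \textsf{PSPACE}-completeness.

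For the realizability problem I would proceed symmetrically. Membership in \textsf{EXPTIME} for the generalized (resp.~direct) timed window realizability problem was established above via the reduction to a co-B\"uchi (resp.~safety) objective on the expansion $\game(p, \lambda)$ (Theorem~\ref{theorem:multidim:games}), together with the running-time bound of equation~\eqref{equation:complexity:window_games} obtained from the algorithm of~\cite{AlfaroFHMS03}; this already covers arbitrary $k$. For the matching lower bound, Lemma~\ref{lemma:complexity:lowerbound} gives \textsf{EXPTIME}-hardness of the single-dimensional (resp.~direct) timed window realizability problem, which again subsumes to the general case as the $k=1$ instance. Hence the realizability problem is \textsf{EXPTIME}-complete.

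Since every ingredient is already in place, there is no genuine technical obstacle at this stage: the proof is a bookkeeping assembly of the membership results (Lemma~\ref{lemma:complexity:automata} and the \textsf{EXPTIME} argument for games) with the hardness results (Lemma~\ref{lemma:complexity:lowerbound}). The only point I would state explicitly is that the hardness results are proven for a single dimension, so I would remark that they immediately imply hardness for the generalized objectives, as restricting to $k=1$ recovers exactly the single-dimensional problems.
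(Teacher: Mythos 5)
Your proposal is correct and matches the paper's own treatment exactly: the paper states Theorem~\ref{theorem:complexity} as a wrap-up that combines the membership results (Lemma~\ref{lemma:complexity:automata} for verification, and the \textsf{EXPTIME} algorithm built on Theorem~\ref{theorem:multidim:games} and equation~\eqref{equation:complexity:window_games} for realizability) with the single-dimensional hardness results of Lemma~\ref{lemma:complexity:lowerbound}, which subsume to the generalized case as the $k=1$ instance. Nothing is missing from your assembly.
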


We conclude with a comparison of TGs with parity objectives and TGs with
(resp.~direct) timed window objectives. It was shown in~\cite{ChatterjeeHP11}
that TGs with parity objectives can be reduced to turn-based parity
games on graphs. Their solution is as follows: to check if a $\player_1$-winning
strategy exists in a TG $\game=(\automaton, \Sigma_1, \Sigma_2)$ with the
objective $\mathsf{Parity}(p)$, they construct a turn-based game on a graph
with $256\cdot |S_{\mathsf{Reg}}|\cdot |C|\cdot d$ states and $d+2$ priorities,
where $S_{\mathsf{Reg}}$ denotes the set of states of the region abstraction of
$\automaton$, the size of which is bounded by
$|L|\cdot |C|!\cdot 2^{|C|}\cdot \prod_{x\in C}(2c_x+1)$.
Despite recent progress on quasi-polynomial-time algorithms for parity games~\cite{CaludeJKL017}, there are no known polynomial-time algorithms, and, in many techniques, the blow-up is due to the number of priorities. Therefore, the complexity of
checking the existence of a $\player_1$-winning strategy in the TG $\game$ for
a parity objective suffers from a blow-up related to the number of
priorities and the size of the region abstraction.

In contrast, our solution for TGs with (resp.~direct) timed window objectives
does not
suffer from a blow-up due to the number of priorities: for a single dimension,
the complexity given in \eqref{equation:complexity:window_games} is polynomial
in the size of the set of states of the region abstraction and in $\lambda$.
This shows that adding time bounds to parity objectives in TGs comes for free
complexity-wise.

\bibliography{bib}

\end{document}